\definecolor{MyDarkBlue}{rgb}{0.15,0.25,0.45}
\let\fn\footnote
\renewcommand{\footnote}[1]{\linespread{1.1}\fn{#1}\linespread{1.29}}
\makeatletter\renewcommand{\section}{\@startsection
{section}{1}{\z@}{-3.5ex plus -1ex minus
    -.2ex}{2.3ex plus .2ex}{\bf }}
\makeatletter\renewcommand{\subsection}{\@startsection{subsection}{2}{\z@}{-3.25ex
plus -1ex minus
   -.2ex}{1.5ex plus .2ex}{\bf }}
\makeatletter\renewcommand{\subsubsection}{\@startsection{subsubsection}{3}{-2.45ex}{-3.25ex
plus -1ex minus -.2ex}{1.5ex plus .2ex}{\it }}
\renewcommand{\thesection}{\arabic{section}}
\renewcommand{\thesubsection}{\arabic{section}.\arabic{subsection}}
\renewcommand{\@seccntformat}[1]{\@nameuse{the#1}.~~}
\renewcommand{\theequation}{\thesection.\arabic{equation}}
\makeatletter \@addtoreset{equation}{section}
\def\Ddots{\mathinner{\mkern1mu\raise\p@
\vbox{\kern7\p@\hbox{.}}\mkern2mu
\raise4\p@\hbox{.}\mkern2mu\raise7\p@\hbox{.}\mkern1mu}}
\newtheorem{thm}{Theorem}[section]
\renewcommand{\thethm}{\thesection.\arabic{thm}}
\newtheorem{lemma}[thm]{Lemma}
\newtheorem{definition}[thm]{Definition}
\newtheorem{theorem}[thm]{Theorem}
\newtheorem{proposition}[thm]{Proposition}
\newtheorem{corollary}[thm]{Corollary}
\newtheorem{remark}[thm]{Remark}
\newtheorem{example}[thm]{Example}
\renewcommand{\appendices}{
\section*{Appendix}\label{appendices}\setcounter{subsection}{0}
\addcontentsline{toc}{section}{Appendix}
\setcounter{equation}{0}
\makeatletter
\renewcommand{\theequation}{\Alph{subsection}.\arabic{equation}}
\renewcommand{\thesubsection}{\Alph{subsection}}
\renewcommand{\thethm}{\Alph{subsection}.\arabic{thm}}
\@addtoreset{equation}{subsection}
\@addtoreset{thm}{subsection}
\makeatother
}
\def\slasha#1{\setbox0=\hbox{$#1$}#1\hskip-\wd0\hbox to\wd0{\hss\sl/\/\hss}}
\def\periodb#1{\setbox0=\hbox{$#1$}#1\hskip-\wd0\hbox to\wd0{-}}
\newcommand{\unit}{\mathbbm{1}}   			
\newcommand{\id}{\mathrm{id}}   			
\newcommand{\CA}{\mathcal{A}}    			
\newcommand{\CC}{\mathcal{C}}
\newcommand{\CCC}{\mathscr{C}}
\newcommand{\CD}{\mathcal{D}}
\newcommand{\CCD}{\mathscr{D}}
\newcommand{\CF}{\mathcal{F}}
\newcommand{\CG}{\mathcal{G}}
\newcommand{\CH}{\mathcal{H}}
\newcommand{\CP}{\mathcal{P}}
\newcommand{\vC}{\check{C}}
\newcommand{\CS}{\mathcal{S}}
\newcommand{\CCS}{\mathscr{S}}
\newcommand{\CU}{\mathcal{U}}
\newcommand{\CX}{\mathcal{X}}
\newcommand{\CE}{\mathcal{E}}
\newcommand{\frg}{\mathfrak{g}}				
\newcommand{\fru}{\mathfrak{u}}
\newcommand{\FR}{\mathbbm{R}}     			
\newcommand{\FC}{\mathbbm{C}}     			
\newcommand{\NN}{\mathbbm{N}}     			
\newcommand{\RZ}{\mathbbm{Z}}     			
\newcommand{\dd}{\mathrm{d}}     			
\newcommand{\dpar}{\partial}     			
\newcommand{\di}{\mathrm{i}}     			
\newcommand{\eps}{{\varepsilon}}			
\newcommand{\sB}{\mathsf{B}}
\newcommand{\eand}{{\qquad\mbox{and}\qquad}}     		
\newcommand{\ewith}{{\qquad\mbox{with}\qquad}}
\newcommand{\der}[1]{\frac{\dpar}{\dpar #1}}   		
\newcommand{\dder}[1]{\frac{\dd}{\dd #1}}   		
\newcommand{\pr}{\mathsf{pr}}     			
\newcommand{\au}{\mathfrak{u}}
\newcommand{\aspin}{\mathfrak{spin}}
\newcommand{\sU}{\mathsf{U}}     			
\newcommand{\sA}{\mathsf{A}}
\newcommand{\sG}{\mathsf{G}}
\newcommand{\sL}{\mathsf{L}}
\newcommand{\sHom}{\mathsf{Hom}}
\newcommand{\sLie}{\mathsf{Lie}}
\newcommand{\sS}{\mathsf{S}}
\newcommand{\sH}{\mathsf{H}}
\newcommand{\sSU}{\mathsf{SU}}
\newcommand{\sO}{\mathsf{O}}
\newcommand{\CatMan}{\mathsf{Mfd}^\infty}
\newcommand{\sSO}{\mathsf{SO}}
\newcommand{\sSpin}{\mathsf{Spin}}
\newcommand{\sString}{\mathsf{String}}
\newcommand{\sExt}{\mathsf{Ext}\,}
\newcommand{\acton}{\vartriangleright}     			
\def\tyng(#1){\hbox{\tiny$\yng(#1)$}}			
\def\tyoung(#1){\hbox{\tiny$\young(#1)$}}			
\newcommand{\beq}{\begin{eqnarray}}
\newcommand{\eeq}{\end{eqnarray}}
\newcommand{\djunion}{\sqcup\,}
\newcommand{\sft}{{\sf t}}
\newcommand{\sfd}{{\sf d}}
\newcommand{\sff}{{\sf f}}
\newcommand{\myxymatrix}[1]{\vcenter{\vbox{\xymatrix{#1}}}}
\newcommand{\sfa}{\mathsf{a}}
\newcommand{\sfr}{\mathsf{r}}
\newcommand{\sfl}{\mathsf{l}}
\newcommand{\sfm}{\mathsf{m}}
\newcommand{\sfid}{\mathsf{id}}
\newcommand{\sfs}{\mathsf{s}}
\newcommand{\sfe}{\mathsf{e}}
\newcommand{\inv}{\mathsf{inv}}
\newcommand{\CatSet}{\mathsf{Set}}
\newcommand{\CatManCat}{\mathsf{Mfd^\infty Cat}}
\newcommand{\CatBibun}{\mathsf{Bibun}}
\begin{document}

\begin{titlepage}
\begin{flushright}
 EMPG--16--07
\end{flushright}
\vskip 2.0cm
\begin{center}
{\LARGE \bf Higher Gauge Theory with String 2-Groups}
\vskip 1.5cm
{\Large Getachew Alemu Demessie and Christian S\"amann}
\setcounter{footnote}{0}
\renewcommand{\thefootnote}{\arabic{thefootnote}}
\vskip 1cm
{\em Maxwell Institute for Mathematical Sciences\\
Department of Mathematics, Heriot--Watt University\\
Colin Maclaurin Building, Riccarton, Edinburgh EH14 4AS, U.K.}\\[0.5cm]
{Email: {\ttfamily gd132@hw.ac.uk~,~c.saemann@hw.ac.uk}}
\end{center}
\vskip 1.0cm
\begin{center}
{\bf Abstract}
\end{center}
\begin{quote}
  We give a complete and explicit description of the kinematical data of higher gauge theory on principal 2-bundles with the string 2-group model of Schom\-mer-Pries as structure 2-group. We start with a self-contained review of the weak 2-category $\CatBibun$ of Lie groupoids, bibundles and bibundle morphisms. We then construct categories internal to $\CatBibun$, which allow us to define principal 2-bundles with 2-groups internal to $\CatBibun$ as structure 2-groups. Using these, we Lie-differentiate the 2-group model of the string group and we obtain the well-known string Lie 2-algebra. Generalizing the differentiation process, we find Maurer--Cartan forms leading us to higher non-abelian Deligne cohomology, encoding the kinematical data of higher gauge theory together with their (finite) gauge symmetries. We end by discussing an example of non-abelian self-dual strings in this setting.
\end{quote}
\end{titlepage}

\tableofcontents

\section{Introduction and results}

Higher gauge theory~\cite{Baez:2002jn,Baez:2004in} is an extension of gauge theory which allows for a consistent and non-abelian parallel transport of extended objects, avoiding various na\"ive no-go theorems~\cite{Baez:2010ya}. It is particularly interesting in the context of string theory, as it may be a good starting point for developing a description of M5-branes, see e.g.~\cite{Fiorenza:2012tb}. 

One of the most important open problems in this context is the lack of solutions to higher gauge equations which are truly non-abelian. More specifically, no higher principal bundle with connection is known that is not gauge equivalent to a trivially embedded abelian gerbe with connection. This is particularly unfortunate because knowing such a solution would lead to immediate progress in higher gauge theory, both on the mathematical and the physical side.

Obvious solutions to look for are higher gauge theoretic versions of monopoles and instantons. Indeed, higher twistor descriptions of potential such solutions have been successfully developed~\cite{Saemann:2012uq,Saemann:2013pca,Jurco:2014mva}, but these have not led to new solutions so far. Candidates for non-abelian self-dual string solutions within higher gauge theory were constructed in~\cite{Palmer:2013haa}, but these have the disadvantage that they either do not satisfy the so-called fake-curvature condition\footnote{This condition guarantees that the parallel transport of extended objects is invariant under reparameterizations. One might, however, argue that for the simplest self-dual strings, this condition becomes irrelevant.} or partially break the original gauge symmetry of the higher principal bundle.

It is therefore important to consider generalizations of the current formulations of higher gauge theory which do allow for interesting solutions. One such generalization has been proposed in~\cite{Ritter:2015zur}, where spacetime was replaced by a categorified space. Here, we develop higher gauge theory with smooth 2-groups, which are 2-groups internal to the weak 2-category $\CatBibun$ of Lie groupoids, bibundles and bibundle morphisms. We focus our attention in particular on the smooth 2-group model of the string group given by Schommer--Pries~\cite{Schommer-Pries:0911.2483}. 

This 2-group model of the string group is interesting for a number of reasons. First, recall that the most relevant examples of non-abelian monopoles on $\FR^3$ and instantons on $\FR^4$ form connections on principal bundles with structure group $\sSU(2)$, where this gauge group is intrinsically linked to the spin groups $\sSpin(3)\cong \sSU(2)$ and $\sSpin(4)\cong \sSU(2)\times \sSU(2)$ of the isotropy groups $\sSO(3)$ and $\sSO(4)$ of the underlying spacetimes. Correspondingly, one might expect the higher version of the spin group, the string group, to be relevant in the description of higher monopoles and instantons. Other evidence originating from an analysis of the topological part of the M5-brane world-volume action~\cite{Fiorenza:2012tb} suggest that the string group of $E_8$ might be the appropriate choice. There is further ample motivation from both physics and mathematics for being interested in higher gauge theory with the string group, stemming from the connection to 2-dimensional supersymmetric sigma-models and elliptic cohomology. For a more detailed account, see~\cite{Schommer-Pries:0911.2483}.

Our goal in this paper is thus to describe explicitly the kinematical data of higher gauge theory with the smooth string 2-group model as gauge symmetry structure. In particular, we will need to develop the appropriate notion of principal bundle, connection and corresponding gauge transformations. We intend to use our results as a starting point for finding higher monopole and instanton solutions in future work.

Principal bundles with smooth structure 2-group can be defined in (at least) two ways. First, we can regard them as certain smooth stacks over the base manifold, as done in~\cite{Schommer-Pries:0911.2483} and we review and explain this definition in our paper. Second, we can give a description in terms of generalized cocycles with values in the string 2-group. This requires us to introduce the notion of a category internal to the weak 2-category $\CatBibun$ together with weak internal functors. The resulting internal category trivially contains ordinary categories internal to the category $\CatMan$ of smooth manifolds. We can then define a principal 2-bundle as a weak functor from the \v Cech groupoid of the relevant cover of the base manifold to the delooping of the smooth 2-group. Both approaches are equivalent, but we will mostly use the latter one as it leads to a convenient description of gauge theory.

As shown in~\cite{Zhu:1204.5583}, the notion of smooth 2-group is in fact equivalent to a Lie quasigroupoid, or $(2,0)$-category internal to $\CatMan$, with a single object, which is defined in terms of Kan simplicial manifolds. As far as we are aware, this is the most general reasonable notion of Lie 2-group available in the literature today. In particular, smooth 2-groups contain ordinary groups as well as strict 2-groups. Our notion of principal 2-bundle is therefore very comprehensive and as we show in some detail, special cases include ordinary principal bundles as well as principal bundles with strict structure 2-groups. 

Introducing a connection on these principal 2-bundles is more work, as it also involves the Lie 2-algebra of the underlying smooth 2-group. To simplify our computations, we lift the string 2-group model to a weak 2-group model by introducing preferred horn fillers in the underlying Kan simplicial manifold.

Having defined principal smooth 2-group-bundles, we can readily use an approach by \v Severa~\cite{Severa:2006aa} for this purpose. Here, the higher Lie algebra of a higher Lie group arises from the moduli space of functors from the category of manifolds $\CatMan$ to descent data for principal bundles with the higher Lie group as structure group over surjective submersions $N\times \FR^{0|1}\twoheadrightarrow N$, $N\in \CatMan$. Following this approach, we successfully differentiate the smooth 2-group model for the string group and the resulting Lie 2-algebra is indeed the well-known string Lie 2-algebra.

Given a Lie 2-algebra, we can immediately derive the local description of higher gauge theory with the string 2-group together with infinitesimal gauge transformations from appropriate homotopy Maurer--Cartan equations on some $L_\infty$-algebra and their infinitesimal symmetries. 

To glue together these local connection forms to global objects, however, we also need the explicit form of finite gauge transformations. These can be obtained by extending \v Severa's differentiation approach. Coboundaries between the descent data for equivalent principal bundles induce equivalence relations on the moduli space of functors, which directly translate into finite gauge transformations of the connection forms. From these we can glean a full description of the kinematical data of higher gauge theory with the string 2-group. Put in mathematical terms, we obtained a very explicit description of the second Deligne cohomology group with values in the smooth string 2-group model.

As an application, we discuss examples of solutions to the non-abelian self-dual string equations. Due to the form of the string Lie 2-algebra, these solutions still reduce to the well-known abelian ones, if the fake curvature condition is imposed. A more comprehensive study of self-dual string solutions using smooth 2-groups is postponed to future work.

In this paper, we have tried to be rather self-contained in our presentation to facilitate access to concepts and methods that might not be very well-known as of now, such as the weak 2-category $\CatBibun$, Segal--Mitchison group cohomology and the extension of \v Severa's differentiation process leading to gauge potentials and their finite gauge theories.

Finally, a remark on our notation. As in~\cite{Schommer-Pries:0911.2483}, we work with left-principal bibundles which encode morphisms from a Lie groupoid $\CH$ to some Lie groupoid $\CG$ and for which the (left-) action of the morphisms of $\CG$ onto the bibundles is principal. In general, we try to use a consistent right-to-left notation, but we still write $B:\CH\rightarrow \CG$ for a bibundle from $\CH$ to $\CG$ as well as $B\in \CatBibun(\CH,\CG)$.

\section{The weak 2-category \texorpdfstring{$\CatBibun$}{Bibun}} \label{Section_Bibun}

\subsection{Bibundles as morphisms between Lie groupoids}\label{ssec:smooth-2-groups}

To discuss gauge theories, we will have to describe various group actions on fields. Such actions are most naturally captured in the language of groupoids. 
\begin{definition}
 A \uline{groupoid} is a small category, in which every morphism is invertible.
\end{definition}
\noindent The idea here is that the objects of a groupoid describe a set that the morphisms act on. A prominent example is the groupoid arising from the action of a group $\sG$ on a set $X$. The action groupoid $X//\sG$ has objects $X$ and morphisms $X\times \sG$. We define source and target maps on the morphisms as $\sfs(x,g)=x$ and $\sft(x,g)=g\acton x$; the identities are given by $\id_x=(x,\unit_\sG)$. Composition is defined for pairs $(x,g)$ and $(\tilde x,\tilde g)$, if $\sft(x,g)=\tilde x$ and we then have $(x,g)\circ (\tilde x,\tilde g)=(x,g\tilde g)$. 

If we are merely interested in the group $\sG$ itself, we can consider the case where $X$ is the one-element set $X=*$ on which $\sG$ acts trivially. This yields the so-called delooping $\sB\sG=(\sG\rightrightarrows*)$ of a group $\sG$, with the elements of $\sG$ forming the morphisms. Composition of morphisms is here the group multiplication and the embedding of $*$ in the morphisms yields the unit in~$\sG$.

To define groupoids with more structure, we use the concept of {\em internalization}. Essentially, the objects and morphisms of a category internal to a category $\CCC$ are objects of $\CCC$, while the structure maps consisting of source, target and composition are morphisms of $\CCC$. In particular, we can consider groupoids internal to $\CatMan$, the category of smooth manifolds and smooth morphisms between them.
\begin{definition}
 A \uline{Lie groupoid} is a groupoid internal to $\CatMan$. 
\end{definition}
\noindent That is, the objects $\CG_0$ and morphisms $\CG_1$ of a Lie groupoid $\CG$ are smooth manifolds and the structure maps $\sfs,\sft,\circ,\id$ are all smooth. Since $\CatMan$ does not have all pullbacks, we also have to demand that $\sfs$ and $\sft$ are (surjective) submersions. Otherwise, the domain of the composition morphism, $\CG_1\times^{\sft,\sfs}_{\CG_0}\CG_1$, might not be a manifold. A ubiquitous example of a Lie groupoid is the delooping $\sB\sG=(\sG\rightrightarrows *)$ of a Lie group $\sG$. 

Lie groupoids and the functors internal to $\CatMan$ between these form the category of Lie groupoids. We will be interested in an extension of this category to a weak 2-category, in which the morphisms between groupoids are generalized to bibundles.
\begin{definition}
 A \uline{(left-) principal bibundle} from a Lie groupoid $\CH=(\CH_1\rightrightarrows \CH_0)$ to a Lie groupoid $\CG=(\CG_1\rightrightarrows \CG_0)$ is a smooth manifold $B$ together with a smooth map $\tau:B\rightarrow \CG_0$ and a surjective submersion $\sigma:B\twoheadrightarrow\CH_0$ 
\begin{equation}
  \xymatrixcolsep{2pc}
  \xymatrixrowsep{2pc}
  \myxymatrix{ \CG_1\ar@<-.5ex>[dr] \ar@<.5ex>[dr] & & \ar@{->}[dl]_{\tau} B\ar@{->>}[dr]^{\sigma} & & \ar@<-.5ex>[dl] \ar@<.5ex>[dl]\CH_1\\
  & \CG_0 &  & \CH_0
    }
\end{equation}
 Moreover, there are left- and right-action maps 
 \begin{equation}
  \CG_1{}\times^{\sfs,\tau}_{\CG_0} B\rightarrow B\eand B{}\times^{\sigma,\sft}_{\CH_0} \CH_1\rightarrow B~,
 \end{equation}
 which satisfy the following compatibility relations
 \begin{itemize}
  \item[(i)] $g_1(g_2b)=(g_1g_2)b$ for all $(g_1,g_2,b)\in \CG_1\times^{\sfs,\sft}_{\CG_0}\CG_1\times^{\sfs,\tau}_{\CG_0}B$;
  \item[(ii)] $(bh_1)h_2=b(h_1h_2)$ for all $(b,h_1,h_2)\in B\times^{\sigma,\sft}_{\CH_0}\CH_1\times^{\sfs,\sft}_{\CH_0}\CH_1$;
  \item[(iii)] $b\;\id_\CH(\sigma(b))=b$ and $\id_\CG(\tau(b))\;b=b$ for all $b\in B$;
  \item[(iv)] $g(bh)=(gb)h$ for all $(g,b,h)\in \CG_1\times^{\sfs,\tau}_{\CG_0} B\times^{\sigma,\sft}_{\CH_0} \CH_1$;
  \item[(v)] The map $\CG_1\times^{\sfs,\tau}_{\CG_0}B\rightarrow B\times_{\CH_0}B~:~(g,b)\mapsto (gb,b)$ is an isomorphism (and thus the $\CG_1$-action is transitive).
 \end{itemize}
\end{definition}
\noindent Analogously, one defines right-principal bibundles. All bibundles in this paper will be left-principal bibundles and we will always clearly mark the surjections $\sigma$ in our diagrams by a two-headed arrow $\twoheadrightarrow$.

The generalized maps between Lie groupoids encoded in (equivalence classes of) bibundles are also called {\em Hilsum--Skandalis morphisms}~\cite{MR925720}. The maps $\sigma$ and $\tau$ should be regarded as source and target maps and the morphisms between $h_0\in \CH_0$ and $g_0\in \CG_0$ are given by elements $b\in B$ with $\sigma(b)=h_0$ and $\tau(b)=g_0$. The morphisms between morphisms $h_1\in \CH_1$ and $g_1\in \CG_1$ are then given by the principal left-action. We will return to this point shortly.

Bibundles contain ordinary functors between Lie groupoids as follows.
\begin{definition}
 Consider a morphism of Lie groupoids $\phi=(\phi_0,\phi_1)$ between Lie groupoids $\CH$ and $\CG$, $\phi_{0,1}:\CH_{0,1}\rightarrow \CG_{0,1}$. The \uline{bundlization} $\hat\phi$ of $\phi$ is the bibundle
\begin{equation}
  \xymatrixcolsep{2pc}
  \xymatrixrowsep{2pc}
  \myxymatrix{ \CG_1\ar@<-.5ex>[dr] \ar@<.5ex>[dr] & & \ar@{->}[dl]_{\sft} \CH_0\times^{\phi_0,\sfs}_{\CG_0} \CG_1\ar@{->>}[dr]^{\pi} & & \ar@<-.5ex>[dl] \ar@<.5ex>[dl]\CH_1\\
  & \CG_0 &  & \CH_0
    }
\end{equation}
where $\sft$ is the target map in $\CG$ and $\pi$ is the obvious projection. The actions of $\CG_1$ and $\CH_1$ on $\hat\phi$ are given by
\begin{equation}
 g'(x,g):=(x,g'\circ g)\eand (x,g)h:=(\sfs(h),g\circ \phi_1(h))
\end{equation}
for $g,g'\in \CG_1$, $h\in \CH_1$ and $(x,g)\in \hat\phi$.
\end{definition}
There is now a nice characterization of bibundles arising from bundlization, cf.\ e.g.~\cite{Mrcun:1996aa,Lerman:0806.4160}:
\begin{proposition}\label{prop:bundlization}
 Given a bibundle $B:\CH\rightarrow \CG$ between Lie groupoids $\CG$ and $\CH$, the map $\sigma:B\twoheadrightarrow \CH_0$ admits a smooth (global) section if and only if $B$ is isomorphic to a bundlization.
\end{proposition}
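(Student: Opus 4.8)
The plan is to establish the two implications separately, with the forward direction being a one-line observation and the converse carrying all the content. If $B$ is isomorphic to a bundlization $\hat\phi$, note that $\hat\phi$ always carries the canonical smooth section $x\mapsto (x,\id_\CG(\phi_0(x)))$ of its surjection $\pi$, since $\sfs(\id_\CG(\phi_0(x)))=\phi_0(x)$ places this point in the fibre product $\CH_0\times^{\phi_0,\sfs}_{\CG_0}\CG_1$. A bibundle isomorphism commutes with the defining surjections, so it transports this section to a smooth section of $\sigma$. Hence only the converse requires work.

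So suppose $\sigma$ admits a smooth section $\iota\colon \CH_0\to B$ with $\sigma\circ\iota=\id_{\CH_0}$. I would first read off a candidate functor $\phi=(\phi_0,\phi_1)\colon\CH\to\CG$ from $\iota$. Set $\phi_0:=\tau\circ\iota$. For a morphism $h\in\CH_1$ the two points $\iota(\sft(h))\,h$ and $\iota(\sfs(h))$ both lie in the fibre $\sigma^{-1}(\sfs(h))$, so by the principality axiom (v) there is a \emph{unique} $\phi_1(h)\in\CG_1$ with
\[
 \phi_1(h)\,\iota(\sfs(h))\;=\;\iota(\sft(h))\,h~.
\]
Because the map in (v) is a diffeomorphism, its inverse furnishes a smooth division map $B\times_{\CH_0}B\to\CG_1$; composing this with $\iota$, $\sfs$, $\sft$ and the right action shows that both $\phi_0$ and $\phi_1$ are smooth.

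The bulk of the argument is to verify that $\phi$ is genuinely a functor of Lie groupoids and that $B$ is then isomorphic to $\hat\phi$. Compatibility with source and target follows from the moment-map invariances $\sigma(gb)=\sigma(b)$ (built into (v)) and $\tau(bh)=\tau(b)$; preservation of identities uses axiom (iii); and preservation of composition is the central computation, obtained by feeding $\iota(\sft(h_1))\,(h_1 h_2)$ through the associativity axioms (i), (ii) and (iv) and then cancelling $\iota(\sfs(h_2))$ on the right using the freeness half of (v). Finally I would define the candidate isomorphism $\Psi\colon B\to\hat\phi$ by $b\mapsto(\sigma(b),g_b)$, where $g_b\in\CG_1$ is the unique element with $g_b\,\iota(\sigma(b))=b$, with evident smooth inverse $(x,g)\mapsto g\,\iota(x)$, and then check that $\Psi$ intertwines $\sigma$, $\tau$ and both group actions.

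The main obstacle is precisely the right-$\CH_1$-equivariance of $\Psi$: it is exactly what forces the definition of $\phi_1$ given above, and unwinding the identity $\Psi(bh)=\Psi(b)\,h$ cleanly requires the repeated use of the action-associativity axioms together with the freeness half of (v) to cancel the section $\iota$. Once this is in place, bijectivity and smoothness of $\Psi$ and $\Psi^{-1}$ are routine, and the two computations (composition of $\phi_1$ and equivariance of $\Psi$) are in fact the same cancellation argument carried out in slightly different orders, so organizing them to avoid duplication is the only real subtlety.
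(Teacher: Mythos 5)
Your proof is correct and follows essentially the same route as the paper's: the canonical section $h_0\mapsto(h_0,\id_{\phi_0(h_0)})$ for the forward direction, and for the converse the definitions $\phi_0=\tau\circ\iota$ and $\phi_1(h)\,\iota(\sfs(h))=\iota(\sft(h))\,h$ using principality of the left action. You are somewhat more explicit than the paper in writing out the isomorphism $B\cong\hat\phi$ and the smoothness of the division map from axiom (v), which the paper leaves implicit, but the underlying argument is the same.
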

\begin{proof}
 Assume that $\hat \phi$ is a bundlization of a functor $\phi$. Then a section $\gamma:\CH_0\rightarrow B$ of $\sigma:B\twoheadrightarrow\CH_0$ is given by $\gamma(h_0)=(h_0,\id_{\phi_0(h_0)})$. Conversely, given a section $\gamma$, we define a functor $\phi=(\phi_0,\phi_1)$ by putting $\phi_0(h_0)=\tau(\gamma(h_0))$. The map on morphisms $\phi_1:\CH_1\rightarrow \CG_1$ is defined via its left-action
 \begin{equation}\label{eq:def-phi1}
  \phi_1(h_1)\gamma(\sfs(h_1)):=\gamma(\sft(h_1))h_1
 \end{equation}
 for all $h_1\in \CH_1$. Because this action is principal, this fixes $\phi(h_1)$ uniquely. Note that $\sfs(\phi_1(h_1))=\phi_0(\sfs(h_1))$ because the left-action is a map $\CG_1{}\times^{\sfs,\tau}_{\CG_0} B\rightarrow B$. Similarly, we have $\phi_1(\id_{h_0})=\id_{\phi_0(h_0)}$ due to axiom (iii) in the definition of bibundles and $\sft(\phi_1(h_1))=\sft(\phi_1(h_1)\gamma(\sfs(h_1)))=\tau(\gamma(\sft(h_1))h_1)=\tau(\gamma(\sft(h_1)))=\phi_0(\sft(h_1))$. Composition is by definition~\eqref{eq:def-phi1} compatible with the resulting functor $\phi$.
\end{proof}
\noindent Note that in the proof above, the construction of a section from a functor and that of a functor from a section are inverses of each other. In particular, if one starts from a section $\gamma$ from a functor $\phi$, the reconstruction of a functor from the bundlization $\hat\phi$ and the section $\gamma$ returns the original functor $\phi$.

\

Let us now list a few instructive examples of bibundles. We evidently have the identity bibundle from a Lie groupoid $\CH$ to itself,
\begin{equation}
  \xymatrixcolsep{2pc}
  \xymatrixrowsep{2pc}
  \myxymatrix{ \CH_1\ar@<-.5ex>[dr] \ar@<.5ex>[dr] & & \ar@{->}[dl]_{\sft} \CH_1\ar@{->>}[dr]^{\sfs} & & \ar@<-.5ex>[dl] \ar@<.5ex>[dl]\CH_1\\
  & \CH_0 &  & \CH_0
    }
\end{equation}
which is the bundlization of the identity functor of Lie groupoids. Moreover, bibundles include smooth maps between manifolds and Lie group homomorphisms via bundlization. Inversely, a bibundle between discrete\footnote{By discrete, we shall always mean categorically discrete, i.e.\ no morphisms beyond the identities, and not topologically discrete.} Lie groupoids $X\rightrightarrows X$ and $Y\rightrightarrows Y$ reduces to a morphism $X\rightarrow Y$, as condition $(v)$ in the definition implies that the total space of the bibundle is $X$. That is, bibundles between discrete Lie groupoids arise from a bundlization of smooth maps between manifolds. Similarly, bibundles between Lie groupoids $\CH=(\sH\rightrightarrows*)$ and $\CG=(\sG\rightrightarrows *)$ for Lie groups $\sH$ and $\sG$ arise from a bundlization of a smooth functor corresponding to a group homomorphism.  

Another non-trivial example is a principal $\sG$-bundle over a manifold $X$ where $\sG$ is an ordinary Lie group, which can be regarded as a bibundle between the Lie groupoids $X\rightrightarrows X$ and $\sG\rightrightarrows *$. For a very detailed review on Lie groupoid bibundles, see also~\cite{Blohmann:2007ez}.

\begin{definition}
 A \uline{bibundle map} between bibundles $B$ and $B'$ between Lie groupoids $\CH$ and $\CG$ with structure maps $(\sigma,\tau)$ and $(\sigma',\tau')$ is a map $\phi:B\rightarrow B'$, which is biequivariant. That is, $\sigma'\circ \phi=\sigma$, $\tau'\circ \phi=\tau$, and $\phi$ commutes with the $\CH$ and $\CG$ actions.
\end{definition}

Bibundles between Lie groupoids $\CH$ and $\CG$ together with bibundle maps form the category $\CatBibun(\CH,\CG)$. Note that we can also compose bibundles using the notion of coequalizer\footnote{see appendix~\ref{app:B}}. Given two bibundles $B:\CH\rightarrow \CG$ and $B':\CE\rightarrow \CH$, we have the coequalizer
\begin{equation}
 B\times^{\sigma,\sft}_{\CH_0}\CH_1\times^{\sfs,\tau}_{\CH_0}B'\rightrightarrows B\times^{\sigma,\tau}_{\CH_0}B'\rightarrow B\otimes B'~,
\end{equation}
where the maps denoted by the double arrow are the left- and right-actions of $\CH_1$ on $B'$ and $B$, respectively. The coequalizer is therefore the bibundle given by the quotient by the diagonal action, 
\begin{equation}
B\otimes B'=(B\times_{\CH_0}B')/\CH_1~,
\end{equation}
where $\CH_1$ acts on $B\times_{\CH_0}B'$ as $h:(b,b')\mapsto(bh,h^{-1}b')$. This composition is associative only up to a natural isomorphism of bibundles. For more details, see~\cite{Mrcun::235-253,Blohmann:2007ez}. 

With this composition, the categories $\CatBibun(\CH,\CG)$ extend to a weak 2-category.
\begin{proposition}
 There is a weak 2-category consisting of Lie groupoids as objects, bibundles as morphisms and bibundle maps as 2-morphisms. We denote this weak 2-category by $\CatBibun$.
\end{proposition}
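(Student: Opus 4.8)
The plan is to verify the axioms of a bicategory directly from the data assembled in this section. We must provide the hom-categories $\CatBibun(\CH,\CG)$, a horizontal composition bifunctor together with identity $1$-morphisms, an invertible associator and invertible left/right unitors, and finally the pentagon and triangle coherence identities. The hom-categories are already at hand: for fixed $\CH,\CG$ the bibundles and biequivariant bibundle maps form a category, with vertical composition and units inherited from $\CatMan$, so the associativity and unitality of vertical composition are automatic.

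The substance lies in promoting $\otimes$ to a bifunctor $\CatBibun(\CH,\CG)\times\CatBibun(\CE,\CH)\to\CatBibun(\CE,\CG)$. The essential analytic point is that the coequalizer $B\otimes B'=(B\times_{\CH_0}B')/\CH_1$ is again a smooth manifold and a left-principal bibundle. I would establish this by showing the diagonal action $h:(b,b')\mapsto(bh,h^{-1}b')$ is free and proper: freeness follows because the left $\CH_1$-action on $B'$ is principal by axiom (v) applied to $B'$, while properness follows from $\sigma'$ being a surjective submersion together with the diffeomorphism asserted in axiom (v); hence the quotient is a manifold. The residual structure maps $\tau\circ\pr_B$ and $\sigma'\circ\pr_{B'}$ and the residual $\CG_1$- and $\CE_1$-actions descend through the quotient, and one checks they again satisfy axioms (i)--(v). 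Functoriality of $\otimes$ is then the statement that a pair of bibundle maps descends through the quotient by biequivariance while respecting vertical composition and identities. The identity $1$-morphism on $\CH$ is the trivial bibundle $\CH_1$ with structure maps $(\sft,\sfs)$, already identified above as the bundlization of the identity functor.

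Next I would exhibit the coherence $2$-isomorphisms. For composable $B_1:\CH\to\CG$, $B_2:\CK\to\CH$, $B_3:\CE\to\CK$ the associator $(B_1\otimes B_2)\otimes B_3\xrightarrow{\sim}B_1\otimes(B_2\otimes B_3)$ is the canonical identification of the two iterated quotients of the triple fibered product $B_1\times_{\CH_0}B_2\times_{\CK_0}B_3$ by $\CH_1\times\CK_1$. The unitors $\mathrm{Id}\otimes B\xrightarrow{\sim}B$ and $B\otimes\mathrm{Id}\xrightarrow{\sim}B$ are induced by the action maps, whose invertibility is axiom (v) for the left unitor and axiom (iii) for the right unitor. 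Each map is manifestly biequivariant, hence a genuine $2$-isomorphism, and naturality in the $2$-morphisms is immediate from the descent construction.

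The main obstacle is twofold. Analytically, one must ensure that the coequalizer defining $\otimes$ lands in $\CatMan$ at all; this is precisely where the surjective-submersion hypotheses and principality are indispensable, since without them $\CatBibun$ would lack a composition. Structurally, the remaining labour is the pentagon and triangle identities, but these require no genuinely new ideas: because the associator and unitors are all induced by universal properties of quotients, both coherence diagrams commute since their two composite $2$-isomorphisms are the unique canonical maps between one and the same iterated quotient. I would phrase this cleanly by identifying every bracketing of a multiple composite with a common balanced quotient of the corresponding iterated fibered product, and then checking that each coherence cell reduces to the identity on this object.
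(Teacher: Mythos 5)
Your argument is correct and follows exactly the standard construction that the paper itself relies on: it states this proposition without proof, deferring to the cited literature, after having already set up the coequalizer composition $B\otimes B'=(B\times_{\CH_0}B')/\CH_1$ that you verify lands in $\CatMan$ and satisfies the coherence axioms. Your sketch (smoothness of the quotient via freeness and properness from left-principality, unitors and associator induced by the universal property of the quotient, pentagon and triangle reducing to identities on a common balanced quotient) is precisely the argument of those references, so there is nothing to add.
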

Note that the strict 2-category consisting of Lie groupoids, smooth functors and natural transformations, which is a subcategory of $\CatManCat$, the strict 2-category of categories, functors and natural transformations internal to $\CatMan$, is also a subcategory of $\CatBibun$. The embedding of the objects is trivial and that of smooth functors is given by bundlization. Maps $\hat\alpha$ between bibundles $\hat \phi$ and $\hat \psi$ between Lie groupoids $\CH$ and $\CG$, 
\begin{equation}
  \xymatrixcolsep{2pc}
  \xymatrixrowsep{2pc}
  \myxymatrix{ 
  & & \hat\phi=\CH_0\times^{\phi_0,\sfs}_{\CG_0}\CG_1\ar@{=>}[dd]^{\hat\alpha} \ar@{->}[dl]_{\sft} \ar@{->>}[dr]^{\pi}& & \\
  \CG_1\ar@<-.5ex>[r] \ar@<.5ex>[r] & \CG_0 &  & \CH_0 & \CH_1\ar@<-.5ex>[l] \ar@<.5ex>[l]  \\
  & & \hat\psi=\CH_0\times^{\psi_0,\sfs}_{\CG_0}\CG_1 \ar@{->}[ul]_{\sft} \ar@{->>}[ur]^{\pi}& &
    }
\end{equation}
 are compatible with the right actions involving $\phi_1$ and $\psi_1$. Therefore, they have to be of the form $\hat\alpha:(h,g)\mapsto(h',g'):=(h,g \alpha(h))$, where $\alpha(h)\in \CG_1$ encodes a natural transformation $\alpha$ between $\phi$ and $\psi$. This directly implies the following, cf.\ e.g.~\cite{Li:2014}:
\begin{proposition}\label{prop:natural_trafos_1_1}
 Bibundle morphisms between bundlizations $\hat\phi$ and $\hat\psi$ are in one-to-one correspondence with natural transformations $\phi\Rightarrow \psi$.
\end{proposition}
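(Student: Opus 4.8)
The plan is to exhibit explicit, mutually inverse constructions between bibundle morphisms $\hat\alpha:\hat\phi\to\hat\psi$ and natural transformations $\alpha:\phi\Rightarrow\psi$, reading off the component map $\alpha:\CH_0\to\CG_1$ from the value of $\hat\alpha$ on the canonical section of the bundlization. Recall from Proposition \ref{prop:bundlization} that $\sigma:\hat\phi\twoheadrightarrow\CH_0$ admits the global section $\gamma_\phi(h)=(h,\id_{\phi_0(h)})$, and similarly $\gamma_\psi$ for $\hat\psi$. Given $\hat\alpha$, I would define $\alpha:=\pr_{\CG_1}\circ\,\hat\alpha\circ\gamma_\phi:\CH_0\to\CG_1$, which is automatically smooth as a composite of smooth maps. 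Equivariance under $\sigma$ forces $\hat\alpha(\gamma_\phi(h))$ to lie over $h$, hence $\hat\alpha(h,\id_{\phi_0(h)})=(h,\alpha(h))$, while $\tau$-equivariance gives $\sft(\alpha(h))=\sft(\id_{\phi_0(h)})=\phi_0(h)$ and membership in $\hat\psi$ gives $\sfs(\alpha(h))=\psi_0(h)$; thus $\alpha(h):\psi_0(h)\to\phi_0(h)$ carries exactly the source and target data required of a natural-transformation component in the right-to-left convention of the paper.

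Next I would use left $\CG_1$-equivariance together with the identity $(h,g)=g\cdot\gamma_\phi(h)$, valid precisely because $\sfs(g)=\phi_0(h)$ on $\hat\phi$, to pin $\hat\alpha$ down completely: $\hat\alpha(h,g)=g\cdot\hat\alpha(\gamma_\phi(h))=(h,g\alpha(h))$. This recovers the formula announced before the proposition and shows that $\hat\alpha$ is determined by $\alpha$. To extract the naturality square I would then impose right $\CH_1$-equivariance: evaluating $\hat\alpha$ on $(h,g)h_1=(\sfs(h_1),g\phi_1(h_1))$ in two ways and comparing the $\CG_1$-components yields $(g\circ\phi_1(h_1))\circ\alpha(\sfs(h_1))=(g\circ\alpha(h))\circ\psi_1(h_1)$, and after cancelling the invertible morphism $g$ on the left this becomes $\phi_1(h_1)\,\alpha(\sfs(h_1))=\alpha(\sft(h_1))\,\psi_1(h_1)$. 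This cancellation, which is the one place where the \emph{groupoid} (rather than merely category) structure of $\CG$ is genuinely used, is the load-bearing step, and the resulting equation is precisely the naturality condition for $\alpha$.

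Conversely, starting from a natural transformation $\alpha$, I would \emph{define} $\hat\alpha(h,g):=(h,g\alpha(h))$. Here the source/target bookkeeping of the first paragraph guarantees that $g\alpha(h)$ is a composable pair landing in $\hat\psi$, and smoothness follows since $(h,g)\mapsto(h,g,\alpha(h))$ composed with the smooth groupoid multiplication is smooth. I would then verify biequivariance directly: compatibility with $\sigma$ and $\tau$ is immediate, left-equivariance is associativity of the $\CG_1$-action, and right-equivariance reduces to the naturality square for $\alpha$ run backwards.

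Finally I would check that the two assignments are mutually inverse. Passing from $\hat\alpha$ to $\alpha$ and back reproduces the formula $\hat\alpha(h,g)=(h,g\alpha(h))$ established above, while passing from $\alpha$ to $\hat\alpha$ and back gives $\pr_{\CG_1}(\hat\alpha(h,\id_{\phi_0(h)}))=\id_{\phi_0(h)}\alpha(h)=\alpha(h)$. I expect no serious obstacle beyond careful tracking of the right-to-left composition convention and the source/target constraints; the only conceptual point is the use of invertibility of $g$ in the cancellation step, and the only technical point is smoothness, which the canonical section of Proposition \ref{prop:bundlization} handles cleanly in one direction and the smoothness of groupoid multiplication handles in the other.
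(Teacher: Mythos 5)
Your proof is correct and follows essentially the same route as the paper, which only sketches the argument in the two sentences preceding the proposition: biequivariance forces $\hat\alpha(h,g)=(h,g\,\alpha(h))$, and right-equivariance with respect to $\phi_1$ and $\psi_1$ yields the naturality square. Your use of the canonical section $\gamma_\phi(h)=(h,\id_{\phi_0(h)})$ from Proposition \ref{prop:bundlization} to pin down $\alpha$ and establish the bijection is exactly the intended elaboration, and your source/target bookkeeping matches the paper's right-to-left conventions.
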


We now come to the definition of equivalent Lie groupoids via weak 1-isomorphisms in $\CatBibun$.
\begin{definition}
 A \uline{bibundle equivalence} is a bibundle $B\in \CatBibun(\CH,\CG)$, which also defines a bibundle $B^{-1}\in \CatBibun(\CG,\CH)$ by reversing the roles of $\sigma$ and $\tau$. Two Lie groupoids $\CG$ and $\CH$ are \uline{equivalent}, if there is a bibundle equivalence between them.
\end{definition}
Note that the weak 2-category $\CatBibun$ can be regarded as the 2-category of ``stacky manifolds.'' In particular, Lie groupoids are presentations of smooth stacks. In this context, bibundle equivalence amounts to Morita equivalence.

As an example, consider the action groupoid $\CG=\sG\ltimes \sG\rightrightarrows \sG$ for some Lie group $\sG$. We shall see soon that this groupoid can be regarded as a Lie 2-group and corresponds to the crossed module $\sG\xrightarrow{~\sft~}\sG$. This action groupoid is Morita equivalent to the trivial Lie groupoid $*\rightrightarrows *$ and the bibundle equivalence reads as
\begin{equation}
  \xymatrixcolsep{2pc}
  \xymatrixrowsep{2pc}
  \myxymatrix{ \sG\times \sG \ar@<-.5ex>[dr] \ar@<.5ex>[dr] & & \ar@{->>}[dl]_{\tau} \sG\ar@{->>}[dr]^{\sigma} & & \ar@<-.5ex>[dl] \ar@<.5ex>[dl] {*}\\
  & \sG &  & {*}
    }
\end{equation}
where $\tau$ is the identity and $\sigma$ is trivial. For another example, consider the \v Cech groupoid $U^{[2]}\rightrightarrows U$ of a cover $U:=\sqcup_i U_i$ of a manifold $X$, where $U^{[2]}:=\sqcup_{ij} U_i\cap U_j$. This Lie groupoid is equivalent to the manifold $X$ itself:
\begin{equation}
  \xymatrixcolsep{2pc}
  \xymatrixrowsep{2pc}
  \myxymatrix{ U^{[2]}\ar@<-.5ex>[dr] \ar@<.5ex>[dr] & & \ar@{->>}[dl] U\ar@{->>}[dr]& & \ar@<-.5ex>[dl] \ar@<.5ex>[dl] X\\
  & U &  & X
    }
\end{equation}
 
\subsection{Smooth 2-groups}

The smooth 2-groups we are interested in are in fact 2-groups internal to the weak 2-category $\CatBibun$. Therefore, we now give a brief review of smooth 2-groups. For more details, see~\cite{Baez:0307200,Schommer-Pries:0911.2483}.
\begin{definition}
 A \uline{2-group} is a weak monoidal category in which all morphisms are invertible and all objects are weakly invertible.
\end{definition}
\noindent That is, a 2-group is a category $\CCC$ endowed with a unit $\sfe$ and a bifunctor $\otimes:\CCC\times \CCC\rightarrow \CCC$, which satisfies $\sfe\otimes a\cong a$ and $a\otimes \sfe \cong a$ as well as $(a\otimes b)\otimes c\cong a\otimes (b\otimes c)$, where the isomorphisms are given by the left- and right-unitors $\sfl_a$, $\sfr_a$ and the associator $\sfa$. These have to satisfy the usual coherence axioms, cf.\ e.g.~\cite{Schommer-Pries:0911.2483}.

To define smooth 2-groups, we now internalize 2-groups in the weak 2-category $\CatBibun$, see also appendix~\ref{app:A} for a brief review of 2-group objects in a weak 2-category.
\begin{definition}
 A \uline{smooth 2-group} in the sense of~\cite{Schommer-Pries:0911.2483} is a 2-group object in $\CatBibun$.
\end{definition}
\noindent As shown in~\cite{Zhu:0609420}, this definition is equivalent to the canonical definition of a Lie 2-group in terms of simplicial manifolds.\footnote{Here, a Lie 2-group is a Kan complex with one 0-simplex and unique horn fillers for $n$-simplices with $n\geq 3$, cf.~\cite{Henriques:2006aa}.}

Explicitly, a smooth 2-group is given by a Lie groupoid $\CG$ together with a bibundle $\sfm:\CG\times \CG\rightarrow \CG$ and a bibundle $\sfid: (*\rightrightarrows *)\rightarrow \CG$ as well as bibundle morphisms $\sfa$, $\sfl$ and $\sfr$. The bibundle morphisms have to satisfy certain coherence axioms, cf.\ appendix~\ref{app:A}.

This definition of smooth 2-groups subsumes a large number of other notions, as explained in detail in~\cite{Schommer-Pries:0911.2483}. Here, we just summarize the most important examples. First, a Lie group $\sG$, regarded as a Lie groupoid $\sG\rightrightarrows *$ is a smooth 2-group. The monoidal product is the group product in $\sG$, which is promoted to a functor of Lie groupoids and then bundlized. Second,  crossed modules of Lie groups $\sH\xrightarrow{~\dpar~}\sG$ give rise to strict Lie 2-groups, which are special smooth 2-groups, as follows. Consider the groupoid $\sG\times \sH\rightrightarrows \sG$, with structure maps
\begin{subequations}\label{eq:strict_2_group_maps}
\begin{equation}
 \sfs(g,h):=g~,~~~\sft(g,h):=\dpar(h)g\eand \id(g):=(g,\unit_\sH)~.
\end{equation}
Composition of morphisms is defined by
\begin{equation}
 (\dpar(h)g,h')\circ (g,h)=(g,h'h)
\end{equation}
and the tensor product on morphisms is given by the semidirect group action on $\sG\ltimes \sH$,
\begin{equation}
 (g,h)\otimes (g',h'):=(gg',h(g\acton h'))\eand g\otimes g':=gg'~,
\end{equation}
\end{subequations}
where $g,g'\in \sG$, $h,h'\in \sH$ and $\acton:\sG\times \sH\rightarrow \sH$ is the action in the crossed module of Lie groups. One can even show categorical equivalence between crossed modules of Lie groups and strict Lie 2-groups~\cite{Baez:0307200}.

Finally, weak Lie 2-groups, i.e.\ weak 2-groups internal to $\CatManCat$ are also examples of smooth 2-groups.

\subsection{Categories internal to \texorpdfstring{$\CatBibun$}{Bibun}}

In order to define principal 2-bundles with smooth structure 2-groups, we will need the notion of a category internal to $\CatBibun$. Recall that a category internal to a category with pullbacks $\CCC$ is a pair $\CD=(\CD_0,\CD_1)$ of objects in $\CCC$ together with source, target, identity and multiplication morphisms in $\CCC$ such that the usual compatibility conditions between these structure maps for categories hold. Fully analogously, one defines internal functors and internal natural transformations.

The concept of an internal category has been weakened in the past to allow for categories internal to strict 2-categories~\cite{Ferreira:0604549}. Here, we need a slight extension to weak 2-categories to define categories internal to $\CatBibun$.

A more technical issue is that of pullbacks which do not all exist in $\CatBibun$, similarly to the case of $\CatMan$. We can circumvent this problem by introducing the notion of transversality.
\begin{definition}[{\cite[Def.\ 28]{Schommer-Pries:0911.2483}}]
 Let $\CH_{1,2}$ and $\CG$ be Lie groupoids and $B_{1,2}:\CH_{1,2}\rightarrow \CG$ be principal left-bibundles. Then $B_{1,2}$ are \uline{transverse}, if the maps $B_{1,2}\rightarrow \CG_1$ are transverse maps.\footnote{Recall that two maps $f:X\rightarrow Z$ and $g:X\rightarrow Z$ are transverse, if the sum of the pushforwards of $T_pX$ along $f$ and $T_pY$ along $g$ amounts to the full tangent space $T_{f(p)}Z=T_{f(p)}Z$ for all $p,q$ with $f(p)=g(q)$.}
\end{definition}
\noindent We then have the following proposition.
\begin{proposition}[{\cite[Prop.\ 31]{Schommer-Pries:0911.2483}}]
  Let $\CH_{1,2}$ and $\CG$ be Lie groupoids and $B_{1,2}:\CH_{1,2}\rightarrow \CG$ be transverse principal left-bibundles. Then the pullback $\CH_{1}\times_\CG\CH_2$ exists in $\CatBibun$.
\end{proposition}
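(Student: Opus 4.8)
The plan is to exhibit an explicit Lie groupoid $\CP$, to equip it with projection bibundles to $\CH_1$ and $\CH_2$, and to check that it satisfies the weak (bicategorical) universal property of a pullback in $\CatBibun$. Concretely, I want to show: for every Lie groupoid $\CK$ together with bibundles $Q_i\in\CatBibun(\CK,\CH_i)$ and a $2$-isomorphism $B_1\otimes Q_1\cong B_2\otimes Q_2$ of bibundles $\CK\rightarrow\CG$, there is a bibundle $R\in\CatBibun(\CK,\CP)$, unique up to a unique bibundle isomorphism, with $P_i\otimes R\cong Q_i$ compatibly with the given data, where $\otimes$ is the bibundle composition introduced above. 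Since $\CatBibun$ is only a weak $2$-category, every identity here is required to hold only up to coherent $2$-isomorphism, and this is the sense in which ``the pullback exists.''

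First I would construct the candidate. On objects I propose $\CP_0:=B_1\times^{\tau_1,\tau_2}_{\CG_0}B_2$, the manifold of pairs $(b_1,b_2)$ with $\tau_1(b_1)=\tau_2(b_2)$, and on morphisms I take the corresponding iterated fibre product built from $B_1$, $\CG_1$ and $B_2$ over $\CG_0$, together with the right $\CH_1$- and $\CH_2$-actions, so that a morphism records a pair of arrows of $\CH_1\times\CH_2$ intertwined by a connecting element of $\CG_1$. Source, target, identities and composition are then inherited from those of $\CH_1,\CH_2$ and the multiplication in $\CG_1$. This is exactly the step at which the hypothesis is used: the transversality of the maps $B_{1,2}\rightarrow\CG_1$ is what guarantees, via the standard transverse-fibre-product lemma, that these iterated fibre products are smooth manifolds (recall that $\CatMan$ has no reason to admit the relevant pullbacks otherwise), while the facts that the $\sigma_i$ are surjective submersions and that $\sfs,\sft$ are submersions force the structure maps of $\CP$ to be smooth with submersive source and target. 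Hence $\CP$ is a genuine Lie groupoid.

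Next I would produce the two projections $P_i\in\CatBibun(\CP,\CH_i)$, defined so that on the underlying spaces they remember the datum $b_i$ and the submersion $\sigma_i$, and I would exhibit the canonical $2$-isomorphism $B_1\otimes P_1\cong B_2\otimes P_2$ supplied by the connecting element of $\CG_1$ sitting inside $\CP_1$; here axiom (v) is what makes the relevant $\CG_1$-actions free and transitive on fibres, so that the $P_i$ are honest principal bibundles. To verify the universal property for a competitor $(\CK,Q_1,Q_2,\beta)$, I would invoke the defining isomorphism of axiom (v) for $B_1$ and $B_2$ to divide out the connecting morphism encoded by $\beta$, thereby assembling the comparison bibundle $R:\CK\rightarrow\CP$; principality again pins this bibundle down uniquely up to a unique biequivariant isomorphism, which yields the required uniqueness of $R$ up to unique $2$-isomorphism.

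The main obstacle I anticipate is not the construction itself but controlling the two weak layers simultaneously. Establishing that the fibre products defining $\CP_0$ and $\CP_1$ are smooth with submersive structure maps is the technical crux for $\CP$ to be a Lie groupoid at all, and it is precisely here that transversality must be invoked rather than assumed for free. At the same time, the universal property must be verified only up to coherent $2$-isomorphism, so the associator and unitors of $\CatBibun$ (which make $\otimes$ associative only up to natural isomorphism) enter the comparison, and one has to check that $R$ is compatible with them. I expect the careful bookkeeping of these coherence cells, rather than any single hard estimate, to be the genuinely delicate part of the argument.
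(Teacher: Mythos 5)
First, a point of comparison: the paper does not actually prove this proposition --- it is quoted with a citation to Prop.~31 of Schommer-Pries, so there is no in-text argument for your attempt to be measured against; what follows is an assessment of your sketch on its own terms. Your plan is the standard construction and nothing in it is wrong in principle. You correctly identify that transversality is what makes the object manifold $\CP_0=B_1\times^{\tau_1,\tau_2}_{\CG_0}B_2$ exist (note that the relevant transverse maps are the anchors $\tau_i:B_i\rightarrow\CG_0$; the ``$\CG_1$'' in the paper's definition of transversality should be read accordingly), and the overall architecture --- an explicit Lie groupoid $\CP$, projection bibundles obtained by bundlizing the evident forgetful functors, and verification of the bicategorical universal property up to coherent $2$-isomorphism --- is the right one.

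That said, the argument stops exactly where the content begins. The morphism manifold $\CP_1$ is never defined: concretely, a morphism $(b_1,b_2)\rightarrow(b_1',b_2')$ should be a pair of arrows $h_i$ of $\CH_i$ with $\sfs(h_i)=\sigma_i(b_i)$ and $\sft(h_i)=\sigma_i(b_i')$ such that the unique elements $g_i\in\CG_1$ with $g_i b_i=b_i'h_i$ (unique, and smoothly varying, by axiom (v)) coincide; one must then prove that this locus is an embedded submanifold with submersive source and target. This is \emph{not} a consequence of the transversality hypothesis, which only handles $\CP_0$; it relies on the $\sigma_i$ and the structure maps $\sfs,\sft$ being submersions and on the smoothness of the ``division map'' obtained by inverting the isomorphism of axiom (v), so that the condition $g_1=g_2$ is cut out transversally. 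Likewise, constructing the comparison bibundle $R$ from a cone $(\CK,Q_1,Q_2,\beta)$ requires unwinding $\beta$ through the coequalizer defining $B_i\otimes Q_i$, and the universal property to be checked is an equivalence of categories between $\CatBibun(\CK,\CP)$ and cones over the cospan (so it must also account for $2$-morphisms of cones), not merely essential uniqueness of $R$. None of these steps should fail, but in the proposal they are announced rather than carried out, so what you have is a correct plan rather than a proof.
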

\noindent With this notion, we are now ready to define categories internal to $\CatBibun$.

\begin{definition}
 A \uline{category $\CCC$ internal to $\CatBibun$} is a pair of Lie groupoids $\CCC_0$ and $\CCC_1$ together with bibundles 
 \begin{equation}
  \sfs,\sft: \CCC_1\rightrightarrows \CCC_0~,~~~\id:\CCC_0\rightarrow \CCC_1~,~~~B_c:\CCC_1\times^{\sfs,\sft}_{\CCC_0}\CCC_1\rightarrow \CCC_1~,
 \end{equation}
 called the source, target, identity and composition morphisms, respectively. We demand that $\sfs$ and $\sft$ are transverse, which guarantees the existence of the pullback $\CCC_1\times^{\sfs,\sft}_{\CCC_0}\CCC_1$. The following diagrams are required to be commutative:
 \begin{equation}
 \begin{aligned}
     \xymatrixcolsep{3pc}
     \xymatrixrowsep{3pc}
     \myxymatrix{
    \CCC_1 \ar@{->}[d]_{\sft} & \ar@{->}[l]_{\pr_1~~~} \CCC_1\times_{\CCC_0}\CCC_1 \ar@{->}[r]^{~~~\pr_2} \ar@{->}[d]^{B_c}& \CCC_1 \ar@{->}[d]_{\sfs}\\
    \CCC_0 & \ar@{->}[l]_{\sft} \CCC_1\ar@{->}[r]^{\sfs} & \CCC_0
    }\hspace{1cm}
     \xymatrixcolsep{3pc}
     \xymatrixrowsep{3pc}
  \myxymatrix{
    \CCC_0 \ar@{->}[r]^{\id} \ar@{->}[dr]_{1} & \CCC_1 \ar@{->}[d]^{\sfs,\sft}\\
    & \CCC_0
    }\hspace{1cm}
 \end{aligned}
 \end{equation}
 We also have bibundle isomorphisms $\sfa$, $\sfl$ and $\sfr$ defined in the commutative diagrams
 \begin{equation}
  \xymatrixcolsep{0pc}
  \xymatrixrowsep{2pc}
  \myxymatrix{
  & \CCC_1\times^{\sfs,\sft}_{\CCC_0}\CCC_1\times^{\sfs,\sft}_{\CCC_0}\CCC_1 \ar@{->}[dl]_{B_c\times 1} \ar@{->}[dr]^{1\times B_c} & \\
  \CCC_1\times^{\sfs,\sft}_{\CCC_0}\CCC_1 \ar@{=>}[rr]^{\sfa} \ar@{->}[dr]_{B_c}& & \CCC_1\times^{\sfs,\sft}_{\CCC_0}\CCC_1 \ar@{->}[dl]^{B_c}\\ & \CCC_1  }\hspace{.2cm}
  \xymatrixcolsep{2pc}
  \xymatrixrowsep{4pc}
  \myxymatrix{
    \CCC_0\times_{\CCC_0}\CCC_1 \ar@{->}[r]^-{\id\times 1} \ar@{->}[dr]_{\pr_2} & \CCC_1\times_{\CCC_0}\times \CCC_1 \ar@{->}[d]^{B_c} \ar@{}[rd]^(.1){}="a"^(.5){}="b" \ar@{=>} "a";"b"_{\sfr} \ar@{}[ld]^(.1){}="c"^(.5){}="d" \ar@{=>} "c";"d"^{\sfl} & \CCC_1\times_{\CCC_0}\CCC_0 \ar@{->}[l]_-{1\times\id} \ar@{->}[dl]^{\pr_1} \\
    & \CCC_1 &
    }
 \end{equation}
 called the \uline{associator}, the \uline{left-} and \uline{right-unitors}, respectively. Coherence of the associator and the unitors amounts to the (internal) pentagon identity, 
 \begin{subequations}\label{diag:int_cat_pentagon_triangle}
    \begin{equation}
      \xymatrixcolsep{-3.1pc}
      \xymatrixrowsep{3pc}
     \myxymatrix{
      & \big[B_c\otimes(1\times B_c)\big]\otimes (B_c\times 1\times 1) \ar@{=>}[dr]^{~(\sfa\otimes 1)\circ \cong}\\
      B_c\otimes\big[(B_c\times 1)\otimes (B_c\times 1\times 1)\big] \ar@{=>}[ur]^{(\sfa\otimes 1)\circ \cong~} \ar@{=>}[d]^{1\otimes (\sfa\times 1)}&  & B_c\otimes\big[(1\times B_c)\otimes (1\times 1\times B_c)\big]\\
      B_c\otimes \big[(B_c\times 1)\otimes (1\times B_c \times 1)\big]\ar@{=>}[rr]^{(\sfa\otimes 1)\circ \cong} & & \big[B_c\otimes (1\times B_c)\big]\otimes (1\times B_c\times 1) \ar@{=>}[u]^{(1\otimes (1\times \sfa))\circ \cong}
    }
    \end{equation}
 as well as the (internal) triangle identity,
    \begin{equation}
     \xymatrixcolsep{2.8pc}\myxymatrix{
      \big[B_c\otimes (B_c\times 1)\big]\otimes (1\times \id\times 1) \ar@{=>}[rr]^{(\sfa\times 1)\otimes 1} \ar@{=>}[dr]_{(1\otimes (\sfr\times 1))\circ \cong~~~~} & & \big[B_c\otimes (1\times B_c)\big]\otimes (1\times \id \times 1) \ar@{=>}[dl]^{~~~~(1\otimes (1\times \sfl))\circ \cong}\\
      & B_c
    }
    \end{equation}
 \end{subequations}
  In the above diagrams, we suppressed arrows for the isomorphisms $\cong$ between bibundles arising from the non-associativity of horizontal or bibundle composition $\otimes$ in $\CatBibun$.
\end{definition}

\noindent Analogously, we now define internal functors. 
\begin{definition}
 Given two categories $\CCC$ and $\CCD$ internal to $\CatBibun$, an \uline{internal functor} $\Phi:\CCC\rightarrow \CCD$ consists of bibundles $\Phi_{0,1}:\CCC_{0,1}\rightarrow \CCD_{0,1}$ and bibundle isomorphisms $\Phi_{2,c}$ and $\Phi_{2,\id}$ such that the following diagrams (2-)commute:
 \begin{equation}\label{diag:int_cat_functor}
     \xymatrixcolsep{3pc}
     \xymatrixrowsep{3pc}
     \myxymatrix{
    \CCC_1 \ar@{->}[d]_{\Phi_1} \ar@{->}[r]_{\sfs} & \CCC_0\ar@{->}[d]_{\Phi_0}& \ar@{->}[l]^{\sft} \CCC_1 \ar@{->}[d]_{\Phi_1}\\
    \CCD_1 \ar@{->}[r]_{\sfs} & \CCD_0 & \ar@{->}[l]^{\sft} \CCD_0
    }\hspace{.5cm}
     \xymatrixcolsep{3pc}
     \xymatrixrowsep{3pc}
  \myxymatrix{
    \CCC_1\times_{\CCC_0}\CCC_1 \ar@{->}[r]^-{B_{c}} \ar@{->}[d]_{\Phi_1\times \Phi_1} & \CCC_1 \ar@{->}[d]^{\Phi_1}\\
    \CCD_1\times_{\CCD_0}\CCD_1 \ar@{->}[r]^-{B_{c}} \ar@{=>}[ur]^{\Phi_{2,c}} & \CCD_1
    }\hspace{.5cm}
     \xymatrixcolsep{3pc}
     \xymatrixrowsep{3pc}
  \myxymatrix{
    \CCC_0 \ar@{->}[r]^{\id} \ar@{->}[d]_{\Phi_0} & \CCC_1 \ar@{->}[d]^{\Phi_1}\\
    \CCD_0 \ar@{->}[r]^{\id} \ar@{=>}[ur]^{\Phi_{2,\id}} & \CCD_1
    }
 \end{equation}
 The bibundle morphisms have to satisfy coherence axioms which amount to the following commutative diagrams\footnote{Note that in these diagrams, the structure 1- and 2-morphisms in $\CCC$ and $\CCD$ are labeled by the same symbols.}:
\begin{subequations}\label{diag:int_functor_coherence}
\begin{equation}\label{diag:int_functor_coherence_a}
\xymatrixcolsep{-3pc}
\xymatrixrowsep{2pc}
\myxymatrix{
& B_c\otimes\big[(\Phi_1\times\Phi_1)\otimes(B_c\times 1)\ar@{=>}[dr]^{\Phi_{2,c}\circ \cong}\big] & \\
  B_c\otimes\big[(B_c\times 1)\otimes(\Phi_1\times\Phi_1\times \Phi_1)\big]\ar@{=>}[ur]^{1\otimes(\Phi_{2,c}\times 1)~~}  \ar@{=>}[d]_{ \sfa\circ \cong} & & \Phi_1\otimes (B_c\times (B_c\times 1)) \ar@{=>}[d]^{1\otimes \sfa}\\
 \big[B_c\otimes(1\times B_c)\big]\otimes(\Phi_1\times\Phi_1\times \Phi_1)\ar@{=>}[dr]_{(1\otimes (1\times\Phi_{2,c}))\circ \cong~~~~} &  & \Phi_1\otimes (B_c\times (1\times B_c))\\
 & B_c\otimes\big[(\Phi_1\times\Phi_1)\otimes(1\times B_c)\big] \ar@{=>}[ur]_{~~~\Phi_{2,c}\circ \cong} & }
\end{equation}
and
\begin{equation}
\xymatrixcolsep{-6pc}
\myxymatrix{
&  B_c\otimes\big[(\Phi_1\times \Phi_1)\otimes\big[(1\times \id)\otimes (1,\sfs)\big]\big] \ar@{=>}[dr]^{\Phi_{2,c}\circ \cong}  & \\
 B_c\otimes\big[(1\times \id)\otimes\big[(\Phi_1\times \Phi_0)\otimes (1,\sfs)\big]\big]\ar@{=>}[ur]^{1\otimes(1\times \Phi_{2,\id})~~~}  \ar@{=>}[dr]_{\sfr} & 
 &  \big[\Phi_1\otimes B_c\big]\otimes\big[(1\times \id)\otimes (1,\sfs)\big] \ar@{=>}[dl]^{~~(1\otimes \sfr)\circ \cong} \\
&  \Phi_1  &\\
 B_c\otimes\big[(\id\times 1)\otimes\big[(\Phi_0\times \Phi_1)\otimes (\sft,1)\big]\big] \ar@{=>}[ur]^{\sfl} \ar@{=>}[dr]_{1\otimes(\Phi_{2,\id}\times 1)~~~~} & 
&  \big[\Phi_1\otimes B_c\big]\otimes\big[(\id\times 1)\otimes (\sft,1)\big] \ar@{=>}[lu]_{~(1\otimes \sfl)\circ \cong}\\
& B_c\otimes\big[(\Phi_1\times \Phi_1)\otimes\big[(\id\times 1)\otimes (\sft,1)\big]\big] \ar@{=>}[ur]_{~~\Phi_{2,c}\circ \cong}  &
}
\end{equation}
\end{subequations}
where we again suppressed additional arrows for isomorphisms arising from the non-asso\-cia\-tivity of horizontal composition in $\CatBibun$. Moreover, we write $(B_1,B_2)$ for the morphism $(B_1\times B_2)\circ \Delta$, where $\Delta$ is the diagonal morphism $\Delta:\CG\rightarrow \CG\times \CG$. The first diagram contains bibundles from $\CCC_1\times_{\CCC_0}\CCC_1\times_{\CCC_0}\CCC_1$ to $\CCD_1$, while the second diagram contains bibundles from $\CCC_1$ to $\CCD_1$.
\end{definition}
\noindent And we finish with internal natural transformations.
\begin{definition}\label{def:internal_natural_trafo}
 Given two internal functors $\Phi$ and $\Psi$ between categories $\CCC$ and $\CCD$ internal to $\CatBibun$, a natural transformation $\beta:\Phi\Rightarrow \Psi$ consists of a bibundle $\CCC_0\rightarrow \CCD_1$ together with a bibundle isomorphism $\beta_2$ rendering the diagrams
 \begin{equation}\label{diag:int_natural_trafo_def}
  \xymatrixcolsep{3pc}
  \xymatrixrowsep{3pc}
  \myxymatrix{
    & \CCC_0 \ar@{->}[dl]_{\Psi_0} \ar@{->}[d]_{\beta_1} \ar@{->}[dr]^{\Phi_0} & \\
    \CCD_0 & \CCD_1 \ar@{->}[l]^{\sft} \ar@{->}[r]_{\sfs} & \CCD_0 
    }\hspace{2cm}
  \xymatrixcolsep{3pc}
  \xymatrixrowsep{3pc}
  \myxymatrix{
    \CCC_1 \ar@{->}[d]_{(\beta_1\otimes \sft,\Phi_1)} \ar@{->}[r]^-{(\Psi_1,\beta_1\otimes\sfs)} & \CCD_1\times \CCD_1 \ar@{->}[d]^{B_c}\\
    \CCD_1\times \CCD_1\ar@{=>}[ur]^{\beta_2} \ar@{->}[r]^{B_c}& \CCD_1
    }
 \end{equation}
 (2-)commutative. In addition, we have coherence rules amounting to the commutative diagrams
 \begin{subequations}\label{diag:int_natural_trafo_coherence}
 \begin{equation}\label{diag:int_natural_trafo_coherence_a}
    \xymatrixcolsep{-14pc}
    \myxymatrix{
    & B_c\otimes\big[(B_c\times 1)\otimes \big[(\Psi_1\times \beta_1\times \Phi_1)\otimes (1\times(\sft,1))\big]\big] \ar@{=>}[dl]_{ \sfa\circ \cong}   & \\
    \big[B_c\otimes(1\times B_c)\big]\otimes \big[(\Psi_1\times \beta_1\times \Phi_1)\otimes (1\times(\sft,1))\big] \ar@{=>}[dd]_{\cong \circ(1\otimes(1\times\beta_2))\circ \cong}   &  & \\
    & & B_c\otimes \big[(B_c\times 1)\otimes\big[(\beta_1\times \Phi_1\times \Phi_1)\otimes((\sft,1)\times 1)\big]\big]  \ar@{=>}[dd]^{\sfa\circ \cong} \ar@/_2.0pc/@{=>}[uul]_{1\otimes(\beta_2\times 1)}\\
 \big[B_c\otimes(1\times B_c)\big]\otimes \big[(\Psi_1\times \Psi_1\times \beta_1)\otimes (1\times(1,\sfs))\big] \ar@{=>}[dd]_{ \sfa^{-1}}& & \\
 & & \big[B_c\otimes (1\times B_c)\big]\otimes\big[(\beta_1\times \Phi_1\times \Phi_1)\otimes((\sft,1)\times 1)\big]   \ar@{=>}[dd]^{(1\otimes(1\times\Phi_{2,c}))\circ \cong}\\
\big[B_c\otimes(B_c\times 1)\big]\otimes \big[(\Psi_1\times \Psi_1\times \beta_1)\otimes (1\times(1,\sfs))\big]\ar@/_2.0pc/@{=>}[ddr]_{(1\otimes(\Psi_{2,c}\times 1))\circ \cong}  
   &  & \\
   & & B_c\otimes \big[(\beta_1\times \Phi_1)\otimes \big[(1\times B_c)\otimes ((\sft,1)\times 1)\big]\big]\ar@{=>}[dl]^{\beta_2}\\
 & B_c\otimes \big[(\Psi_1\times \beta_1)\otimes \big[(B_c\times 1)\otimes (1,(1,\sfs))\big]\big] 
 }
 \end{equation}
 \begin{equation}
\myxymatrix{
& \beta_1 \ar@{=>}[rd]^{~~~(\sfr\circ \cong)^{-1}} & \\
  B_c\otimes \big[(\id\times 1)\otimes (\Psi_0,\beta_1)\big]\ar@{=>}[ru]^{\sfl\circ \cong~~~} \ar@{=>}[d]_{1\otimes(\Psi_{2,\id}\times 1)} &   & B_c\otimes\big[(1\times \id)\otimes(\beta_1,\Phi_0)\big]\ar@{=>}[d]^{1\otimes(1\times\Phi_{2,\id})}  \\
  B_c\otimes\big[(\Psi_1\times 1)\otimes(\id,\beta_1)\big]  & & B_c\otimes\big[(1\times \Phi_1)\otimes(\beta_1,\id)\big]\ar@{=>}[ll]^{\beta_2\otimes \id_{\id}}
  }
 \end{equation}
 \end{subequations}
 The first diagram describes isomorphisms between bibundles from $\CCC_1\times_{\CCC_0}\CCC_1$ to $\CCD_1$ and on this Lie groupoid we have $(1\times (\sft,1))=((1,\sfs)\times 1)$. The second diagram describes isomorphisms between bibundles from $\CCC_0$ to $\CCD_1$ and involves the bibundle isomorphism 
 \begin{equation}
  \beta_2\otimes \id_{\id}:B_c\otimes((\beta_1\otimes \sft,\Phi_1)\otimes \id) \Rightarrow B_c\otimes ((\Psi_1,\beta_1\otimes \sfs)\otimes \id)~.
 \end{equation}
\end{definition}

\section{Smooth 2-group bundles} \label{Section_Smooth_2_group}

\subsection{Ordinary principal bundles}

Recall that given a smooth manifold $X\in\CatMan$, the (generalized) bundles over $X$ are objects in the slice category\footnote{cf.\ appendix~\ref{app:B}} $\CatMan/X$. That is, a generalized bundle is a smooth manifold $P$ with a smooth morphism $P\rightarrow X$.

To obtain a principal bundle $P$ over $X$ with structure Lie group $\sG$, we have to demand that there is a principal group action of $\sG$ on $P$ and that the bundle is locally trivial with typical fiber $\sG$. The first condition is implemented as follows. We switch from $\sG$ to the trivial bundle $\sG_X=(\sG\times X\rightarrow X)$, which is a group object in $\CatMan/X$. We can then demand that $P$ is a $\sG_X$-object $\pi:P\rightarrow X$ in $\CatMan/X$. 

To implement the second condition, we define a cover of $X$ as a smooth manifold $Y$ together with a surjective submersion $\kappa\in\CatMan(Y,X)$. While not all pullbacks exist in $\CatMan$, those along surjective submersions do. For simplicity and for reasons of familiarity, let us restrict ourselves to ordinary covers $\kappa:U\twoheadrightarrow X$ given by a disjoint union of patches, $U:=\djunion_i U_i$. We then demand that $\kappa^*P$ is $\sG$-equivariantly diffeomorphic to the bundle $U\times \sG\rightarrow U\in \CatMan/X$.

We will also need a description of the principal bundle $P$ in terms of descent data or transition functions. For this, we use the $\sG$-equivariant diffeomorphism $\rho_i:U_i\times \sG\rightarrow P|_{U_i}$ to define a transition functions. Note that the diffeomorphism is of the form $\rho_i(p)=(\pi(p),g_i(p))$ for $p\in \pi^{-1}(U_i)$. Then the expression $g_{ij}(p):=g^{-1}_i(p)g_j(p)$ for $p\in \pi^{-1}(U_i\cap U_j)$ depends only on $\pi(p)$ since $g^{-1}_i(hp)g_j(hp)=g^{-1}_i(p)h^{-1}hg_j(p)=g^{-1}_i(p)g_j(p)$. We thus obtain a function $g_{ij}:U_i\cap U_j\rightarrow \sG$, which satisfies the condition $g_{ij}g_{jk}=g_{ik}$ on triple overlaps $U_i\cap U_j\cap U_k\neq \varnothing$. The $(g_{ij})$ thus form a \v Cech 1-cocycle with respect to the cover $U$.

Similarly, one readily shows that diffeomorphic principal bundles $P$ and $P'$ subordinate to the same cover $U$ are described by transition functions $(g_{ij})$ and $(g'_{ij})$ which are related by $g_{ij}=\gamma_i g'_{ij}\gamma_j$ for some local smooth functions $\gamma_i:U_i\rightarrow \sG$. The $(\gamma_i)$ form the \v Cech coboundaries linking the \v Cech cocycles $(g_{ij})$ and $(g'_{ij})$.

Alternatively, one can regard the principal bundle $P$ as a functor from the \v Cech groupoid $\djunion U^{[2]}\rightrightarrows U$ with $U^{[2]}:=\sqcup_{i,j}U_i\cap U_j$ to the Lie groupoid $\sB\sG=(\sG\rightrightarrows *)$. One readily sees that this functor is encoded in a \v Cech 1-cocycle $(g_{ij})$:
\begin{equation}
  \xymatrixcolsep{5pc}
\myxymatrix{
 \sG \ar@<-.5ex>[d] \ar@<.5ex>[d] & U^{[2]}\ar@{->}[l]_{(g_{ij})}  \ar@<-.5ex>[d] \ar@<.5ex>[d] \\
 {*} & U\ar@{->}[l]_{*} 
}
\end{equation}
Moreover, two functors corresponding to diffeomorphic principal bundles are connected by a natural isomorphism, which in turn gives rise to a \v Cech coboundary.

\subsection{Definition of smooth 2-group bundles}

Let us now generalize the above discussion to the categorified setting. This yields higher principal bundles as special kinds of stacks, which were already defined in~\cite{Schommer-Pries:0911.2483}, and we recall the relevant definitions in the following. For a related approach, see also~\cite{Bartels:2004aa}.

Note that a 2-space is a category internal to $\CatMan$ and here, we restrict our attention to Lie groupoids, i.e.\ groupoids internal to $\CatMan$. The 2-bundles over a 2-space $\CX$ are then simply elements of (a subcategory of) the slice 2-category $\CatBibun/\CX$, cf.~\cite{Bartels:2004aa}.

Given a smooth 2-group $\CG$, we can trivially regard it as a 2-group object $\CG_\CX$ in $\CatBibun/\CX$ as follows:
\begin{equation}
  \xymatrixcolsep{2pc}
  \xymatrixrowsep{2pc}
  \myxymatrix{ \CX_1\ar@<-.5ex>[dr] \ar@<.5ex>[dr] & & \ar@{->}[dl]_{\tau} B\ar@{->>}[dr]^{\sigma} & & \ar@<-.5ex>[dl] \ar@<.5ex>[dl] \CG_1\times\CX_1\\
  & \CX_0 &  & \CG_0\times \CX_0
    }
\end{equation}
We then define:
\begin{definition}
 Given a smooth 2-group $\CG$, a \uline{smooth $\CG$-stack} is a $\CG$-object in $\CatBibun$.
\end{definition}
We also define $\CG$-stacks over other smooth stacks $\CX$, which are the objects of $\CatBibun$:
\begin{definition}
 A \uline{smooth 2-group over a smooth stack $\CX$} is a 2-group object in $\CatBibun/\CX$. Given a smooth 2-group $\CG_\CX$ over a smooth stack $\CX$, a \uline{smooth $\CG_\CX$-stack over $\CX$} is a $\CG_\CX$-object in $\CatBibun/\CX$.
\end{definition}

Finally, let us impose the condition of local triviality to arrive at higher principal bundles. To this end, we need to introduce covers and discuss pull-backs to the patches in the covers. It will be sufficient for us to work with covering bibundles arising from a bundlization of 2-covers as defined in~\cite{Baez:2004in}. For a more general perspective, see~\cite{Schommer-Pries:0911.2483}. Since the 2-spaces we want to cover are Lie groupoids, we demand that our cover is also a groupoid $\CU=(\CU_1\rightrightarrows \CU_0)$ internal to $\CatMan$, together with a functor $\tau:\CU\rightarrow \CX$ such that the contained smooth maps $\CU_1\rightarrow \CX_1$ and $\CU_0\rightarrow \CU_1$ are surjective submersions. The bundlization of such a 2-cover gives rise to the bibundle
\begin{equation}
  \xymatrixcolsep{2pc}
  \xymatrixrowsep{2pc}
  \myxymatrix{ \CX_1\ar@<-.5ex>[dr] \ar@<.5ex>[dr] & & \ar@{->}[dl]_{\sft} \CU_0\times^{\tau_s,\sfs}_{\CX_0}\CX_1\ar@{->>}[dr]^{\pi} & & \ar@<-.5ex>[dl] \ar@<.5ex>[dl] \CU_1\\
  & \CX_0 &  & \CU_0
    }
\end{equation}
cf.\ section~\ref{ssec:smooth-2-groups}. Pullbacks exist for surjective submersions, and thus they exist along the corresponding bundlizations\footnote{Using 2-covers saves us the discussion of transversality conditions for bibundles required for pullbacks to exist. For details, see e.g.~\cite{Bartels:2004aa}.}. 

\begin{definition}\label{def:principal_bundle}
 Given a smooth $\CG$-group over a stack $\CX$, a \uline{principal $\CG$-bundle over $\CX$} is a smooth $\CG$-stack $\CP$ over $\CX$ such that there exists a covering bibundle $\kappa:\CU\rightarrow \CX$ with $\kappa^*(\CP)$ being $\CG$-equivariantly equivalent to $\CU\times \CG$ as a smooth $\CG$-stack over $\CU$.
\end{definition}
Altogether we have the following picture:
\begin{equation}\label{diag:def_principal_bundle}
 \xymatrixcolsep{3pc}
    \myxymatrix{
    \CU_1\times \CG_1 \ar@<-.5ex>[d] \ar@<.5ex>[d]&  & \kappa^*\CP_1 \ar@<-.5ex>[d] \ar@<.5ex>[d]& & \CP_1 \ar@<-.5ex>[d] \ar@<.5ex>[d] & \CG_1\times \CX_1\ar@<-.5ex>[d] \ar@<.5ex>[d]\\
    \CU_0\times \CG_0 & B_{{\rm eq}} \ar@{->>}[r] \ar@{->>}[l] & \kappa^*\CP_0 & B_{\kappa^*} \ar@{->>}[r] \ar@{->>}[l]& \CP_0\ar@{=>}[lldd]_\eta & \CG_0\times \CX_0\\
     & B_{\CU\times \CG} \ar[dr] \ar@{->>}[ul]& B_{\kappa^* \CP} \ar[d] \ar@{->>}[u]& & B_\CP \ar[d] \ar@{->>}[u] & \CG_\CX\ar[dl] \ar@{->>}[u]\\
     & \CU_1 \ar@<-.5ex>[r] \ar@<.5ex>[r] & \CU_0 & B_\kappa \ar@{->>}[r] \ar@{->>}[l]& \CX_0 & \CX_1 \ar@<-.5ex>[l] \ar@<.5ex>[l]
    }
\end{equation}
where $B_\CP$ is a $\CG_\CX$-object in $\CatBibun/\CX$, $\eta$ is a bibundle isomorphism, $B_{\rm eq}$ is a bibundle equivalence and $B_{\CU\times \CG}\otimes B_{{\rm eq}}\cong B_{\kappa^*\CP}$.

Let us work through two examples in somewhat more detail: ordinary principal bundles and principal 2-bundles over a manifold $X$, where the structure 2-group is a crossed module of Lie groups. 

In the first case, consider a principal bundle $\pi:P\rightarrow X$ with structure Lie group $\sG$ over a manifold $X$ with cover $\kappa:U\twoheadrightarrow X$. We have an isomorphism $\rho_i:\djunion_i U_i\times \sG \rightarrow \kappa^* P$ such that $\pi\circ\rho_i$ is the obvious projection. To regard these as principal bundles in the sense of definition~\ref{def:principal_bundle}, we first trivially extend the group object $\sG_X=(\sG\times X\rightarrow X)$ to a 2-group object over a Lie groupoid, by promoting $\sG\times X$ and $X$ to discrete categories $\CG=(\sG\times X\rightrightarrows \sG\times X)$ and $\CX=(X\rightrightarrows X)$. The projection in $\sG_X$ induces an obvious functor between $\CG$ and $\CX$, which we can bundlize to the following smooth 2-group over $\CX$:
\begin{equation}\label{eq:bibundle_group_over}
  \xymatrixcolsep{2pc}
  \xymatrixrowsep{2pc}
  \myxymatrix{ X\ar@<-.5ex>[dr] \ar@<.5ex>[dr] & & \ar@{->}[dl]_{\pr} X\times \sG\ar@{->>}[dr]^{=} & & \ar@<-.5ex>[dl] \ar@<.5ex>[dl] X\times \sG\\
  & X &  & X\times \sG
    }
\end{equation}

To obtain a covering bibundle $B_\kappa$ of $\CX$, we proceed similarly. We trivially extend a cover $\kappa:U\twoheadrightarrow X$ to the discrete 2-cover $(\kappa,\kappa):(U\rightrightarrows U)\twoheadrightarrow (X\rightrightarrows X)$, and bundlize the result:
\begin{equation}\label{eq:bibundle_ex_cover}
  \xymatrixcolsep{2pc}
  \xymatrixrowsep{2pc}
  \myxymatrix{ X\ar@<-.5ex>[dr] \ar@<.5ex>[dr] & & \ar@{->}[dl]_{\kappa} U\ar@{->>}[dr]^{=} & & \ar@<-.5ex>[dl] \ar@<.5ex>[dl] U\\
  & X &  & U
    }
\end{equation}
Similarly, all the other maps are generalized to bibundles by bundlization of the corresponding functors between discrete groupoids and it is obvious how to complete diagram~\eqref{diag:def_principal_bundle}. In particular, $\eta$ is trivial.

In the case of principal 2-bundles over $X$, we choose the strict structure Lie 2-group $\CG=(\sG\ltimes \sH\rightrightarrows \sG)$. Again, we promote $X$ and its cover $U$ to discrete groupoids. We have an obvious functor from the Lie groupoid $\CG\times \CX$ to $\CX$, which we bundlize to 
\begin{equation}
  \xymatrixcolsep{2pc}
  \xymatrixrowsep{2pc}
  \myxymatrix{ X\ar@<-.5ex>[dr] \ar@<.5ex>[dr] & & \ar@{->}[dl]_{\pr} \sG\times X\ar@{->>}[dr]^{=} & & \ar@<-.5ex>[dl] \ar@<.5ex>[dl] (\sG\ltimes \sH)\times X\\
  & X &  & \sG\times X
    }
\end{equation}
As covering bibundle, we choose again~\eqref{eq:bibundle_ex_cover}. Recall that a principal $\CG$ bundle over $X$ can be regarded as a 2-space $\CP$ fibered over $X$, whose pullback along the cover is equivalent to the bundle $\CG\times (U\rightrightarrows U)$, cf.\ e.g.~\cite{Wockel:2008aa}. Bundlization then allows us to fill in all the remaining bibundles of diagram~\eqref{diag:def_principal_bundle}.

\subsection{Cocycle description}\label{ssec:cocycle_description}

Let $\CG$ be a smooth 2-group. One can now derive transition functions by locally trivializing the above description of principal $\CG$-bundles in the usual manner.

Alternatively, we can directly derive a description in terms of generalized \v Cech cocycles. Let $\kappa:\CU\rightarrow \CX$ be a covering bibundle of a stack $\CX$. We can construct the \v Cech groupoid $\check \CCC(\CU)$ of $\CU_1\rightrightarrows \CU_0$ as the obvious category internal to $\CatBibun$. Correspondingly, we construct $\sB \CG$ of the smooth 2-group $\CG$ as a category in $\CatBibun$. We then have the following definition, generalizing the usual \v Cech description of principal fiber bundles.
\begin{definition}
 A \uline{principal $\CG$-bundle} over $\CX$ subordinate to a cover $\CU$ of $\CX$ is a functor internal to $\CatBibun$ from the \v Cech groupoid $\check \CCC(\CU)$ to the delooping $\sB\CG$ of the smooth structure 2-group $\CG$. Two principal $\CG$-bundles over $\CX$ subordinate to a cover $\CU$ are called \uline{equivalent}, if there is a natural isomorphism between their corresponding functors.
\end{definition}
Altogether, we get the following diagram:
\begin{equation}
     \xymatrixcolsep{3pc}
     \xymatrixrowsep{3pc}
    \myxymatrix{
     \CG_1  \rightrightarrows\CG_0\ar@<-.5ex>[d] \ar@<.5ex>[d] &
    \CU_{1,0}\leftleftarrows \CU_{1,1} \ar@<-.5ex>[d]_{\sfs} \ar@<.5ex>[d]^{\sft}
    \ar@/^2pc/[l]^(.1){}="c"^(.5){}="d"
    \ar@/_2pc/[l]^(.1){}="e"^(.5){}="f" \ar@{=>} "d";"f"^{\beta}
    \ar@/^2pc/[l]^{\Phi_1}
    \ar@/_2pc/[l]_{\Psi_1}\\
    {*}\rightrightarrows{*} & \CU_0\leftleftarrows\CU_1 \ar@{->}[l]^{\Phi_0}_{\Psi_0} 
    }
\end{equation}
where $\Phi$ and $\Psi$ are internal functors, $\beta$ is an internal natural isomorphism and the maps $\sfs$ and $\sft$ are bibundles\footnote{In principle, the maps $(\CG_1\rightrightarrows\CG_0)\rightrightarrows (*\rightrightarrows*)$ are also given by bibundles, but since the target is trivial, they collapse to trivial maps.}. 

As a particularly simple example, consider the principal $\CG$-bundle $\Phi^\unit$ whose $\Phi^\unit_1$-com\-ponent is given by the bundlization of the functor mapping all of $\CU_{1,0}$ to $\unit_\CG\in \CG_0$ and all of $\CU_{1,1}$ to $\id_{\unit_\CG}\in \CG_1$.
\begin{definition}\label{def:trivial_principal_G_bundle}
 A \uline{trivial principal $\CG$-bundle} is a principal $\CG$-bundle which is equivalent to the principal $\CG$-bundle $\Phi^\unit$.
\end{definition}

Let us explain how ordinary principal bundles over a manifold $X$ with structure group $\sG$ fit into this definition. If $\CX$ is the discrete groupoid $X\rightrightarrows X$, then we can also choose the cover $\CU$ to be discrete. In this case, the maps $\sfs$ and $\sft$ collapse to smooth maps between $\CU_{1,0}$ and $\CU_{0}$. The \v Cech groupoid $\check \CCC(\CU)$ can be reduced to the \v Cech groupoid of an ordinary cover $\CU_0=U=\sqcup_i U_i$ of $X$ and the composition of compatible elements in $\CU_{1,0}=U^{[2]}=\sqcup_{i,j} U_{i}\cap U_j$ is the (bundlization of) the usual composition of double overlaps. The groupoid $\CG$ is now the discrete Lie groupoid $\sG\rightrightarrows \sG$ and the composition bibundle is simply the bundlization of the multiplication map, trivially lifted to a functor. Given this initial data, the bibundles contained in $\Phi$ and $\Psi$ reduce to smooth maps $(g_{ij}):U^{[2]}\rightarrow \sG$. Their composition with multiplication appearing in the second diagram of~\eqref{diag:int_cat_functor} is encoded in the bibundles
\begin{equation}
  \xymatrixcolsep{2pc}
  \xymatrixrowsep{2pc}
  \myxymatrix{ \sG\ar@<-.5ex>[dr] \ar@<.5ex>[dr] & & \ar@{->}[dl]\sG\times \sG\ar@{->>}[dr] & \sG\times \sG\ar@<-.5ex>[d] \ar@<.5ex>[d] & \ar@{->}[dl]U^{[3]}\ar@{->>}[dr] & & \ar@<-.5ex>[dl] \ar@<.5ex>[dl]U^{[3]}\\
  & \sG & & \sG\times \sG & & U^{[3]} 
    }
\end{equation}
with $U^{[3]}:=\sqcup_{i,j,k} U_{ij}\times_M U_{jk}=\sqcup_{i,j,k} U_i\cap U_j\cap U_k$ and the second bibundle is the bundlization of the group multiplication. These bibundles compose to the bibundle
\begin{equation}
  \xymatrixcolsep{2pc}
  \xymatrixrowsep{2pc}
  \myxymatrix{ \sG\ar@<-.5ex>[dr] \ar@<.5ex>[dr] & & \ar@{->}[dl]U^{[3]}\ar@{->>}[dr] & & \ar@<-.5ex>[dl] \ar@<.5ex>[dl]U^{[3]}\\
  & \sG &  & U^{[3]} 
    }
\end{equation}
Altogether, we recover the usual \v Cech cocycles encoding transition functions of a principal $\sG$-bundle over $X$ subordinate to the cover $U$:
\begin{equation}
 g_{ij}(x)g_{jk}(x)=g_{ik}(x)~,~~~x\in U_i\cap U_j\cap U_k~.
\end{equation}
Analogously, the bibundle morphisms contained in $\beta$ arise from bundlizing smooth maps $(\gamma_i):(U_i)\rightarrow \sG$. If $(g_{ij})$ and $(g'_{ij})$ are the cocycles corresponding to the functors $\Phi$ and $\Psi$, then we have
\begin{equation}
 \gamma_i(x) g_{ij}(x)=g'_{ij}(x) \gamma_j(x)~,
\end{equation}
and the $\gamma_i$ form a \v Cech coboundary.

\subsection{Example: Principal 2-bundles with strict structure 2-group}

As a preparation for discussing principal 2-bundles with smooth 2-groups as their structure 2-groups, let us also go through the example of principal 2-bundles with strict structure 2-group in much detail.

The relevant 2-cover is again derived from the \v Cech groupoid of the underlying manifold $X$ as a category internal to $\CatBibun$ as done in the previous section. The structure 2-group is given by a strict Lie 2-group $\CG=(\sG\ltimes \sH)\rightrightarrows \sG$, which is regarded as a category $\CG\rightrightarrows *$ internal to $\CatBibun$ with the bibundle $B_c$ being the monoidal product in the strict Lie 2-group $\CG$.

Here, the bibundle $\Phi_1$ (and $\Psi_1$) no longer collapses straightforwardly. To simplify the discussion, let us assume that the cover $U$ is sufficiently fine so that $U^{[2]}=\sqcup_{i,j} U_{ij}$ is contractible. Then the bibundle $\Phi_1$ reads as
\begin{equation}\label{eq:bibundle_strict_2_group}
  \xymatrixcolsep{2pc}
  \xymatrixrowsep{2pc}
  \myxymatrix{ \sG\ltimes \sH\ar@<-.5ex>[dr] \ar@<.5ex>[dr] & & \ar@{->}[dl]_{\tau}\Phi_1=U^{[2]}\times \sH\ar@{->>}[dr]^{\sigma} & & \ar@<-.5ex>[dl] \ar@<.5ex>[dl]U^{[2]}\\
  & \sG &  & U^{[2]} 
    }
\end{equation}
where $\sigma$ is the projection. The bibundle $\Phi_1$ is now necessarily a trivial bibundle over $U^{[2]}$ and therefore isomorphic to a bundlization. Instead of using this fact, let us come to this conclusion by explicitly working through the details. 

Note that $\tau$ is fully fixed by its image of elements $(i,j,x,\unit_\sH)\in U^{[2]}\times \sH$, because the left-action fixes the remaining part of $\tau$. In particular,
\begin{equation}
 (i,j,x,h)=\big(\tau(i,j,x,\unit_\sH),h\big)\,(i,j,x,\unit_{\sH})~,
\end{equation}
and thus
\begin{equation}
 \tau(i,j,x,h)=\sft\big(\tau(i,j,x,\unit_\sH),h\big)~.
\end{equation}
We therefore define
\begin{equation}
 g_{ij}(x):=\tau(i,j,x,\unit_\sH)~,
\end{equation}
implying $\tau(i,j,x,h)=\dpar(h)g_{ij}(x)$. Altogether, we see that the bibundle $\Phi_1$ is simply the bundlization of the functor
\begin{equation}
  \xymatrixcolsep{5pc}
  \xymatrixrowsep{2pc}
  \myxymatrix{ \sG\ltimes \sH\ar@<-.5ex>[d] \ar@<.5ex>[d] & U^{[2]}\ar@<-.5ex>[d] \ar@<.5ex>[d] \ar@{->}[l]_{(g_{ij},\unit_\sH)} \\
  \sG  & U^{[2]} \ar@{->}[l]_{(g_{ij})}
    }
\end{equation}
as expected since the bibundle is a trivial bundle over $U^{[2]}$.

Let us now consider the appropriate version of the second diagram in~\eqref{diag:int_cat_functor}, which encodes (weak) compatibility of the internal functor with bibundle composition:
\begin{equation}
  \xymatrixcolsep{2pc}
  \xymatrixrowsep{2pc}
  \myxymatrix{ & & U^{[3]}\times \sH \ar@{->}[dl]_{\tau\otimes \tau} \ar@{->>}[dr]^{}\ar@{=>}[dd]^{\Phi_{2,c}} & \\
  \sG\ltimes \sH \ar@<-.5ex>[r] \ar@<.5ex>[r] & \sG & &U^{[3]} & \ar@<-.5ex>[l] \ar@<.5ex>[l] U^{[3]}\\
  & & U^{[3]}\times \sH \ar@{->}[ul]^{\tau\circ \pr_{13}} \ar@{->>}[ur]&
    }
\end{equation}
with $U^{[3]}=\djunion_{i,j,k} U_{ijk}$. Since the map $\Phi_{2,c}$ is compatible with the principal left-action and the projections $\sigma$, it is fully determined by the function $h:U^{[3]}\rightarrow \sH$ defined implicitly according to
\begin{equation}
 \Phi_{2,c}(i,j,k,x,\unit_\sH)=(i,j,k,x,h^{-1}_{ijk}(x))~,~~~(i,j,k,x,\unit_\sH)\in U^{[3]}\times \sH~,
\end{equation}
where we chose to invert $h_{ijk}$ for consistency with conventions e.g.\ in~\cite{Saemann:2012uq}. The condition that $(\tau\otimes \tau)=\tau\circ \pr_{13}\circ \Phi_{2,c}$ then directly translates into the equation
\begin{equation}\label{eq:strict_2_cocycle_1}
 \dpar(h_{ijk}(x))g_{ij}(x)g_{jk}(x)=g_{ik}(x)~.
\end{equation}
Also, the coherence axiom~\eqref{diag:int_functor_coherence_a} amounts to
\begin{equation}\label{eq:3.21}
\Phi_{ikl}\circ (\Phi_{ijk}\otimes \id_{\Phi_{kl}})=\Phi_{ijl}\circ (\id_{\Phi_{ij}}\otimes\Phi_{jkl})~,
\end{equation}
where the restriction of $\Phi_c:U^{[3]}\times \sH\rightarrow U^{[3]}\times \sH$ to $\Phi_{ijk}:U_i\cap U_j\cap U_k\times \sH\rightarrow U_i\cap U_j\cap U_k\times \sH$ and of $\Phi:(U^{[2]}\rightrightarrows U^{[2]})\rightarrow (\sG\ltimes \sH\rightrightarrows \sG)$ to $\Phi_{ij}:(U_{ij}\rightrightarrows U_{ij})\rightarrow (\sG\ltimes \sH\rightrightarrows \sG)$ appear. Evaluating~\eqref{eq:3.21} on $(i,j,k,l,x,\unit_\sH)$ using the formulas~\eqref{eq:strict_2_group_maps}, we obtain the relation
\begin{equation}\label{eq:strict_2_cocycle_2}
 h_{ikl} h_{ijk}=h_{ijl} (g_{ij}\acton h_{jkl})~.
\end{equation}
Equations~\eqref{eq:strict_2_cocycle_1} and~\eqref{eq:strict_2_cocycle_2} are the usual cocycle relations for a principal 2-bundle with strict structure 2-group.

Given two such cocycles $(g_{ij},h_{ijk})$ and $(g'_{ij},h'_{ijk})$, we can consider internal natural isomorphisms between them, cf.~definition~\ref{def:internal_natural_trafo}. Such an isomorphism $\beta$ is encoded in a bibundle $\beta_1$ from $U\rightrightarrows U$ to $\sG\ltimes \sH\rightrightarrows \sH$ and a bibundle isomorphism $\beta_2$ contained in
\begin{equation}
  \xymatrixcolsep{2pc}
  \xymatrixrowsep{2pc}
  \myxymatrix{ & & B_c\otimes\big((\beta_1\otimes \sft),\Phi_1\big) \ar@{->}[dl] \ar@{->>}[dr]^{}\ar@{=>}[dd]^{\beta_2} & \\
  \sG\ltimes \sH \ar@<-.5ex>[r] \ar@<.5ex>[r] & \sG & &U^{[2]} & \ar@<-.5ex>[l] \ar@<.5ex>[l] U^{[2]}\\
  & & B_c\otimes\big(\Psi_1,(\beta_1\otimes \sfs)\big) \ar@{->}[ul] \ar@{->>}[ur]&
    }
\end{equation}
Here, $B_c$ is the bundlization of the vertical composition functor in the strict 2-group $\sG\ltimes \sH\rightrightarrows \sH$ and we use again the standard notation $(B_1,B_2):=(B_1\times B_2)\otimes \Delta$, where $\Delta:\CG\rightarrow \CG\times \CG$ is the appropriate diagonal bibundle. Following arguments analogous to those given above, the bibundle $\beta_1$ is diffeomorphic to $U\times \sH$ and the map $\tau:U\times \sH\rightarrow \sG$ is fully determined by maps $\gamma_i(x):=\tau(i,x,\unit_\sH)$. Moreover, the bibundles related by the isomorphism $\beta_2$ are isomorphic to $U^{[2]}\times \sH$, and the isomorphism $\beta_2$ is fixed by maps $\chi_{ij}(x):=\beta_2(i,j,x,\unit_\sH)$. The second diagram in~\eqref{diag:int_natural_trafo_def} then immediately yields the equation
\begin{subequations}\label{eq:strict_2_coboundary}
\begin{equation}
   \gamma_i g_{ij}=\dpar(\chi_{ij})g'_{ij}\gamma_j~.
\end{equation}
The commutative diagram~\eqref{diag:int_natural_trafo_coherence_a} simplifies a bit, because all associators are trivial. Evaluating the bibundle isomorphisms at $(i,j,k,x,\unit_\sH)$ in $U^{[3]}\times \sH$, we obtain the relation
\begin{equation}
  \chi_{ik}h'_{ijk}=(\gamma_i\acton h_{ijk})\chi_{ij}(g'_{ij}\acton \chi_{jk})~.
\end{equation}
\end{subequations}
Equations~\eqref{eq:strict_2_coboundary} give the usual coboundary relation for a principal 2-bundle with strict structure 2-group, as found e.g.\ in~\cite{Baez:2004in} or in the conventions close to ours here in~\cite{Saemann:2012uq}.

\section{The string group}

\subsection{General remarks} \label{Section-general_remark_string_grp}

The string group $\sString(n)$ is a 3-connected cover of the spin group $\sSpin(n)$. It fits within the Whitehead tower of the orthogonal group $\sO(n)$. Recall that the Whitehead tower over a space $X$ consists of a sequence of spaces 
\begin{equation}
 *\rightarrow \dots \xrightarrow{~\nu_{i+1}~} X^{(i)}\xrightarrow{~\nu_{i}~}\dots \xrightarrow{~\nu_{3}~} X^{(2)}\xrightarrow{~\nu_{2}~} X^{(1)}\xrightarrow{~\nu_{1}~} X~,
\end{equation}
where the maps $\nu_i$ induce isomorphisms on all homotopy groups in degree $k\geq i$ and $\pi_{j}(X^{(i)})=0$ for $j< i$. In the case of $\sO(n)$, we have
\begin{equation}
 \dots \rightarrow \sString(n)\rightarrow \sSpin(n)\rightarrow \sSpin(n)\rightarrow \sSO(n)\rightarrow \sO(n)~.
\end{equation}

The string group is only defined up to homotopy, and therefore the group structure can only be determined up to $A_\infty$-equivalence. Moreover the smooth structure on the string group is not determined at all. Therefore, there exist various different models and the first geometric model as a topological group was constructed by Stolz~\cite{Stolz:1996:785-800} and Stolz and Teichner in~\cite{Stolz:2004aa}. Because $\pi_1$ and $\pi_3$ of $\sString(n)$ vanish, the string group cannot be modeled by a finite-dimensional Lie group. 

Looking for ways to circumvent this issue, one is naturally led to Lie 2-group models of the string group~\cite{Baez:0307200}. These are Lie 2-groups endowed with a Lie 2-group homomorphism to $\sSpin(n)$, regarded as a Lie 2-group. A first such model was constructed in~\cite{Baez:2005sn}, which is a strict but infinite-dimensional Lie 2-group and differentiates to a strict Lie 2-algebra which is equivalent to the string Lie 2-algebra. Closely related is the construction of~\cite{Henriques:2006aa}, which yields an integration of the string Lie 2-algebra as a simplicial manifold. Moreover, there is an infinite-dimensional model as a strict Lie 2-group~\cite{Nikolaus:2011zg} which was obtained by smoothening the original Stolz--Teichner construction. The model we shall be mostly interested in here is that of Schommer--Pries~\cite{Schommer-Pries:0911.2483}: a group object in $\CatBibun$ which is semistrict but finite dimensional. We believe that this model is best suited for a description of physically interesting solutions to higher gauge theory.

\subsection{Differentiable hypercohomology}\label{ssec:diff-hypercohomology}

A particularly interesting Lie 2-algebra is the string Lie 2-algebra of a compact simple Lie group $\sG$, and we will encounter its explicit form later. This Lie 2-algebra is fully characterized by the Cartan--Killing form on a Lie group, which represents an element of $H^3(\sLie(\sG),\FR)$. In~\cite{Baez:0307200}, the authors showed that Lie 2-groups are classified by a pair of groups $\sG$, $\sH$, with $\sH$ abelian, an action of $\sG$ on $\sH$ by automorphism and an element of $H^3(\sG,\sH)$. It is thus tempting to assume that the string Lie 2-algebra can be integrated to such  classifying data. As shown in~\cite{Baez:0307200}, however, this cannot be done if the underlying topology is to be respected. 

The reason behind this problem is that ordinary group cohomology is not the right framework for this integration. As done in~\cite{Schommer-Pries:0911.2483}, one should rather switch to Segal--Mitchison group cohomology~\cite{Segal:1970aa}, which we briefly review in the following.

Recall that given a simplicial set $S_\bullet=\bigcup_{p=0}^\infty S_p$, we have face and degeneracy maps\footnote{Note that our symbols for these maps differ from another widespread choice.} $\sff^p_i:S_p\rightarrow S_{p-1}$ and $\sfd^p_i:S_p\rightarrow S_{p+1}$, $0\leq i\leq p$. The former induce a coboundary operator on functions on $S_\bullet$, $\delta:\CC^\infty(S_{p-1})\rightarrow \CC^\infty(S_{p})$, via $(\delta f)(s):=\sum_{j=0}^p(-1)^j f(\sff^p_j s)$ for $s\in S_p$.

Given a manifold $M$ together with a good cover $\pi:V_1=\sqcup_i(V_i)\twoheadrightarrow M$, we can define a simplicial set, the {\em nerve of the \v Cech groupoid}, 
as the fibered product\footnote{If $\pi_i:V_i\rightarrow M$ are the restrictions of $\pi$, then the fibered product is defined as $V_i\times_M V_j:=\{(i,j,x)|\pi_i(x)=\pi_j(x)\}$.} 
\begin{equation}
 V_\bullet=\bigsqcup_{p=1}^\infty V^{[p]}=\bigsqcup_{p=1}^\infty \bigsqcup_{i_1,\dots,i_p} V_{i_1}\times_M V_{i_2}  \times_M\dots \times_M V_{i_p}~.
\end{equation}
Sheaf-valued maps on $V^{[p]}$ are called a \v Cech $(p-1)$-cochains. Together with the corresponding simplicial coboundary operator $\delta_{\check{C}}$, they form a complex. \v Cech cohomology with values in the sheaf $\CCS$ is simply the cohomology of that complex.

In many constructions in category theory, and in particular in higher category theory, it is actually more convenient to talk about the nerve of a category than about the category itself. Consider for example the nerve $N(\sB\sG)$ of the groupoid $\sB\sG$, which is the simplicial set $\sG_\bullet=\bigcup_{p=0}^\infty \sG^{\times p}$, where $\sG$ is some Lie group and $\sG^{\times 0}=*$. The face and degeneracy maps are given by
\begin{equation}
 \begin{aligned}
  \sff^1_0(g_1)=\sff^1_1(g_1)&=*~,\\                                
  \sff^p_i(g_1,\dots,g_p)&=\left\{\begin{array}{ll}
				    (g_2,\dots, g_p) & \mbox{if $i=p>1$}~,\\
				    (g_1,\dots, g_{p-1}) & \mbox{if $i=0$, $p>1$}~,\\
				    (g_1,\dots,g_{i-1}g_i,g_{i+1},\dots,g_p) & \mbox{if $0<i<p>1$}~,
                                  \end{array}\right.\\
  \sfd^0_0(*)&=\unit_\sG~,\\
  \sfd^p_i(g_1,\dots,g_p)&=(g_1,\dots,g_{i-1},g_i,g_i,\dots,g_p)~.
 \end{aligned}
\end{equation}
We denote the differential arising as a coboundary operator of this simplicial complex by $\delta_N$. 

To combine this simplicial complex with that arising from the \v Cech groupoid, we need to consider a simplicial cover of $\sG_\bullet$. Our definition of such a cover will come with somewhat more structure than that of~\cite{Schommer-Pries:0911.2483}, cf.~\cite{Brylinski:2001aa}.
\begin{definition}
 A \uline{simplicial cover} $(V_\bullet,I_\bullet)$ of a simplicial manifold $M_\bullet$ is a simplicial set $I_\bullet$ together with a simplicial manifold $V_\bullet$ covering $M_\bullet$ such that for all $j\in I_p$,
 \begin{equation}
  \sff_i (V_{p,j})\subset V_{p-1,\sff_i(j)}\eand \sfd_i (V_{p,j})\subset V_{p+1,\sfd_i(j)} ~,
 \end{equation}
 where $0\leq i\leq p$ or $0\leq i\leq p+1$, respectively, and the face and degeneracy maps are those of $V_\bullet$ and $I_\bullet$.
\end{definition}

Given now an abelian group $\sA$, we can consider the hypercohomology of smooth $\sA$-valued \v Cech cochains on $\sG_\bullet$, where the differentials are induced by the two simplicial structures. We have the following double complex.
\begin{equation}
 \vcenter{\vbox{\xymatrix{
\vdots & \\
\CC^\infty\big(V^{[1]}_3,\sA\big) \ \ar[u]^-{\delta_N} \ \ar[r]^-{\delta_{\check{C}}} & \Ddots\\
\CC^\infty\big(V^{[1]}_2,\sA\big) \ \ar[u]^-{\delta_N} \ \ar[r]^-{\delta_{\check{C}}} & \
\CC^\infty\big(V^{[2]}_2,\sA\big) \ \ar[u]^-{\delta_N} \ar[r]^-{\delta_{\check{C}} }  & \
\Ddots \\
\CC^\infty\big(V^{[1]}_1,\sA\big) \ \ar[u]^-{\delta_N} \ \ar[r]^-{\delta_{\check{C}}} & \
\CC^\infty\big(V^{[2]}_1,\sA\big) \ \ar[u]^-{\delta_N} \ar[r]^-{\delta_{\check{C}} } & \
\CC^\infty\big(V^{[3]}_1,\sA\big) \ \ar[u]^-{\delta_N} \ar[r]^-{\delta_{\check{C}} } & \
\Ddots \\
\CC^\infty\big(V^{[1]}_0,\sA\big) \ \ar[u]^-{\delta_N} \ \ar[r]^-{\delta_{\check{C}}} & \
\CC^\infty\big(V^{[2]}_0,\sA\big) \ \ar[u]^-{\delta_N} \ar[r]^-{\delta_{\check{C}} } &
\CC^\infty\big(V^{[3]}_0,\sA\big) \ \ar[u]^-{\delta_N} \ar[r]^-{\delta_{\check{C}} } &
\CC^\infty\big(V^{[4]}_0,\sA\big) \ \ar[u]^-{\delta_N} \ar[r]^-{\delta_{\check{C}} } &
\dots 
}}}
\end{equation}
Note that $V_0$ covers the point $*$ and therefore the bottom line of the diagram above can be chosen to be trivial. For simplicity, we shall label the $(p,q)$-cochains by $\CC^{p,q}(\sA):=\CC^\infty(V^{[p+1]}_q,\sA)$ in the following. Segal--Mitchison cohomology is now the total cohomology of this double complex. The underlying differential is
\begin{equation}
 \delta_{\rm SM}=\delta_{\check{C}}+(-1)^p\delta_N~~:~~\bigsqcup_{p=0}^{n}\CC^\infty(V^{[p+1]}_{n-p},\sA)=\bigsqcup_{p=0}^{n}\CC^{p,n-p}(\sA)~\rightarrow~ \bigsqcup_{p=0}^{n+1}\CC^{p,n+1-p}(\sA)~,
\end{equation}
where $p$ is the \v Cech degree of the cochain that $\delta_{\rm SM}$ acts on. We shall always work with normalized cocycles, which become trivial if two subsequent arguments are identical.

As an example, consider a representative $\lambda$ of a generator of $H^3_{\rm SM}(\sSpin(n),\sU(1))$. Such an element encodes a model for the string group as shown later. It is given by four smooth maps\footnote{For comparison, $\lambda_1$, $\lambda_2$ and $\lambda_3$, $\delta_h$ and $\delta_v$ in~\cite{Schommer-Pries:0911.2483} correspond to $\lambda^{2,1}$, $\lambda^{1,2}$, $\lambda^{0,3}$, $\delta_{\check C}$ and $\delta_N$, respectively.}
\begin{subequations}\label{eq:cocycle-example}
\begin{equation}
 \lambda=(\lambda^{3,0}=0,\lambda^{2,1},\lambda^{1,2},\lambda^{0,3})~,~~~\lambda^{i,j}\in \CC^{i,j}(\sU(1))~,
\end{equation}
where the cocycle condition $\delta_{\rm SM}\lambda=0$ reads as
\begin{equation}
 0=\delta_{\check{C}}\lambda^{2,1}~,~~~\delta_N\lambda^{2,1}=\delta_{\check{C}}\lambda^{1,2}~,~~~
 \delta_N\lambda^{1,2}=\delta_{\check{C}}\lambda^{0,3}~,~~~\delta_N\lambda^{0,3}=0~.
\end{equation}
\end{subequations}
Evidently, the map $\lambda^{2,1}$ defines an element in $H^2(\sSpin(n),\sU(1))$ and therefore encodes an abelian gerbe over $\sSpin(n)$. 

To conclude, let us briefly show how one can construct a simplicial cover of $\sSpin(n)_\bullet=N(\sB\sSpin(n))$ following~\cite{Brylinski:2001aa}, which is the starting point for constructing an element of $H^3_{\rm SM}(\sSpin(n),\sU(1))$. We focus on the case $n=3$, but our construction readily generalizes to arbitrary $n$. 

\begin{example}\label{ex:cover}
An element $g \in \sSpin(3)\cong\sS\sU(2)$ is parameterized by a real vector $(x^1,x^2,x^3,$ $x^4)$ of length~1 as follows:
   \begin{equation}
      \begin{aligned}
      g=
      \begin{pmatrix}
            x^1+ix^2 & x^3+ix^4\\
            -x^3+ix^4 & x^1-ix^2
      \end{pmatrix}
      ~.
      \end{aligned}
   \end{equation}  
   A convenient cover of $\sSU(2)$ is given by $V_1=V_1^{[1]}=\sqcup_{i\in I_1}V_{1,i}$ with $I_1=\{1,\dots,8\}$ and
 \begin{equation}\label{cover_SU2}
   \begin{aligned}
    & V_{1,1} =  \{g \in \sS\sU(2)|~ x^1\geq 0 \}~,~~ V_{1,2}  =  \{g \in \sS\sU(2)|~ x^1< 0\}~,~~
     V_{1,3}  = \{g \in \sS\sU(2)|~ x^2\geq 0\}~, \\
     &V_{1,4}   =  \{g \in \sS\sU(2)|~ x^2< 0\}~,~~
     V_{1,5}  =  \{g \in \sS\sU(2)|~ x^3\geq 0 \}~,~~
     V_{1,6}  =  \{g \in \sS\sU(2)|~ x^3< 0\}~,\\
     & V_{1,7}  = \{g \in \sS\sU(2)|~ x^4\geq 0\}~, \eand
     V_{1,8}  = \{g \in \sS\sU(2)|~ x^4< 0\}~.
   \end{aligned}
\end{equation}
The index set $I_1$ is now trivially extended to a simplicial set $I_\bullet$ by using multiindices:
\begin{equation}
 I_2=\{(i_1,i_2,i_3)|i_{1,2,3}\in I_1\}~,~~~
 I_3=\{(j_1,j_2,j_3,j_4)|j_{1,2,3,4}\in I_2\}~,~~~\mbox{etc.}
\end{equation}
The actions of the face $\sff^p_i$ and degeneracy maps $\sfd^p_i$ are obvious: the former drop the $i$-th slot, while the latter double the $i$-th slot. Note that the $I_p$ are finite and carry a total order induced by the lexicographic ordering of indices.

The simplicial cover $V_\bullet$ is then obtained from the preimages of the face maps of the nerve of $\sB\sSU(2)$:
\begin{equation}
\begin{aligned}
 V_{2,(i_1,i_2,i_3)}&:=(\sff^2_0)^{-1}(V_{1,i_1})~\cap~ (\sff^2_1)^{-1}(V_{1,i_2})~\cap~ (\sff^2_2)^{-1}(V_{1,i_3})~,\\
 V_{3,(j_1,j_2,j_3,j_4)}&:=(\sff^3_0)^{-1}(V_{2,j_1})~\cap~(\sff^3_1)^{-1}(V_{2,j_2})~\cap~(\sff^3_2)^{-1}(V_{2,j_3})~\cap~(\sff^3_3)^{-1}(V_{2,j_4})~,\\
\end{aligned}
\end{equation}
etc.\ with the obvious face and degeneracy maps.

The lexicographic ordering of indices allows us to introduce a section $\phi$ of $\pi:V_\bullet \rightarrow N(\sB\sSpin(3))$. In particular, $\phi_1(g)$ is the element $v\in V_{1,i}$ with $\pi(v)=g$ and $i$ as small as possible.
\end{example}

\subsection{The string group model of Schommer--Pries} \label{Schommer-Pries_model}

In~\cite{Schommer-Pries:0911.2483}, Schommer--Pries constructed a smooth 2-group model of the string group, and we briefly recall this construction in the following. First, we need to generalize the extension of Lie groups by other Lie groups to the categorified setting, as done in~\cite[Def.\ 75]{Schommer-Pries:0911.2483}:
\begin{definition}
 An \uline{extension} of a smooth 2-group $\CG$ by a smooth 2-group $\CA$ consists of a smooth 2-group $\CE$ together with homomorphisms $f:\CA\rightarrow \CE$, $g:\CE\rightarrow \CG$ and a 2-homomorphism $\alpha:g\circ f\rightarrow 0$ such that $\CE$ is a principal $\CA$-bundle over $\CG$.
\end{definition}

We are interested in extensions of a smooth 2-group $\CG=(\sG\rightrightarrows \sG)$ by a smooth abelian 2-group $\CA=\sA\rightrightarrows *$, which form the weak 2-category $\sExt(\CG,\CA)$. The following theorem gives a way of encoding this weak 2-category in Segal--Mitchison cohomology classes:
\begin{theorem}[{\cite[Thm.\ 1]{Schommer-Pries:0911.2483}}]\label{extension_2-groups}
 Let $\sG$ be a Lie group and $\sA$ be an abelian Lie group, viewed as a trivial $\sG$-module. Then there is an (unnatural) equivalence of weak symmetric monoidal 2-categories\footnote{Here, the weak 2-categories $M$, $M[1]$ and $M[2]$ are the obvious trivial weak 2-categories with objects $M$, $*$, $*$, morphisms $M$, $M$, $*$ and 2-morphisms $M$, $M$, $M$, respectively.}:
 \begin{equation}
  \sExt(\CG,\sB\sA)\cong H^3_{\rm SM}(\sG,\sA)\times H^2_{\rm SM}(\sG,\sA)[1]\times H^1_{\rm SM}(\sG,\sA)[2]~.
 \end{equation}
\end{theorem}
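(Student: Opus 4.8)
The plan is to establish the equivalence by passing to the cocycle description of principal $\sB\sA$-bundles over $\CG$ afforded by the Segal-Mitchison double complex, and to show that the three layers of categorical data in $\sExt(\CG,\sB\sA)$---equivalence classes of extensions, $1$-morphisms between them, and $2$-morphisms between those---are controlled respectively by total degrees $3$, $2$ and $1$ of the complex. Concretely, I would fix a simplicial cover $(V_\bullet,I_\bullet)$ of $\sG_\bullet=N(\sB\sG)$ together with a section $\phi$ as in Example~\ref{ex:cover}, construct a pair of weak symmetric monoidal $2$-functors between $\sExt(\CG,\sB\sA)$ and the right-hand side, and then verify that they are mutually weakly inverse.

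For the forward direction, let $\CE$ be an extension, so that $\CE$ is a principal $\sB\sA$-bundle over $\CG=(\sG\rightrightarrows\sG)$ carrying a compatible smooth $2$-group structure. Restricting the bundle to the object manifold $\sG$ and trivializing over $V_\bullet$ produces an abelian gerbe over $\sG$, i.e.\ a $\check C$-cocycle $\lambda^{2,1}\in\CC^{2,1}(\sA)$ with $\delta_{\check{C}}\lambda^{2,1}=0$. The multiplication bibundle $\sfm$ then compares the pullbacks of this gerbe along the three face maps $\sff^2_i:\sG^{\times 2}\to\sG$, furnishing a multiplicative structure $\lambda^{1,2}\in\CC^{1,2}(\sA)$ with $\delta_N\lambda^{2,1}=\delta_{\check{C}}\lambda^{1,2}$; the associator $\sfa$ supplies the coherence datum $\lambda^{0,3}\in\CC^{0,3}(\sA)$ with $\delta_N\lambda^{1,2}=\delta_{\check{C}}\lambda^{0,3}$; and the pentagon identity forces $\delta_N\lambda^{0,3}=0$. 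These are precisely the conditions $\delta_{\rm SM}\lambda=0$ for $\lambda=(0,\lambda^{2,1},\lambda^{1,2},\lambda^{0,3})$, so $\CE$ determines a class in $H^3_{\rm SM}(\sG,\sA)$.

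For the reverse direction I would reconstruct $\CE$ from a normalized cocycle $\lambda$ by a clutching construction: over the cover one glues trivial pieces $V_\bullet\times\sB\sA$ using $\lambda^{2,1}$ as the gerbe band, $\lambda^{1,2}$ as the twisting of the multiplication bibundle $\sfm$, and $\lambda^{0,3}$ as the associator $\sfa$, the cocycle identities guaranteeing weak associativity and the pentagon. The lower categorical layers are read off analogously: a $1$-morphism between extensions with cocycles $\lambda,\lambda'$ amounts to a total-degree-$2$ cochain $\mu=(\mu^{1,1},\mu^{0,2})$ with $\delta_{\rm SM}\mu=\lambda-\lambda'$, and, after fixing the two extensions, its residual ambiguity is a $\delta_{\rm SM}$-closed total-degree-$2$ cochain modulo coboundaries, i.e.\ a class in $H^2_{\rm SM}(\sG,\sA)$; a $2$-morphism between two such $1$-morphisms is a total-degree-$1$ cochain modulo $\delta_{\rm SM}$, giving $H^1_{\rm SM}(\sG,\sA)$. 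This reproduces the shifted factors $H^2_{\rm SM}(\sG,\sA)[1]$ and $H^1_{\rm SM}(\sG,\sA)[2]$. The symmetric monoidal structure is obtained by checking that the Baer sum of extensions corresponds to addition of cocycles and that, $\sA$ being abelian, the braiding is carried to the trivial symmetry on cohomology.

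The main obstacle is not the bijection on equivalence classes of objects but the upgrade to a genuine equivalence of weak symmetric monoidal $2$-categories. One must verify that the reconstruction $2$-functor is well defined independently of the chosen cover and section $\phi$---this is the source of the unnaturality---and that it is essentially surjective and fully faithful at the level of both $1$- and $2$-morphisms. Carrying this out requires careful bookkeeping of the weak associativity isomorphisms $\cong$ in $\CatBibun$ that are suppressed in the double complex, and checking that every pentagon, triangle and exchange coherence of the smooth $2$-group structure on $\CE$ translates into, and is implied by, the simplicial identities $\delta_N^2=0$, $\delta_{\check{C}}^2=0$ and $\delta_N\delta_{\check{C}}=\delta_{\check{C}}\delta_N$. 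Once this dictionary between $2$-group coherence and the Segal-Mitchison differentials is in place, the equivalence follows.
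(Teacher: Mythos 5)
The paper does not prove this theorem itself---it is imported verbatim as Theorem~1 of \cite{Schommer-Pries:0911.2483}---so there is no internal proof to compare against; your outline is, however, a correct sketch of the strategy used in that reference, and your ``reverse direction'' (building the groupoid from $\lambda^{2,1}$ as a central groupoid extension, the multiplication bibundle from $\lambda^{1,2}$, the associator from $\lambda^{0,3}$, with the pentagon reducing to $\delta_N\lambda^{0,3}=0$ and the interchange law to $\delta_N\lambda^{2,1}=\delta_{\check C}\lambda^{1,2}$) is exactly the construction the paper itself carries out in Section~4.3 for $\CS_\lambda$. The identification of $1$- and $2$-morphisms with total-degree-$2$ and degree-$1$ cochains, and of the unnaturality with the choices needed to split the $2$-category as a product, is likewise consistent with the cited proof.
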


For the model $\CS_\lambda$ of the string group of $\sSO(n)$, we are interested in the case $\sG=\sSpin(n)$ and $\sA=\sU(1)$. At least for $n\geq 5$, the cohomology groups $H^2_{\rm SM}(\sG,\sA)$ and $H^1_{\rm SM}(\sG,\sA)$ are trivial. Thus, the corresponding extension is parameterized by an element $\lambda=(\lambda^{3,0},\lambda^{2,1},\lambda^{1,2},\lambda^{0,3})$ of $H^3_{\rm SM}(\sG,\sA)$, cf.\ equations~\eqref{eq:cocycle-example}. We now have the following theorem.
\begin{theorem}[{\cite[Thm.\ 100]{Schommer-Pries:0911.2483}}]
 For $n\geq 5$, $H^3_{\rm SM}(\sSpin(n),\sU(1))\cong \RZ$ and the central extension of smooth 2-groups $\CS_\lambda$ corresponding to a generator $\lambda$ gives a smooth 2-group model for $\sString(n)$.
\end{theorem}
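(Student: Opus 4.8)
The plan is to handle the two assertions in turn: first the cohomology computation $H^3_{\rm SM}(\sSpin(n),\sU(1))\cong\RZ$, and then the identification of the extension $\CS_\lambda$ attached to a generator with a model of the string group. The classification of extensions itself is already supplied by Theorem \ref{extension_2-groups}: since $H^2_{\rm SM}(\sSpin(n),\sU(1))$ and $H^1_{\rm SM}(\sSpin(n),\sU(1))$ vanish for $n\geq 5$, the symmetric monoidal $2$-category $\sExt(\sSpin(n),\sB\sU(1))$ collapses to the single group $H^3_{\rm SM}(\sSpin(n),\sU(1))$, so a choice of generator $\lambda$ does determine $\CS_\lambda$ up to equivalence. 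What remains is therefore (i) to pin down the rank of this group and (ii) to read off the homotopy type of $\CS_\lambda$.

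For step (i) I would pass from Segal--Mitchison cohomology to the cohomology of a classifying space. The exponential sequence $0\to\RZ\to\FR\to\sU(1)\to 0$ of trivial $\sSpin(n)$-modules induces a long exact sequence in Segal--Mitchison cohomology. Because $\sSpin(n)$ is compact one may average cochains over the group, so $\FR$ is an acyclic coefficient module and $H^p_{\rm SM}(\sSpin(n),\FR)=0$ for $p>0$; the connecting maps then give $H^3_{\rm SM}(\sSpin(n),\sU(1))\cong H^4_{\rm SM}(\sSpin(n),\RZ)$. Invoking the comparison that identifies Segal--Mitchison cohomology with discrete coefficients with singular cohomology of the classifying space reduces the claim to $H^4(B\sSpin(n),\RZ)\cong\RZ$, where I write $B(-)$ for the geometric realization of the nerve. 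This last group is standard: $B\sSpin(n)$ is $3$-connected with $\pi_4(B\sSpin(n))=\pi_3(\sSpin(n))=\RZ$ for $n\geq 5$, so Hurewicz together with universal coefficients (there being no torsion since $H_3=0$) yields $H^4(B\sSpin(n),\RZ)\cong\RZ$, a generator being the fractional Pontryagin class $\tfrac{1}{2}p_1$.

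For step (ii) I would realize the central extension $\sB\sU(1)\to\CS_\lambda\to\sSpin(n)$ of smooth $2$-groups at the level of classifying spaces. Applying $B(-)$ to this fibration sequence, and using $|\sU(1)|\simeq K(\RZ,1)$ so that $|\sB\sU(1)|\simeq K(\RZ,2)$ and $B|\sB\sU(1)|\simeq K(\RZ,3)$, gives a fibration $K(\RZ,3)\to B\CS_\lambda\to B\sSpin(n)$ whose classifying map $B\sSpin(n)\to K(\RZ,4)$ is precisely the generator $\lambda$ of $H^4(B\sSpin(n),\RZ)$ produced in step (i). Thus $B\CS_\lambda$ is the homotopy fibre of a map inducing an isomorphism on $\pi_4$; chasing the long exact homotopy sequence against $\pi_{\geq 5}(K(\RZ,4))=0$ shows $B\CS_\lambda$ is $4$-connected while $B\CS_\lambda\to B\sSpin(n)$ is an isomorphism on $\pi_i$ for $i\geq 5$. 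Looping once, $\CS_\lambda\to\sSpin(n)$ is $3$-connected and an isomorphism on $\pi_i$ for $i\geq 4$, which is exactly the defining property of the $3$-connected cover $\sString(n)$ in the Whitehead tower of $\sO(n)$.

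The main obstacle is the comparison theorem used in step (i): establishing that Segal--Mitchison cohomology genuinely computes classifying-space cohomology (equivalently, that the two simplicial directions of the double complex of Section \ref{ssec:diff-hypercohomology} assemble into the bar construction of $\sSpin(n)$) and that real coefficients are acyclic on a compact group. This rests on the soft-resolution machinery of Segal and Mitchison together with the compactness-averaging argument, and it is where the genuine analytic input enters; once the generator has been pinned down, the homotopy-theoretic identification of $\CS_\lambda$ with the string group in step (ii) is essentially formal.
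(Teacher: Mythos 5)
The paper does not prove this statement: it is imported verbatim as Theorem~100 of \cite{Schommer-Pries:0911.2483}, so there is no internal proof to compare yours against. That said, your outline is essentially a faithful reconstruction of the argument in the cited reference, and both halves are sound: the reduction $H^3_{\rm SM}(\sSpin(n),\sU(1))\cong H^4_{\rm SM}(\sSpin(n),\RZ)\cong H^4(B\sSpin(n),\RZ)\cong\RZ$ via the exponential sequence, acyclicity of $\FR$ for compact groups, and the comparison with classifying-space cohomology is exactly how Schommer-Pries (following Segal) pins down the group; and the identification of $\CS_\lambda$ with $\sString(n)$ by realizing the extension as a fibration $K(\RZ,3)\to B\CS_\lambda\to B\sSpin(n)$ and chasing the long exact homotopy sequence is his Theorem~100. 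Two points deserve more than the passing acknowledgement you give them. First, beyond the comparison isomorphism itself you need its \emph{naturality with respect to extensions}: the class $\lambda\in H^3_{\rm SM}(\sSpin(n),\sU(1))$ classifying $\CS_\lambda$ in the sense of Theorem \ref{extension_2-groups} must map to the classifying map of the realized fibration under the chain of isomorphisms, otherwise step (ii) does not know that the connecting homomorphism $\pi_4(B\sSpin(n))\to\pi_3(K(\RZ,3))$ is an isomorphism. Second, geometric realization of the simplicial (nerve) construction must be shown to carry the principal $\sB\sU(1)$-bundle structure of $\CE$ over $\CG$ to an honest fibration sequence of topological spaces; this uses that the relevant simplicial spaces are proper/cofibrant, which holds here but is part of the ``analytic input'' alongside the soft-resolution machinery you mention. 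With those two points supplied, your proof is complete and agrees with the source the paper relies on.
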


Let us now work through the details of this string group model $\CS_\lambda$. Given a simplicial cover $V_\bullet$ of $\sSpin(n)$ as constructed in section~\ref{ssec:diff-hypercohomology}, the $3$-cocycle $\lambda$ contains the non-trivial smooth maps
\begin{equation}
 \lambda^{0,3}:V_3^{[1]}\rightarrow \sA~,~~~\lambda^{1,2}:V_2^{[2]}\rightarrow \sA\eand \lambda^{2,1}:V_1^{[3]}\rightarrow \sA~.
\end{equation}
As remarked in section ~\ref{ssec:diff-hypercohomology}, the map $\lambda^{2,1}$ is in fact a \v Cech 2-cocycle and  defines an $\sA$-bundle gerbe over $\sSpin(n)$. Identifying bundle gerbes with central groupoid extensions, we obtain the groupoid underlying the smooth 2-group corresponding to $\lambda$:
\begin{equation}
\CS_{\lambda}~:=~ V_1^{[2]}\times \sA\rightrightarrows V_1~.
\end{equation}
Here the source, target and identity maps are given by
\begin{equation}\label{eq_source_target_identity_maps}
\sfs(v_0,v_1,a)= v_1~, ~~~ \sft(v_0,v_1,a)= v_0 \eand \sfid(v_0)= (v_0,v_0,0)~,
\end{equation} 
and the invertible composition is defined as 
\begin{equation}\label{vertical_multiplication}
(v_0,v_1,a_0)\circ (v_1,v_2,a_1):=(v_0,v_2,a_0+a_1+\lambda^{2,1}(v_0,v_1,v_2))
\end{equation}
for $v_{0,1,2}\in V_1$ and $a_0, a_1\in \sA$.
It remains to specify the 2-group structure on the Lie groupoid $\CS_{\lambda}$. 

Note that there is a Lie groupoid functor $(\sff_0,\sff_2)$ from the Lie groupoid $\CC_2:=(V_2^{[2]}\times \sA^{\times 2}\rightrightarrows V_2)$ to $\CS_\lambda\times \CS_\lambda$. This functor is a weak equivalence in $\CatManCat$ and upon bundlization, we can invert it.  The same is true for the functor $(\sff_0\sff_0,\sff_2\sff_0,\sff_2\sff_2)$ from the Lie groupoid $\CC_3:=(V_3^{[2]}\times \sA^{\times 3}\rightrightarrows V_3)$ to $\CS_\lambda^{\times 3}$. This yields bibundles 
\begin{equation}\label{eq:equivalences_inv}
B_2:\CS_\lambda\times \CS_\lambda \rightarrow \CC_2\eand B_3:\CS_\lambda\times \CS_\lambda\times \CS_\lambda \rightarrow \CC_3~.
\end{equation}

Furthermore, we have the Lie groupoid functors
\begin{equation}
 \CC_2\stackrel{\sfm}{\longrightarrow}\CS_\lambda~,~~~\CC_3\stackrel{p_1}{\longrightarrow}\CS_\lambda\eand \CC_3\stackrel{p_2}{\longrightarrow}\CS_\lambda~,
\end{equation}
where
\begin{equation}
\begin{aligned}
 \sfm(y_0,y_1,a_0,a_1)&:=(\sff_1(y_0),\sff_1(y_1),a_0+a_1+\lambda^{1,2}(y_0,y_1))~,\\
 p_1(z_0,z_1,a_0,a_1,a_2)&:=(\sff_1\sff_1(z_0),\sff_1\sff_1(z_1),a_0+a_1+a_2+\sff_1^*\lambda^{1,2}(z_0,z_1)+\sff_3^*\lambda^{1,2}(z_0,z_1))~,\\
 p_2(z_0,z_1,a_0,a_1,a_2)&:=(\sff_1\sff_2(z_0),\sff_1\sff_2(z_1),a_0+a_1+a_2+\sff_0^*\lambda^{1,2}(z_0,z_1)+\sff_2^*\lambda^{1,2}(z_0,z_1))
\end{aligned}
\end{equation}
for $y_{0,1}\in V_2^{[2]}$, $z_{0,1}\in V_3^{[2]}$ and $a_{0,1,2}\in \sA$. There is a natural isomorphism $T: V_{3}\longrightarrow V_{1}^{[2]} \times \sU(1)$ defined by 
\begin{align} \label{associator_naturaltrnas}
T(z_0) \ &= \ (\sff_1\sff_1(z_0), \sff_1\sff_2(z_0), \lambda^{0,3}(z_0))~, ~~~z_0 \in V_3~~,
\end{align}
which satisfies
\begin{align}\label{naturality_associator}
  p_2(z_0,z_1,a_0,a_1,a_2)\circ T(z_1)=T(z_0)\circ p_1(z_0,z_1,a_0,a_1,a_2)
\end{align}
due to $\delta_{\check C}\lambda^{0,3}=\delta_N \lambda^{1,2}$.

After bundlization and composition with the bibundles~\eqref{eq:equivalences_inv}, we obtain bibundles 
\begin{equation} \label{eq:horizpntal_mult}
 \begin{aligned}
  B_\sfm:&~~\CS_\lambda\times \CS_\lambda\rightarrow \CS_\lambda~,\\
  B_{p_1}:&~~\CS_\lambda\times \CS_\lambda\times \CS_\lambda \rightarrow \CS_\lambda~,\\
  B_{p_2}:&~~\CS_\lambda\times \CS_\lambda\times \CS_\lambda \rightarrow \CS_\lambda~,
 \end{aligned}
\end{equation}
and the natural isomorphism $T$ yields a bibundle isomorphism $\sfa: B_{p_1}\Rightarrow B_{p_2}$. Because the bibundles $B_{p_1}$ and $B_{p_2}$ can be identified with $B_\sfm\otimes(B_\sfm\times 1)$ and $B_\sfm\otimes(1\times B_\sfm)$, respectively, $\sfa$ is indeed the associator. Here, $\sfa$ is completely determined by $T$ since $\sfa$ is the horizontal composition of $T$ with the identity isomorphism on $B_3$.

It remains to define the unit $\sfe$ as well as the left- and right-unitors $\sfl$ and $\sfr$. Both unitors are trivial (i.e.\ the identity isomorphism) and up to isomorphism, the unit is uniquely defined as the bundlization $\sfe$ of the Lie groupoid functor
\begin{equation}
  (*\rightrightarrows*) \longrightarrow \CS_{\lambda}~,
\end{equation}
which takes $*$ to a $v_0\in V^{[1]}_{1,p}$ with $\pi(v_0)=\unit_\sG$.

Let us now briefly verify that we indeed constructed a smooth 2-group. For this, we need to check that the bibundle $(B_{p_1},B_\sfm)$ is an equivalence and that the internal pentagon identity is satisfied. The former is relatively clear, because $B_2$ and thus also $(B_{p_1},B_2)$ are bibundle equivalences. One then readily checks that
\begin{equation}
(\id\times\hat \sfm):~\CS_\lambda\times\CC_2~\rightarrow~\CS_\lambda\times \CS_\lambda
\end{equation}
is a bibundle equivalence. It is obvious that the associator only affects the $\sA$-part of the Lie groupoids $\CS_\lambda$, $\CC_2$ and $\CC_3$, and therefore the internal pentagon identity reduces to the equation
\begin{equation}\label{eq:lambda-03}
\begin{aligned}
 &\lambda^{0,3}(v_1,v_2,v_3)+\lambda^{0,3}(v_0,v_1v_2,v_3)+\lambda^{0,3}(v_0,v_1,v_2)\\
 &\hspace{3cm}=\lambda^{0,3}(v_0v_1,v_2,v_3)+\lambda^{0,3}(v_0,v_1,v_2v_3)~,
\end{aligned}
\end{equation}
where $v_{0,1,2,3}\in V_1^{[1]}$. This is precisely the equation $\delta_N \lambda^{0,3}=0$, which holds since $\lambda$ is a Segal--Mitchison 3-cocycle. Finally, note that the interchange law, which is the compatibility condition for the vertical and horizontal multiplications, follows from $\delta_N  \lambda^{2,1} = \delta_{\vC}\lambda^{1,2}$. 

We conclude this section with the following two remarks:
\begin{remark}
 While we are mostly interested in the smooth 2-group model of the string group $\CS_\lambda$ given by the central extension of the smooth 2-group $\sSpin(n)\rightrightarrows \sSpin(n)$ by $\sA\rightrightarrows *$, the above construction of this extension as well as most of our following discussion readily generalizes to arbitrary Lie groups $\sG$.
\end{remark}
\begin{remark}
 Multiplicative bundle gerbes as defined in~\cite{Carey:0410013} are special cases of the above construction of a 2-group object internal to $\CatBibun$ from a Segal--Mitchison 3-cocycle. 
\end{remark}

\subsection{From the smooth 2-group model to a weak 2-group model}\label{ssec:weak_2_group_model}

Recall that it has been shown in~\cite{Zhu:0609420} that smooth 2-groups are equivalent  Lie 2-quasigroup\-oids with a single object, which are given by certain Kan simplicial manifolds. The difference to a weak Lie 2-group, which is a weak 2-group object internal to $\CatManCat$, is that in the latter case, horizontal composition of objects and morphisms yields unique objects, which is not true in the case of Lie 2-quasigroupoids. 

In particular, consider horizontal composition of two objects $(v_0,v_1)$ by the composition bibundle $B_\sfm$ in the smooth string 2-group model. The result is a set of isomorphic objects given by $\{\tau(b)|b\in B_\sfm:\sigma(b)=(v_0,v_1)\}$. If the simplicial cover $V_\bullet$ used in the construction of the string group model consists of contractible patches $V_1$, then the bibundle $B_\sfm$, and in particular the bibundle $\CC_2$ is trivial over $V_1\times V_1$ and allows for a global section. By proposition~\ref{prop:bundlization}, $B_\sfm$ is then isomorphic to a bundlization. 

To give the underlying multiplication functor explicitly, we proceed as follows. Without restriction in the cases we are interested in, we assume a simplicial cover $V_\bullet$ as constructed in example~\ref{ex:cover}. In particular, the simplicial index set $I_\bullet$ has now a total order with each subset of the simplicial set having a lowest element. We can now use these lowest elements to fix ambiguities, like defining preferred horn fillers and fixing a unique identity object in $\CS_\lambda$.

First, consider the surjective submersion $(\sff_2,\sff_0):V_2~\rightarrow~V_1\times V_1$. For each element $(v_0,v_1)\in V_1\times V_1$, we can now choose the element of $V_2$ over $(v_0,v_1)$ with the lowest position according to the obvious lexicographic ordering of patch multiindices. This defines a function $\phi^{[1]}_2:(V_1\times V_1)\rightarrow V_2$ satisfying
\begin{equation}
 \sff^2_0 \phi^{[1]}_2(v_0,v_1)=v_1\eand \sff^2_2\phi^{[1]}_2(v_0,v_1)=v_0~.
\end{equation}
In the language of quasigroupoids and Kan complexes, the function $\phi_2$ picks a horn filler in $V_2$ for the horn $(v_0,v_1)\in  V_1\times V_1$. Applying the face map $\sff^2_1$ to this horn filler then yields a preferred horizontal composition:
\begin{equation}
 v_0\otimes v_1:=\sff^2_1\phi^{[1]}_2(v_0,v_1)~.
\end{equation}
Since the lexicographic ordering on $V_2$ arises from that on $V_1$, we evidently have a relation between $\phi^{[1]}_2:V_1\times V_1\rightarrow V_2$ and $\phi_1:\sG\rightarrow V_1$:
\begin{proposition}
  The horizontal composition is completely induced from the product on $\sG$:
  \begin{equation}
    v_0\otimes v_1:=\sff^2_1\phi^{[1]}_2(v_0,v_1)=\phi_1(\pi(v_0)\pi(v_1))
  \end{equation}
  for all $v_{0,1}\in V_1$.
\end{proposition}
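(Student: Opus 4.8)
The plan is to unwind the two defining minimizations — the one producing $\phi_1$ on $V_1$ and the one producing $\phi^{[1]}_2$ on $V_2$ — and show that they agree after applying the middle face map $\sff^2_1$. First I would recall that the total space $V_1=\bigsqcup_{i\in I_1}V_{1,i}$ is a disjoint union over the finite, totally ordered index set $I_1$, so that an element of $V_1$ is a point $g\in\sG$ together with a label $i\in I_1$ satisfying $g\in V_{1,i}$; by definition $\phi_1(g)$ is the lift carrying the least such label. Likewise, an element of $V_2$ is a $2$-simplex $s\in\sG\times\sG$ together with a multi-index $(i_1,i_2,i_3)\in I_2$, subject to $\sff^2_0(s)\in V_{1,i_1}$, $\sff^2_1(s)\in V_{1,i_2}$ and $\sff^2_2(s)\in V_{1,i_3}$. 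Reading off the defining intersection $V_{2,(i_1,i_2,i_3)}=(\sff^2_0)^{-1}(V_{1,i_1})\cap(\sff^2_1)^{-1}(V_{1,i_2})\cap(\sff^2_2)^{-1}(V_{1,i_3})$, the three slots of the $I_2$-index are precisely the patch labels of the three faces of $s$, and they are constrained independently of one another.

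The key observation I would establish next is that fixing the two outer faces rigidifies everything except the middle label. The conditions $\sff^2_0\phi^{[1]}_2(v_0,v_1)=v_1$ and $\sff^2_2\phi^{[1]}_2(v_0,v_1)=v_0$ pin down the underlying $2$-simplex $s$ (its two outer faces must be $\pi(v_1)$ and $\pi(v_0)$), and — because equality in a disjoint union forces equality of components — they also fix $i_1$ and $i_3$ to be the patch labels $j_1,j_0$ of $v_1,v_0$. Since the three intersection conditions cut out the three face-labels independently, the admissible multi-indices over $(v_0,v_1)$ are exactly the triples $(j_1,i_2,j_0)$ with $\sff^2_1(s)\in V_{1,i_2}$. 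In the lexicographic order on $I_2$ (most significant slot $i_1$, then $i_2$, then $i_3$) the only free slot is the middle one, so minimizing the position of $\phi^{[1]}_2(v_0,v_1)$ is equivalent to minimizing $i_2$ subject to $\sff^2_1(s)\in V_{1,i_2}$.

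Finally I would apply $\sff^2_1$. Because $\pi$ is a simplicial map and $\sff^2_1$ is the composition (group-multiplication) face of the nerve, one has $\sff^2_1(s)=\pi(v_0)\pi(v_1)=:g$, so $\sff^2_1\phi^{[1]}_2(v_0,v_1)$ is the point $g$ carrying the least patch label $i_2$ with $g\in V_{1,i_2}$ — which is exactly the defining property of $\phi_1(g)$. This yields $v_0\otimes v_1=\sff^2_1\phi^{[1]}_2(v_0,v_1)=\phi_1(\pi(v_0)\pi(v_1))$, as claimed; existence of all the minima is automatic since each $I_p$ is finite and totally ordered.

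The step I expect to be the main (if modest) obstacle is the middle paragraph: making precise the assertion, quoted in the text as ``the lexicographic ordering on $V_2$ arises from that on $V_1$,'' namely that the patch label of a $2$-simplex genuinely is the triple of the patch labels of its three faces, chosen independently, so that fixing $\sff^2_0$ and $\sff^2_2$ leaves only the middle slot to optimize. Everything else is bookkeeping with the definition of the simplicial cover and of the section $\phi_1$.
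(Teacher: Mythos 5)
Your argument is correct and is precisely the unwinding of what the paper leaves implicit: the paper states this proposition without proof, asserting only that it is ``evident'' because the lexicographic ordering on $V_2$ arises from that on $V_1$, and your middle paragraph supplies exactly the missing justification — that the fiber of $(\sff^2_2,\sff^2_0)$ over $(v_0,v_1)$ is parameterized by the middle label $i_2$ alone, so lexicographic minimization in $I_2$ with the outer slots pinned reduces to minimizing the patch label of $\sff^2_1(s)=\pi(v_0)\pi(v_1)$, which is the defining property of $\phi_1$. This matches the intended argument; the only cosmetic caveat is the paper's slightly loose face-map conventions (which face of the nerve is $g_1$ versus $g_2$), which affect at most the order of the product and not the substance of the proof.
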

\begin{corollary}\label{cor:product_simplify}
 We have the following identities:
 \begin{equation}
 \begin{aligned}
  &\pi(v_0\otimes v_1)=\pi(v_0)\pi(v_1)~,\\
  &v_0\otimes v_1\cong v_2\otimes v_3~~~\Rightarrow~~~v_0\otimes v_1=v_2\otimes v_3~,\\
  &(v_0\otimes v_1)\otimes v_2=v_1\otimes (v_1\otimes v_2)
 \end{aligned}
 \end{equation}
 for all $v_{0,1,2}\in V_1$.
\end{corollary}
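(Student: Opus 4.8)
The plan is to deduce all three identities formally from the preceding proposition, which identifies the preferred horizontal composition with the group product pulled back through the chosen section, $v_0\otimes v_1=\phi_1(\pi(v_0)\pi(v_1))$, together with the single structural fact that $\phi_1$ is a global section of the covering map $\pi\colon V_1\to\sG$, so that $\pi\circ\phi_1=\id_\sG$. For the first identity I would simply apply $\pi$ to the proposition's formula; since $\phi_1$ is a section this gives $\pi(v_0\otimes v_1)=\pi(\phi_1(\pi(v_0)\pi(v_1)))=\pi(v_0)\pi(v_1)$.

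The only genuinely substantive step is a short lemma describing when two objects of $\CS_\lambda$ are isomorphic. Reading off the source and target maps from $\CS_\lambda=V_1^{[2]}\times\sA\rightrightarrows V_1$ in \eqref{eq_source_target_identity_maps}, a morphism between $v,v'\in V_1$ is an element $(v,v',a)$ with $(v,v')\in V_1^{[2]}=V_1\times_\sG V_1$; hence $v\cong v'$ precisely when $\pi(v)=\pi(v')$. Granting this, $v_0\otimes v_1\cong v_2\otimes v_3$ yields $\pi(v_0\otimes v_1)=\pi(v_2\otimes v_3)$, and the first identity rewrites this as $\pi(v_0)\pi(v_1)=\pi(v_2)\pi(v_3)$; applying the single-valued function $\phi_1$ to both sides and invoking the proposition once more gives $v_0\otimes v_1=v_2\otimes v_3$.

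Finally, I would expand both sides of the third identity (whose right-hand side I read as $v_0\otimes(v_1\otimes v_2)$, the printed $v_1\otimes(v_1\otimes v_2)$ being an evident typo) using the proposition and the first identity:
\begin{equation*}
(v_0\otimes v_1)\otimes v_2=\phi_1\big(\pi(v_0\otimes v_1)\,\pi(v_2)\big)=\phi_1\big(\pi(v_0)\pi(v_1)\pi(v_2)\big)~,
\end{equation*}
and likewise $v_0\otimes(v_1\otimes v_2)=\phi_1(\pi(v_0)\pi(v_1)\pi(v_2))$, so that associativity of the product in $\sG$ closes the argument. There is essentially no computational obstacle here: the only point needing care is the isomorphism lemma, which merely unpacks the definition of $\CS_\lambda$ together with the fact that $V_1^{[2]}$ is the fibre product over $\sG$. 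Everything else is a formal consequence of the section property of $\phi_1$ and of group associativity, so the statement is genuinely a corollary of the preceding proposition.
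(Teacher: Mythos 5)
Your proof is correct and follows essentially the same route as the paper's: the first identity from $\pi\circ\phi_1=\id_\sG$, and the other two as direct consequences (the paper leaves these one-liners implicit, whereas you usefully spell out that objects of $\CS_\lambda$ are isomorphic precisely when they lie over the same point of $\sG$, since $V_1^{[2]}=V_1\times_\sG V_1$). Your reading of the third identity's right-hand side as $v_0\otimes(v_1\otimes v_2)$ is indeed the intended one.
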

\begin{proof}
 The first relation follows from the fact that $\phi_1$ is a section of $\pi$ and therefore $\pi\circ \phi_1=\id_\sG$. The second and third relations are then direct consequences of the first one.
\end{proof}
\noindent Note that the above corollary does not imply that the associator is trivial; it merely has the same source and target. We can now readily extend $\phi_2^{[1]}:V_1\times V_1\rightarrow V_2$ to higher fibered products as done in the following lemma:
\begin{lemma}
 The map $\phi_2^{[2]}:V_1^{[2]}\times V_1^{[2]}\rightarrow V_2^{[2]}$ with
\begin{equation}
 \phi_2^{[2]}\big((v_0,v_1),(v_2,v_3)\big):=\big(\phi^{[1]}_2(v_0,v_2),\phi^{[1]}_2(v_1,v_3))\big)~,
\end{equation}
where $v_{0,1,2,3}\in V_1$ with $\pi(v_0)=\pi(v_1)$ and $\pi(v_2)=\pi(v_3)$, renders the following diagram commutative:
\begin{equation}
     \xymatrixcolsep{6pc}
     \xymatrixrowsep{2pc}
    \myxymatrix{
      \sG\times \sG & V_2\ar@{->}[l]^{\pi_2} & V_2^{[2]} \ar@<-.5ex>[l]\ar@<.5ex>[l] \\
      \sG\times \sG \ar@{->}[u]^{=} & V_1\times V_1\ar@{->}[l]^{\pi_1\times \pi_1} \ar@{->}[u]^{\phi^{[1]}_2}& V_1^{[2]}\times V_1^{[2]} \ar@<-.5ex>[l]\ar@<.5ex>[l]\ar@{->}[u]^{\phi_2^{[2]}}
    }
\end{equation}
\end{lemma}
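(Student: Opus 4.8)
The plan is to verify the two squares of the diagram separately, after first checking that the proposed map $\phi^{[2]}_2$ genuinely takes values in the fibered product $V_2^{[2]}=V_2\times_{\sG\times\sG}V_2$ and not merely in $V_2\times V_2$. The only substantive input is the compatibility of $\phi^{[1]}_2$ with the cover projections; once that is in hand, both the well-definedness and the commutativity reduce to componentwise bookkeeping, so I expect no deep difficulty beyond keeping the ordering conventions straight.

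First I would record the content of the left square, $\pi_2\circ\phi^{[1]}_2=\pi_1\times\pi_1$. Since $V_\bullet$ covers $\sG_\bullet=N(\sB\sSpin(n))$ as a simplicial manifold, the cover projection commutes with face maps, $\pi\circ\sff^2_i=\sff^2_i\circ\pi_2$. Writing $\pi_2(y)=(g_1,g_2)$ and using the nerve face maps $\sff^2_0(g_1,g_2)=g_1$ and $\sff^2_2(g_1,g_2)=g_2$, one obtains $\pi_2(y)=(\pi(\sff^2_0 y),\pi(\sff^2_2 y))$ for every $y\in V_2$. In particular $\pi_2(\phi^{[1]}_2(a,b))$ depends only on the pair $(\pi(a),\pi(b))$; feeding in the defining relations $\sff^2_0\phi^{[1]}_2(v_0,v_1)=v_1$ and $\sff^2_2\phi^{[1]}_2(v_0,v_1)=v_0$ pins it down explicitly, consistent with the conventions fixed around Corollary~\ref{cor:product_simplify}.

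Next I would use this to establish well-definedness. For $((v_0,v_1),(v_2,v_3))\in V_1^{[2]}\times V_1^{[2]}$ the fiber conditions read $\pi(v_0)=\pi(v_1)$ and $\pi(v_2)=\pi(v_3)$. By the previous step $\pi_2(\phi^{[1]}_2(v_0,v_2))$ is a function of $(\pi(v_0),\pi(v_2))$ alone, hence equals $\pi_2(\phi^{[1]}_2(v_1,v_3))$, which is the function of $(\pi(v_1),\pi(v_3))=(\pi(v_0),\pi(v_2))$. Therefore the pair $(\phi^{[1]}_2(v_0,v_2),\phi^{[1]}_2(v_1,v_3))$ lies in $V_2^{[2]}$, so $\phi^{[2]}_2$ is well-defined; smoothness is automatic, as $\phi^{[2]}_2$ is assembled componentwise from $\phi^{[1]}_2$ on a fibered product.

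The right square is then immediate. Denoting the two face maps $V_1^{[2]}\times V_1^{[2]}\rightrightarrows V_1\times V_1$ by $((v_0,v_1),(v_2,v_3))\mapsto(v_0,v_2)$ and $\mapsto(v_1,v_3)$, and the two projections $V_2^{[2]}\rightrightarrows V_2$ by $(y,y')\mapsto y$ and $\mapsto y'$, the definition $\phi^{[2]}_2((v_0,v_1),(v_2,v_3))=(\phi^{[1]}_2(v_0,v_2),\phi^{[1]}_2(v_1,v_3))$ shows at once that postcomposing $\phi^{[2]}_2$ with the first (resp.\ second) projection equals precomposing $\phi^{[1]}_2$ with the first (resp.\ second) face map. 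Both face-map squares therefore commute, and together with the left square the diagram is established. The sole point demanding care is the transposition convention relating the arguments of $\phi^{[1]}_2$ to the two factors of $\sG\times\sG$; once this is fixed consistently with the face-map conventions of the nerve, the verification goes through verbatim, and this bookkeeping is the only obstacle.
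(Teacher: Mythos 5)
The paper states this lemma without proof, so there is nothing to compare against; your direct verification is correct and is precisely the routine check the authors left implicit: the compatibility $\pi\circ\sff^2_i=\sff^2_i\circ\pi_2$ built into the definition of a simplicial cover gives $\pi_2(y)=(\pi(\sff^2_0 y),\pi(\sff^2_2 y))$, which simultaneously yields the left square and the well-definedness of $\phi_2^{[2]}$ into the fibered product $V_2^{[2]}$, while the right squares hold by construction of $\phi_2^{[2]}$. Your closing caveat is warranted: with the paper's stated face-map conventions one literally obtains $\pi_2(\phi_2^{[1]}(v_0,v_1))=(\pi(v_1),\pi(v_0))$, so the left square commutes only after composing with the flip on $\sG\times\sG$ (equivalently, after reading the submersion $(\sff_2,\sff_0)$ as fixing the opposite ordering of factors) --- a wrinkle in the paper's conventions rather than a gap in your argument.
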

\noindent The maps $\phi^{[1]}_2$ and $\phi_2^{[2]}$ define a functor internal to $\CatMan$,
\begin{equation}
     \xymatrixcolsep{6pc}
     \xymatrixrowsep{2pc}
    \myxymatrix{
     V_2^{[2]}\times \sA\times \sA \ar@<-.5ex>[d] \ar@<.5ex>[d] &
V_1^{[2]}\times V_1^{[2]}\times \sA\times \sA \ar@<-.5ex>[d]\ar@<.5ex>[d]\ar@{->}[l]^-{\phi^{[2]}_2\times\id_{\sA\times \sA}}\\
    V_2 & V_1\times V_1 \ar@{->}[l]^{\phi^{[1]}_2}
    }
\end{equation}
and we can replace the bibundle $\CC_2$ with the bundlization of this functor, making horizontal composition unique also for morphisms.

We also have a surjective submersion $(\sff^3_0,\sff^3_2,\sff^3_3):V_3\rightarrow V_2\times V_2\times V_2$, and we define a map $\phi^{[1]}_3:V_1\times V_1\times V_1$ as the horn filler of $\phi_2(v_1,v_2)$, $\phi_2(v_0,v_1\otimes v_2)$ and $\phi_2(v_0,v_1)$ with the lowest lexicographic position. It satisfies
\begin{equation}
\begin{aligned}
 \sff^3_0 \phi_3(v_0,v_1,v_2)&=\phi_2(v_1,v_2)~,~~~&\sff^2_0\sff^3_0 \phi_3(v_0,v_1,v_2)&=v_2~,\\
 \sff^3_2\phi_3(v_0,v_1,v_2)&=\phi_2(v_0,v_1\otimes v_2)~,~~~&\sff^2_2\sff^3_0\phi_3(v_0,v_1,v_2)&=v_1~,\\
 \sff^3_3\phi_3(v_0,v_1,v_2)&=\phi_2(v_1,v_2)~,~~~&\sff^2_2\sff^3_2\phi_3(v_0,v_1,v_2)&=v_0~.
\end{aligned}
\end{equation}

Altogether, we arrive at the following theorem.
\begin{theorem}
 The Lie groupoid $\CS_\lambda:=V^{[2]}_1\times \sA\rightrightarrows V_1$, together with the identity-assignment
 \begin{subequations}
 \begin{equation}
    I: (*\rightrightarrows *) \rightarrow \CS_\lambda~,~~~ I_0(*):=\unit_{\CS_\lambda}:=\phi_1(\unit_\sG)~,~~~ I_1(*):=\id_{\unit_{\CS_\lambda}}:=(\unit_{\CS_\lambda},\unit_{\CS_\lambda},0)~,
 \end{equation}
 the horizontal composition 
  \begin{equation}
  \begin{aligned}
    v_0\otimes v_1 &:=\sff^2_1\phi_2(v_0,v_1)=\phi_1(\pi(v_0)\pi(v_1))~,\\
    (v_0,v_1,a_0)\otimes (v_2,v_3,a_1)&:= \big(v_0\otimes v_2,v_1\otimes v_3,a_0+a_1+\lambda^{1,2}(\phi_2(v_0,v_2),\phi_2(v_1,v_3))\big)~,
  \end{aligned}
  \end{equation}
  the vertical composition
  \begin{equation}
    (v_0,v_1,a_0)\circ (v_1,v_2,a_1):=(v_0,v_2,a_0+a_1+\lambda^{2,1}(v_0,v_1,v_2))~,
    \end{equation}
  the unitors
  \begin{equation}
   \sfl_v=(v,\unit_{\CS_\lambda}\otimes v,0)=(v,\phi_1(\pi(v)),0)~,~~~\sfr_v=(v,v\otimes \unit_{\CS_\lambda},0)=(v,\phi_1(\pi(v)),0)
  \end{equation}
  and associator
  \begin{equation}
  \begin{aligned}
    \sfa_{v_0,v_1,v_2}&=\big(~\sff_1\sff_2(\phi_3(v_0,v_1,v_2))~,~\sff_1\sff_1(\phi_3(v_0,v_1,v_2))~,~\lambda^{0,3}(\phi_3(v_0,v_1,v_2))~\big)\\
      &=\big(~v_0\otimes v_1\otimes v_2~,~v_0\otimes v_1 \otimes v_2~,~\lambda^{0,3}(\phi_3(v_0,v_1,v_2))~\big)~,
  \end{aligned}
  \end{equation}
 \end{subequations}
  where $v_{0,1,2,3}\in V_1$ and $a_{0,1}\in \sA$, forms a weak Lie 2-group, which we denote by $\CS_\lambda^{\rm w}$.
\end{theorem}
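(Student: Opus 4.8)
The plan is to verify that the listed data satisfy the axioms of a weak $2$-group object internal to $\CatManCat$, exploiting that $\CS_\lambda$ is already a smooth $2$-group in the sense of \cite{Schommer-Pries:0911.2483}, as established in Section~\ref{Schommer-Pries_model}. The essential mechanism is that, for a simplicial cover $V_\bullet$ built from contractible patches as in Example~\ref{ex:cover}, the bibundles $\CC_2$, $\CC_3$, $B_\sfm$, $B_{p_1}$, $B_{p_2}$ all admit global sections; by Proposition~\ref{prop:bundlization} they are therefore isomorphic to bundlizations, and the lexicographically minimal horn fillers $\phi^{[1]}_2$, $\phi^{[2]}_2$ and $\phi_3$ furnish these sections \emph{coherently}. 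Consequently the horizontal multiplication and the associator, which in $\CS_\lambda$ are only defined up to bibundle isomorphism, descend to honest internal functors and internal natural transformations, and the weak $2$-group axioms are inherited from the already-verified smooth $2$-group structure.

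First I would check that $\otimes$ is a functor internal to $\CatMan$. On objects, $v_0\otimes v_1=\phi_1(\pi(v_0)\pi(v_1))$ lies in $V_1$ by construction, and the preceding lemma together with its commutative diagram shows that the morphism-level map is compatible with $\sfs$ and $\sft$ through the relations $\sff^2_0\phi^{[1]}_2(v_0,v_1)=v_1$ and $\sff^2_2\phi^{[1]}_2(v_0,v_1)=v_0$. Preservation of identities is immediate from $I_1(*)=(\unit_{\CS_\lambda},\unit_{\CS_\lambda},0)$ together with the normalization $\lambda^{1,2}(\phi_2(v,v'),\phi_2(v,v'))=0$ of the cocycle on coinciding arguments. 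The one substantive point is compatibility of $\otimes$ with the vertical composition $\circ$, i.e.\ the interchange law; expanding both sides on generic morphisms $(v_0,v_1,a_0)$, $(v_1,v_2,a_1)$ and their horizontal partners produces a discrepancy in the $\sA$-direction equal to the combination $\delta_N\lambda^{2,1}-\delta_{\vC}\lambda^{1,2}$, which vanishes because $\lambda$ is a Segal--Mitchison $3$-cocycle, as recorded at the end of Section~\ref{Schommer-Pries_model}.

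Next I would verify that $\sfa$ is a natural isomorphism and that the coherence conditions hold. By Corollary~\ref{cor:product_simplify}, horizontal composition is \emph{strictly} associative and unital on objects, so both functors $(\,\cdot\otimes\cdot\,)\otimes\cdot$ and $\cdot\otimes(\,\cdot\otimes\cdot\,)$ share the object-map $v_0\otimes v_1\otimes v_2$; the associator is therefore an automorphism purely in the $\sA$-direction, with value $\lambda^{0,3}(\phi_3(v_0,v_1,v_2))$. Its naturality is the descended form of the naturality relation \eqref{naturality_associator} for $T$, which rested on $\delta_{\vC}\lambda^{0,3}=\delta_N\lambda^{1,2}$. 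The pentagon identity reduces, exactly as in the computation leading to \eqref{eq:lambda-03}, to $\delta_N\lambda^{0,3}=0$, while the triangle identity follows from the triviality of the unitors $\sfl$, $\sfr$ together with the normalization of $\lambda^{0,3}$ on degenerate simplices. Both hold because $\lambda$ is a normalized $3$-cocycle.

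Finally, the group conditions: every morphism of $\CS_\lambda$ is invertible since $\CS_\lambda$ is a groupoid and the vertical composition \eqref{vertical_multiplication} is invertible; and every object $v\in V_1$ is weakly invertible, a weak inverse being $\phi_1(\pi(v)^{-1})$, because $\pi\big(v\otimes\phi_1(\pi(v)^{-1})\big)=\unit_\sG=\pi(\unit_{\CS_\lambda})$ and any two objects in the same $\pi$-fibre are connected by an isomorphism in $\CS_\lambda$. I expect the main obstacle to be the coherent bookkeeping of the horn-filler choices: one must confirm that the lexicographically minimal sections $\phi^{[1]}_2$, $\phi^{[2]}_2$, $\phi_3$ are mutually compatible, so that $\otimes$ genuinely is a functor and $\sfa$ genuinely is natural, and then track precisely how the three components $\lambda^{2,1},\lambda^{1,2},\lambda^{0,3}$ recombine into the interchange law, the naturality of $\sfa$, and the pentagon. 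Once this dictionary between the cocycle conditions $\delta_{\rm SM}\lambda=0$ and the weak $2$-group axioms is set up, the theorem follows.
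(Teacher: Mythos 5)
Your proposal follows essentially the same route as the paper, which states this theorem as the culmination of the preceding construction of preferred horn fillers (giving global sections that turn $B_\sfm$ and the associator bibundle into bundlizations) and relies on the verifications already recorded for $\CS_\lambda$ in Section \ref{Schommer-Pries_model}: the interchange law from $\delta_N\lambda^{2,1}=\delta_{\check{C}}\lambda^{1,2}$ and the pentagon from $\delta_N\lambda^{0,3}=0$. One correction: the unitors of $\CS_\lambda^{\rm w}$ are \emph{not} trivial --- the paper remarks on exactly this point immediately after the theorem, since $\sfl_v=(v,\phi_1(\pi(v)),0)$ is a non-identity morphism whenever $v\neq\phi_1(\pi(v))$. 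The triangle identity nevertheless holds for the reason you implicitly use elsewhere: the unitors' $\sA$-components vanish, the object-level composition is strictly associative and unital by Corollary \ref{cor:product_simplify}, and $\lambda^{1,2}$ and $\lambda^{0,3}$ are normalized on degenerate simplices.
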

\noindent Note that since the unitors are non-trivial, $\CS_\lambda^{\rm w}$ is {\em not} a semistrict Lie 2-group in the sense of~\cite{Jurco:2014mva}.

This description of the smooth string 2-group model as a weak Lie 2-group will simplify the explicit computations leading to the cocycle description of principal $\CS_\lambda$-bundles with connection later on.

\section{Differentiation of the string 2-group model}\label{sec:differentiation}

\subsection{Strong homotopy Lie algebras and \texorpdfstring{N$Q$}{NQ}-manifolds}

Clearly, any reasonable differentiation prescription for a categorified Lie group should yield a categorified Lie algebra. The most general notion of categorification of a Lie algebra is that of a weak Lie 2-algebra~\cite{Roytenberg:0712.3461}. Here, we can restrict ourselves to so-called semistrict Lie $n$-algebras, which in turn are categorically equivalent to $n$-term $L_\infty$-algebras~\cite{Baez:2003aa}. In the differentiation method we will use later, the latter will appear in their dual form as N$Q$-manifolds.
\begin{definition}
 An \uline{N$Q$-manifold} is an $\NN$-graded manifold endowed with a vector field $Q$ of degree~1 such that $Q^2=0$. We will refer to the vector field $Q$ as the \uline{Chevalley--Eilenberg differential}.
\end{definition}
N$Q$-manifolds are in one-to-one correspondence with $L_\infty$-algebroids\footnote{Some care has to be taken when homogeneous parts of the N$Q$-manifold become infinite dimensional.}. To get strong homotopy Lie algebras, we need the following restriction.
\begin{definition}
 An \uline{$L_\infty$-algebra} is an N$Q$-manifold concentrated in positive degrees. An \uline{$n$-term $L_\infty$-algebra} is an N$Q$-manifold concentrated in degrees $\{1,\dots,n\}$.
\end{definition}
\noindent For $n=1$, this yields the ordinary Chevalley--Eilenberg description of a Lie algebra.

Let us describe 2-term $L_\infty$-algebras, which are categorically equivalent to semistrict Lie 2-algebras, in more detail. They will play the role of a categorified gauge Lie algebra in our later discussion.
\begin{example}
 Let $X$ and $Y$ be complex vector spaces with coordinates $x^\alpha$ and $y^a$. Then $X[1]\oplus Y[2]$ is an N$Q$-manifold, where the notation implies that elements in $X[1]$ and $Y[2]$ come with homogeneous grading~1 and~2, respectively. The vector field $Q$ is necessarily of the form
 \begin{equation}
  Q=-f^{\alpha}_ay^a\der{x^\alpha}-\tfrac12 f_{\alpha\beta}^\gamma x^\alpha x^\beta \der{x^\gamma}-f^a_{\alpha b}x^\alpha y^b\der{y^a}-\tfrac{1}{3!}f^a_{\alpha\beta\gamma}x^\alpha x^\beta x^\gamma\der{y^a}
 \end{equation}
 with some structure constants $f^{\dots}_{\dots}\in \FC$. The latter define graded antisymmetric multilinear brackets on the shifted space $X[0]\oplus Y[1]$. Introducing the grade-carrying bases $(\tau_\alpha)$ and $(t_a)$ on $X[0]$ and $Y[1]$, respectively, we have
 \begin{equation}
 \begin{aligned}
  \mu_1(t_a)&=f^\alpha_a \tau_\alpha~,~~~&\mu_2(\tau_\alpha,\tau_\beta)&=f_{\alpha\beta}^\gamma \tau_\gamma~,\\
  \mu_2(\tau_\alpha,t_a)&=f^b_{a\alpha}t_b~,~~~&\mu_3(\tau_\alpha,\tau_\beta,\tau_\gamma)&=f_{\alpha\beta\gamma}^a t_a~.
 \end{aligned}
 \end{equation}
 Note that the operations $\mu_i$ are of degree $i-2$ and the condition $Q^2=0$ yields the usual higher or homotopy Jacobi relations between the $\mu_i$ defining a 2-term $L_\infty$-algebra, cf.~\cite{Lada:1992wc,Lada:1994mn}.
\end{example}

\subsection{Cocycle description of principal string 2-group bundles}

For the differentiation of the string 2-group model $\CS_{\lambda}$, we need descent data for principal $\CS_{\lambda}$-bundles in terms of \v Cech cocycles and \v Cech coboundaries. Let us develop these in the following. We restrict ourselves to principal $\CS_\lambda$-bundles over ordinary manifolds subordinate to a cover $U$, and consequently, the covering groupoid $\CU=U\rightrightarrows U$ is discrete. Following the discussion in section~\ref{ssec:cocycle_description}, we start from the diagram
\begin{equation}\label{diag:principal_string_2_bundles}
     \xymatrixcolsep{3pc}
     \xymatrixrowsep{3pc}
    \myxymatrix{
     \CS_\lambda  \ar@<-.5ex>[d] \ar@<.5ex>[d] &
    U^{[2]}\leftleftarrows U^{[2]} \ar@<-.5ex>[d]_{\sfs} \ar@<.5ex>[d]^{\sft}
    \ar@/^2pc/[l]^(.1){}="c"^(.5){}="d"
    \ar@/_2pc/[l]^(.1){}="e"^(.5){}="f" \ar@{=>} "d";"f"^{\beta}
    \ar@/^2pc/[l]^{\Phi_1}
    \ar@/_2pc/[l]_{\Psi_1}\\
    {*}\rightrightarrows {*} & U\leftleftarrows U \ar@{->}[l]^{\Phi_0}_{\Psi_0} 
    }
\end{equation}
where $\CS_\lambda\rightrightarrows (*\leftleftarrows *)$ is a category internal to $\CatBibun$ with composition given by the bibundle $B_\sfm$. The information contained in the functors $\Phi$ and $\Psi$ as well as in the natural isomorphism $\beta$, together with the coherence conditions will yield the appropriate generalization of \v Cech cochains, cocycles and coboundaries describing principal $\CS_\lambda$-bundles and their isomorphisms.

We will assume that the cover $U$ is good and in particular, that $U^{[2]}$ is contractible. This implies that the bibundles $\Phi_1$ and $\Psi_1$ are both trivial bundles over $U^{[2]}$ admitting a global smooth section. By proposition~\ref{prop:bundlization}, this implies that $\Phi$ and $\Psi$ are isomorphic to bundlizations $\hat\phi$ and $\hat\psi$ of smooth functors of Lie groupoids $\phi$ and $\psi$. Moreover, because of proposition~\ref{prop:natural_trafos_1_1}, the bibundle map $\beta$ can be given by a smooth natural transformation between $\phi$ and $\psi$. The only bibundle which is not a bundlization here is the multiplication $B_\sfm$, which appears in the coherence diagrams for internal functors~\eqref{diag:int_functor_coherence_a} and for internal natural transformations~\eqref{diag:int_natural_trafo_coherence_a} with $B_c=B_\sfm$. 

An explicit evaluation of the composition of bibundle isomorphisms in~\eqref{diag:int_functor_coherence_a} is rather cumbersome. To simplify our discussion, we therefore choose to switch to the weak Lie 2-group model $\CS_\lambda^{\rm w}$ of the string 2-group model given in section~\ref{ssec:weak_2_group_model}. Our principal 2-bundles will therefore be weak principal 2-bundles in the sense of~\cite{Jurco:2014mva}, which are given by weak 2-functors internal to $\CatManCat$ from the \v Cech 2-groupoid to the delooping of $\CS_\lambda^{\rm w}$. From there, we also recall the following proposition:
\begin{proposition}[\cite{Jurco:2014mva}, Prop.\ 3.15]\label{prop:normalization}
 Every weak principal 2-bundle $\Phi$ is equivalent to its normalization, which is given by a normalized weak 2-functor which maps to the unit in the structure 2-group over overlaps $U_0\cap U_0$ and whose 2-morphisms are the obvious left and right unitors over $U_0\cap U_0\cap U_1$ and $U_0\cap U_1\cap U_1$.
\end{proposition}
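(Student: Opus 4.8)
The plan is to produce an explicit natural isomorphism (a coboundary in the sense of section \ref{ssec:cocycle_description}) from a given weak $2$-functor $\Phi$ to a normalized one, built entirely from degenerate data that $\Phi$ already carries. Recall that such a $\Phi$ records transition objects $g_{ij}:U_i\cap U_j\rightarrow V_1$, compositors $h_{ijk}$ over triple overlaps encoding $\Phi_{2,c}$, and an identity-compositor $\Phi_{2,\id}$ supplying, over each degenerate overlap $U_i\cap U_i$, a morphism $\unit_{\CS_\lambda}\Rightarrow g_{ii}$ in $\CS_\lambda^{\rm w}$. The compositors preserve the base projection $\pi$, so by the first identity of Corollary \ref{cor:product_simplify} the data $\pi(g_{ij})$ form an ordinary $\sSpin(n)$-cocycle; evaluating $\pi(g_{ii})\pi(g_{ij})=\pi(g_{ij})$ gives $\pi(g_{ii})=\unit_\sG$, so $g_{ii}$ lies over $\unit_\sG$ and is connected to $\unit_{\CS_\lambda}=\phi_1(\unit_\sG)$ by a morphism of $\CS_\lambda$ --- and $\Phi_{2,\id}$ is exactly a choice of such a morphism.

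First I would take the $1$-component of the coboundary to be the morphism $\gamma_i$ read off from $\Phi_{2,\id}$ over $U_i\cap U_i$. Following the coboundary rule displayed in the second diagram of \eqref{diag:int_natural_trafo_def}, the gauge-transformed transition data satisfy $\gamma_i\otimes g_{ij}\cong g'_{ij}\otimes\gamma_j$, and specializing to $i=j$ together with the defining property of $\gamma_i$ forces $g'_{ii}=\unit_{\CS_\lambda}$. The $2$-component $\chi_{ij}$ of the coboundary is then fixed by the requirement that $\alpha$ be a genuine internal natural isomorphism, i.e.\ by the diagrams \eqref{diag:int_natural_trafo_coherence}; on degenerate triple overlaps it is determined by $\Phi_{2,c}$ evaluated on degenerate $2$-simplices.

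Next I would verify that the normalized compositors over $U_i\cap U_i\cap U_j$ and $U_i\cap U_j\cap U_j$ collapse to the unitors of $\CS_\lambda^{\rm w}$. Here the unit-compatibility coherence of a weak $2$-functor does the work: the second diagram in \eqref{diag:int_functor_coherence}, which relates $\Phi_{2,c}$, $\Phi_{2,\id}$ and the unitors $\sfl,\sfr$, specialized to a degenerate slot states precisely that a compositor built from a unit factor equals the corresponding unitor up to the isomorphism encoded by $\Phi_{2,\id}$. After gauging $g_{ii}$ to $\unit_{\CS_\lambda}$ this isomorphism becomes the identity, so $h_{iij}$ and $h_{ijj}$ reduce to $\sfl_{g_{ij}}$ and $\sfr_{g_{ij}}$, as claimed.

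The main obstacle is the bookkeeping of associators. Since $\CS_\lambda^{\rm w}$ is only weakly associative, every identity above holds only up to the associator $\sfa$, which takes values in the central factor $\sA=\sU(1)$ and is determined by $\lambda^{0,3}$. One must therefore check that the coherence diagrams \eqref{diag:int_natural_trafo_coherence} for the constructed $\alpha$ close, i.e.\ that these central contributions cancel. Because they are abelian, this cancellation is exactly the standard statement that a Segal-Mitchison cocycle is cohomologous to one trivial on degenerate simplices, applied to $\lambda^{0,3}$; granting that homological input, the remaining verification reduces to the two degenerate specializations above and is routine.
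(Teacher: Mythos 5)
First, a point of reference: the paper does not prove this proposition at all --- it is imported from \cite{Jurco:2014mva} with an explicit pointer to that reference for details, so there is no in-paper proof to compare against. Your overall route --- show $g_{ii}$ is isomorphic to $\unit_{\CS_\lambda}$ from the degenerate compositor data, gauge it away by a coboundary built from that same data, and then invoke the unit-coherence axiom (the second and third diagrams of \eqref{diag:int_functor_coherence}) to identify the degenerate compositors with the unitors --- is the standard argument and is essentially what the cited reference does (there for semistrict 2-groups with trivial unitors; the non-trivial unitors of $\CS_\lambda^{\rm w}$ only add bookkeeping). Your observation that $\pi(g_{ii})=\unit_\sG$ follows from cancellation in $\sG$ is correct and is the right starting point.

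Two concrete problems remain. The first is a type error at the heart of your construction: by theorem \ref{thm:weak_coboundary} the lowest component of a coboundary is an \emph{object}-valued cochain $(\beta_i)\in\CC^\infty(U,V_1)$, i.e.\ it lives at the level of 1-morphisms of $\sB\CS_\lambda^{\rm w}$, whereas $\Phi_{2,\id}$ over $U_i$ is a \emph{2-morphism} $\unit_{\CS_\lambda}\Rightarrow g_{ii}$, an element of $V_1^{[2]}\times\sA$, one categorical level higher. You cannot take ``the morphism read off from $\Phi_{2,\id}$'' as that component; the correct move is to choose $\beta_i:=\unit_{\CS_\lambda}$ (or $\beta_i:=g_{ii}$) and feed $\Phi_{2,\id}$ into the morphism-level datum $\beta_{ii}$ of the second diagram in \eqref{diag:int_natural_trafo_def}. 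Relatedly, the coboundary relation does not ``force'' $g'_{ii}=\unit_{\CS_\lambda}$: the primed cocycle is data you construct, and the relation only constrains $\pi(g'_{ii})=\unit_\sG$; you must \emph{choose} $g'_{ii}=\unit_{\CS_\lambda}$ and exhibit the witnessing 2-morphism. The second problem is that your closing appeal is misdirected: the paper works throughout with normalized Segal--Mitchison cocycles, so $\lambda^{0,3}$ already vanishes whenever an argument is degenerate and nothing needs to be ``granted'' about it. What actually has to be verified is that the $\sA$-valued degenerate data of the \emph{functor} itself (the $\sA$-parts of $h_{iii}$, $h_{iij}$, $h_{ijj}$ and of $\Phi_{2,\id}$) can be simultaneously absorbed by a choice of $(\alpha_{ij})$ consistent with the coherence diagrams \eqref{diag:int_natural_trafo_coherence} --- including the second diagram there, which governs how $\Phi_{2,\id}$ transforms and which you never use. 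That computation is the actual substance of the proposition, and it is precisely the part you defer as ``routine.''
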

\noindent We will give the consequences of this proposition below; for more details, see~\cite{Jurco:2014mva}. With the above simplifications, we arrive at the following theorem.
\begin{theorem}\label{thm:weak_cocycle}
 The functor $\Phi$ defining a (normalized) principal $\CS_\lambda^{\rm w}$-bundle is described by a 1-cochain $(v_{ij})\in \CC^\infty(U^{[2]},V_1)$ together with a 2-cochain $v_{ijk}\in \CC^\infty(U^{[3]},V^{[2]}_1\times \sA)$ such that
 \begin{equation}\label{eq:weak_cocycle}
 \begin{aligned}
  &v_{ijk}=(v_{ik},v_{ij}\otimes v_{jk},a_{ijk})~,~~~v_{ii}=\unit_{\CS_\lambda}~,~~~v_{iij}=\sfl_{v_{ij}}~,~~~v_{ijj}=\sfr_{v_{ij}}~,\\
  &a_{ikl}+a_{ijk}+\lambda^{1,2}(\phi_2(v_{ik}, v_{kl}),\phi_2(v_{ij}\otimes v_{jk},v_{kl}))\\
  &\hspace{1.4cm}=a_{ijl}+a_{jkl}+\lambda^{1,2}(\phi_2(v_{ij},v_{jl}),\phi_2(v_{ij},v_{jk}\otimes v_{kl}))+\lambda^{0,3}(\phi_3(v_{ij},v_{jk},v_{kl}))~,
 \end{aligned}
 \end{equation}
 where $a_{ijk}\in \CC^\infty(U^{[3]},\sA)$. We call the data $(v_{ij},a_{ijk})$ a \uline{degree-2 \v Cech cocycle} over the cover $U$ with values in $\CS^{\rm w}_\lambda$.
\end{theorem}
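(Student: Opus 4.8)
The plan is to run exactly the argument used earlier for strict structure 2-groups, but now feeding in the weak string 2-group $\CS_\lambda^{\rm w}$ of Section \ref{ssec:weak_2_group_model} in place of $\sG\ltimes\sH\rightrightarrows\sG$. First I would exploit the standing assumption that $U^{[2]}$ is contractible: then $\Phi_1$ is a trivial bundle over $U^{[2]}$ and admits a global section, so by Proposition \ref{prop:bundlization} it is isomorphic to the bundlization $\hat\phi$ of a smooth functor $\phi$ from the categorically discrete \v Cech groupoid $U^{[2]}\rightrightarrows U^{[2]}$ into the Lie groupoid $\CS_\lambda^{\rm w}=(V_1^{[2]}\times\sA\rightrightarrows V_1)$. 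On objects this functor is a smooth map $U^{[2]}\to V_1$, which I read off as the $1$-cochain $(v_{ij})\in\CC^\infty(U^{[2]},V_1)$; since the source groupoid carries only identities, $\phi$ is forced on morphisms to send each identity to $(v_{ij},v_{ij},0)$, so no independent data appear at this stage.

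Next I would extract the $2$-cochain from the coherence $2$-morphism $\Phi_{2,c}$ appearing in the second diagram of \eqref{diag:int_cat_functor}. Over a triple overlap $U^{[3]}$, $\Phi_{2,c}$ is a bibundle isomorphism comparing $\Phi_1$ applied to the \v Cech composite $(ik)$ with the horizontal product $B_\sfm(\Phi_1(ij),\Phi_1(jk))$; in the weak model this reduces to an honest morphism of $\CS_\lambda^{\rm w}$ from $v_{ij}\otimes v_{jk}$ to $v_{ik}$. Recalling the source/target convention $\sfs(w_0,w_1,a)=w_1$, $\sft(w_0,w_1,a)=w_0$, such a morphism is precisely $v_{ijk}=(v_{ik},v_{ij}\otimes v_{jk},a_{ijk})$, which fixes the first line of \eqref{eq:weak_cocycle} and defines $a_{ijk}\in\CC^\infty(U^{[3]},\sA)$. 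Proposition \ref{prop:normalization} then delivers the remaining normalization identities, since a normalized functor sends doubled indices to $\unit_{\CS_\lambda}$ and the associated $2$-morphisms to the left and right unitors of $\CS_\lambda^{\rm w}$, giving $v_{ii}=\unit_{\CS_\lambda}$, $v_{iij}=\sfl_{v_{ij}}$ and $v_{ijj}=\sfr_{v_{ij}}$.

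The substantive step is to evaluate the functor coherence hexagon \eqref{diag:int_functor_coherence_a} with $B_c=B_\sfm$ over a quadruple overlap $U^{[4]}$. Both paths transport $(v_{ij}\otimes v_{jk})\otimes v_{kl}$ to $v_{il}$: one applies $v_{ijk}\otimes\id_{v_{kl}}$ followed by $v_{ikl}$, the other reassociates by $\sfa_{v_{ij},v_{jk},v_{kl}}$, applies $\id_{v_{ij}}\otimes v_{jkl}$, and then $v_{ijl}$. By Corollary \ref{cor:product_simplify} the $V_1$-entries of source and target coincide along every edge, so the whole diagram collapses to a single identity among the $\sA$-valued parts. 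Reading off the contributions — a $\lambda^{1,2}$ term from each horizontal product of a morphism with an identity, a term $\lambda^{0,3}(\phi_3(v_{ij},v_{jk},v_{kl}))$ from the associator, and additions of $\sA$-values from the vertical compositions — yields exactly the second relation in \eqref{eq:weak_cocycle}.

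I expect the bookkeeping in this last step to be the main obstacle. One must correctly identify which arguments the preferred horn fillers $\phi_2$ and $\phi_3$ receive along each edge of the hexagon, and, crucially, show that the $\lambda^{2,1}$ terms produced by the vertical compositions on the two sides reorganize away — this is where the interchange law $\delta_N\lambda^{2,1}=\delta_{\vC}\lambda^{1,2}$ enters, precisely as in the verification of the $2$-group structure of $\CS_\lambda$. Once Corollary \ref{cor:product_simplify} guarantees that the $V_1$-valued entries match on the nose rather than merely up to isomorphism, so that the relation genuinely reduces to a scalar equation in $\sA$, the residual computation is routine.
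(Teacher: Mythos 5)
Your proposal follows the paper's proof essentially step for step: read off $(v_{ij})$ from the bundlization of $\Phi_1$ over the contractible $U^{[2]}$, identify the coherence 2-morphism over $U^{[3]}$ with a morphism $v_{ijk}=(v_{ik},v_{ij}\otimes v_{jk},a_{ijk})$ of $\CS_\lambda^{\rm w}$, impose the normalization of Proposition \ref{prop:normalization} to get $v_{ii}=\unit_{\CS_\lambda}$, $v_{iij}=\sfl_{v_{ij}}$, $v_{ijj}=\sfr_{v_{ij}}$, and evaluate the coherence diagram \eqref{diag:int_functor_coherence_a} over quadruple overlaps, which by Corollary \ref{cor:product_simplify} collapses to the stated scalar relation on the $a_{ijk}$. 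This is exactly the paper's argument.

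The one point to correct is your account of why the $\lambda^{2,1}$ terms disappear, which you single out as the crux. It is not the interchange law $\delta_N\lambda^{2,1}=\delta_{\check C}\lambda^{1,2}$ that is responsible --- that relation is what guarantees compatibility of vertical and horizontal composition in $\CS_\lambda$ itself, and it plays no role here. Rather, each vertical composition in the coherence identity contributes a single term $\lambda^{2,1}(w_0,w_1,w_2)$, and Corollary \ref{cor:product_simplify} forces two \emph{subsequent} arguments of every such term to coincide on the nose (for instance $v_{ik}\otimes v_{kl}=(v_{ij}\otimes v_{jk})\otimes v_{kl}$, since both project to the same element of $\sG$ under $\pi$). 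Since one works throughout with normalized cocycles, which vanish whenever two subsequent arguments agree, every $\lambda^{2,1}$ term is zero individually; no cancellation between the two sides of the hexagon, and no use of the Segal--Mitchison cocycle conditions, is needed at this stage.
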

\begin{proof}
 The first line of equations in~\eqref{eq:weak_cocycle} is readily derived from $(v_{ijk})$ encoding the natural isomorphism 
 \begin{equation}
  \Phi_{2,ijk}:\Phi_{1,ij}\otimes \Phi_{1,jk}\Rightarrow \Phi_{1,ik}~,
 \end{equation}
 cf.\ the second diagram in~\eqref{diag:int_cat_functor}, together with proposition~\ref{prop:normalization}. The coherence axioms of this natural isomorphism read as 
 \begin{equation}
 \begin{aligned}
  v_{ikl}\circ(v_{ijk}\otimes \id_{v_{kl}})&=v_{ijl}\circ(\id_{v_{ij}}\otimes v_{jkl})\circ \sfa_{v_{ij},v_{jk},v_{kl}}~, \\
  v_{ijj}\circ(\id_{v_{ij}}\otimes \id_{\unit_{\CS_\lambda}})=\sfr_{v_{ij}}&\eand v_{iij}\circ(\id_{\unit_{\CS_\lambda}}\otimes \id_{v_{ij}})=\sfl_{v_{ij}}~,
 \end{aligned}
 \end{equation}
 cf.~\eqref{diag:int_functor_coherence_a}. The last two equations are identities, and the first one reduces to
 \begin{equation}\label{eq:5.7}
 \begin{aligned}
  &\big(v_{il},(v_{ij}\otimes v_{jk})\otimes v_{kl},a_{ikl}+a_{ijk}+\lambda^{1,2}(\phi_2(v_{ik}, v_{kl}),\phi_2(v_{ij}\otimes v_{jk},v_{kl}))+\\ 
  &\hspace{.7cm}+\lambda^{2,1}(v_{il},v_{ik}\otimes v_{kl},(v_{ij}\otimes v_{jk})\otimes v_{kl})\big)\\
  &\hspace{1.4cm}=\big(v_{il},(v_{ij}\otimes v_{jk})\otimes v_{kl},a_{ijl}+a_{jkl}+\lambda^{1,2}(\phi_2(v_{ij},v_{jl}),\phi_2(v_{ij},v_{jk}\otimes v_{kl}))+\\
  &\hspace{2.1cm}+\lambda^{2,1}(v_{il},v_{ij}\otimes v_{jl},v_{ij}\otimes (v_{jk}\otimes v_{kl}))+\lambda^{0,3}(\phi_3(v_{ij},v_{jk},v_{kl}))+\\
  &\hspace{2.1cm}+\lambda^{2,1}(v_{il},v_{ij}\otimes(v_{jk}\otimes v_{kl}),(v_{ij}\otimes v_{jk})\otimes v_{kl})\big)~.
 \end{aligned}
 \end{equation}
 The identities of corollary~\ref{cor:product_simplify} together with the fact that we are working with normalized cocycles cause $\lambda^{2,1}$  to drop out of equation~\eqref{eq:5.7}. The remaining non-trivial part of this equation then yields the equation on the 2-cochain $(a_{ijk})$.
\end{proof}

It is now similarly straightforward to describe the natural 2-isomorphism $\beta$ given the coboundary relation between two \v Cech 2-cocycles.
\begin{theorem}\label{thm:weak_coboundary}
 The natural isomorphism $\beta:\Phi\Rightarrow \Psi$ giving an equivalence relation between (normalized) principal $\CS_\lambda^{\rm w}$-bundles $\Phi$ and $\Psi$ described by 2-cocycles $(v_{ij},a_{ijk})$ and $(v'_{ij},a'_{ijk})$ is captured by 0-cochain and 1-cochains,
 \begin{equation}
  (\beta_i)\in\CC^\infty(U,V_1)\eand (\beta_{ij})\in\CC^\infty(U^{[2]},V_1^{[2]}\times \sA)~,
 \end{equation}
 such that
 \begin{equation}\label{eq:weak_coboundary}
  \begin{aligned}
  &\beta_{ij}=(\beta_i\otimes v'_{ij},v_{ij}\otimes \beta_j,\alpha_{ij})~,~~~\beta_{ii}=\sfr_{\beta_i}^{-1}\circ \sfl_{\beta_i}=\id_{\phi_1(\pi(\beta_i))}~,\\
  &\alpha_{ik}+a_{ijk}+\lambda^{1,2}(\phi_2(v_{ik}, \beta_k),\phi_2(v_{ij}\otimes v_{jk},\beta_k))\\ 
  &\hspace{1.4cm}=\alpha_{ij}+a'_{ijk}+\alpha_{jk}+\lambda^{1,2}(\phi_2(\beta_i,v'_{ik}),\phi_2(\beta_i,v'_{ij}\otimes v'_{jk}))\\
  &\hspace{2.2cm}+\lambda^{0,3}(\beta_i,v'_{ij},v'_{jk})-\lambda^{0,3}(v_{ij},\beta_j,v'_{jk})+\lambda^{0,3}(v_{ij},v_{jk},\beta_k)~.
  \end{aligned}
 \end{equation}
 where $\alpha_{ij}\in \CC^\infty(U^{[2]},\sA)$. We call the data $(\beta_i,\alpha_{ij})$ a \uline{degree-2 \v Cech coboundary} over the cover $U$ with values in $\CS^{\rm w}_\lambda$.
\end{theorem}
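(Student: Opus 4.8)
The plan is to follow the proof of Theorem~\ref{thm:weak_cocycle} essentially verbatim, now unwinding the definition of an internal natural transformation. As established before Theorem~\ref{thm:weak_cocycle}, the good cover together with Propositions~\ref{prop:bundlization} and~\ref{prop:natural_trafos_1_1} lets us present $\alpha$ by a smooth natural transformation in $\CS_\lambda^{\rm w}$, and by Proposition~\ref{prop:normalization} we may work with normalized cocycles throughout.

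First I would read off the cochain data from the defining diagrams \eqref{diag:int_natural_trafo_def}. The left diagram assigns to each $i\in U$ an object $\beta_i\in V_1$ of $\CS_\lambda$ via $\alpha_1$; since source and target in $\sB\CS_\lambda^{\rm w}$ land in the trivial point, no constraint arises and $(\beta_i)\in\CC^\infty(U,V_1)$ is the free 0-cochain. The right diagram identifies $\alpha_2$ with a morphism of $\CS_\lambda$ whose source and target are the two horizontal composites occurring there, namely $v_{ij}\otimes\beta_j$ and $\beta_i\otimes v'_{ij}$; hence $\beta_{ij}=(\beta_i\otimes v'_{ij},v_{ij}\otimes\beta_j,\alpha_{ij})$ with free $\sA$-datum $\alpha_{ij}$. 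Restricting to the diagonal and using $v_{ii}=v'_{ii}=\unit_{\CS_\lambda}$ together with the explicit unitors gives $\beta_{ii}=\sfr_{\beta_i}^{-1}\circ\sfl_{\beta_i}=\id_{\phi_1(\pi(\beta_i))}$, both composites equalling $\phi_1(\pi(\beta_i))$ by Corollary~\ref{cor:product_simplify}. This yields the first line of \eqref{eq:weak_coboundary}.

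The substance of the theorem is the coherence diagram \eqref{diag:int_natural_trafo_coherence_a}, which lives over $\CCC_1\times_{\CCC_0}\CCC_1=U^{[3]}$. I would trace its two boundary paths as a composite of 2-morphisms in $\CS_\lambda$, in which horizontal composition of morphisms contributes $\lambda^{1,2}$, vertical composition contributes $\lambda^{2,1}$ through \eqref{vertical_multiplication}, the functor-coherence cells $\Phi_{2,c},\Psi_{2,c}$ contribute $a_{ijk}$ and $a'_{ijk}$, the cell $\alpha_2$ contributes $\alpha_{ij}$, and the associators $\sfa$ (two in one orientation, one inverse) contribute $\lambda^{0,3}$. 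Evaluating at a representative point $(i,j,k,x,\unit_\sA)$ and equating $\sA$-components produces a raw relation among $\alpha_{ij},\alpha_{jk},\alpha_{ik},a_{ijk},a'_{ijk}$ and the $\lambda$-terms; the three placements of $\beta$ inside the associators supply exactly the arguments $(\beta_i,v'_{ij},v'_{jk})$, $(v_{ij},\beta_j,v'_{jk})$ and $(v_{ij},v_{jk},\beta_k)$ with the alternating signs in \eqref{eq:weak_coboundary}.

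Finally, exactly as in Theorem~\ref{thm:weak_cocycle}, I would use Corollary~\ref{cor:product_simplify}---in particular $\pi(v_0\otimes v_1)=\pi(v_0)\pi(v_1)$ and the resulting strictness of $\otimes$ on objects---together with normalization to verify that every surviving $\lambda^{2,1}$ term carries a repeated object in its argument and hence drops out in pairs. What remains is precisely the second relation of \eqref{eq:weak_coboundary}. The main obstacle is the combinatorial bookkeeping: orienting each associator, unitor and composition in \eqref{diag:int_natural_trafo_coherence_a} consistently so that the $\lambda^{2,1}$ cancellation is manifest and the three residual $\lambda^{0,3}$ terms come out with the signs and arguments stated. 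Once these orientations are fixed, the remainder is a direct transcription of the cocycle computation.
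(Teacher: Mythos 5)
Your proposal matches the paper's proof essentially step for step: the first line of \eqref{eq:weak_coboundary} is read off from the second diagram in \eqref{diag:int_natural_trafo_def} together with normalization, and the $\sA$-relation comes from evaluating \eqref{diag:int_natural_trafo_coherence_a} with its three associators (two direct, one inverse) and cancelling the $\lambda^{2,1}$ terms via Corollary \ref{cor:product_simplify}. The only detail you leave implicit is that two of the four $\lambda^{1,2}$ contributions (those arising from the horizontal composites $\beta_{ij}\otimes\id_{v'_{jk}}$ and $\id_{v_{ij}}\otimes\beta_{jk}$) must also drop out by the same mechanism, which the paper notes explicitly.
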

\begin{proof}
 The first equation in~\eqref{eq:weak_coboundary} is directly obtain from the defining diagram for $\beta_{ij}$, cf.\ the second diagram in~\eqref{diag:int_natural_trafo_def}. The coherence axioms then read as 
 \begin{equation}
 \begin{aligned}
 &\beta_{ik}\circ(v_{ijk}\otimes \id_{\beta_k})=(\id_{\beta_i}\otimes v'_{ijk})\circ \sfa_{\beta_i,v'_{ij},v'_{jk}}\circ (\beta_{ij}\otimes \id_{v'_{jk}})\circ\\
 &\hspace{5cm}\circ\sfa^{-1}_{v_{ij},\beta_j,v'_{jk}}\circ (\id_{v_{ij}}\otimes \beta_{jk})\circ \sfa_{v_{ij},v_{jk},\beta_k}~,\\
 &\hspace{1.5cm}\beta_{ii}\circ(\id_{\unit_{\CS\lambda}}\otimes \id_{\beta_i})=(\id_{\beta_i}\otimes \id_{\unit_{\CS_\lambda}})\circ \sfr^{-1}_{\beta_i}\circ \sfl_{\beta_i}~,
 \end{aligned}
 \end{equation}
 cf.~\eqref{diag:int_natural_trafo_coherence}, with the second condition directly reducing to the identity
 \begin{equation}
  \beta_{ii}=\sfr^{-1}_{\beta_i}\circ \sfl_{\beta_i}=(\phi_1(\pi(\beta_i)),\phi_1(\pi(\beta_i)),\lambda^{2,1}(\phi_1(\pi(\beta_i)),\beta_i,\phi_1(\pi(\beta_i)))=\id_{\phi_1(\pi(\beta_i))}~.
 \end{equation}
 The part in $V_1^{[2]}$ of the first condition also yields an identity. Thus the component in $\sA$ reads as
 \begin{equation*}
  \begin{aligned}
  &\alpha_{ik}+a_{ijk}+\lambda^{1,2}(\phi_2(v_{ik}, \beta_k),\phi_2(v_{ij}\otimes v_{jk},\beta_k))+\lambda^{2,1}(\beta_i\otimes v'_{ik},v_{ik}\otimes \beta_k,(v_{ij}\otimes v_{jk})\otimes \beta_k)\\
  &\hspace{0.3cm}=\alpha_{ij}+a'_{ijk}+\alpha_{jk}+\lambda^{1,2}(\phi_2(\beta_i,v'_{ik}),\phi_2(\beta_i,v'_{ij}\otimes v'_{jk}))+\lambda^{0,3}(\beta_i,v'_{ij},v'_{jk})\\
  &\hspace{0.8cm}+\lambda^{1,2}(\phi_2(\beta_i\otimes v'_{ij},v'_{jk}),\phi_2(v_{ij}\otimes\beta_j,v'_{jk}))-\lambda^{0,3}(v_{ij},\beta_j,v'_{jk})\\
  &\hspace{0.8cm}+\lambda^{1,2}(\phi_2(v_{ij},\beta_j\otimes v'_{jk}),\phi_2(v_{ij},v_{jk}\otimes\beta_k))
  +\lambda^{0,3}(v_{ij},v_{jk},\beta_k)+\\
  &\hspace{0.8cm}+\lambda^{2,1}(\beta_i\otimes v'_{ik},\beta_i\otimes(v'_{ij}\otimes v'_{jk}),(\beta_i\otimes v'_{ij})\otimes v'_{jk})\\
  &\hspace{0.8cm}+\lambda^{2,1}(\beta_i\otimes(v'_{ij}\otimes v'_{jk}),(\beta_i\otimes v'_{ij})\otimes v'_{jk},(v_{ij}\otimes \beta_j)\otimes v'_{jk})\\
  &\hspace{0.8cm}+\lambda^{2,1}((\beta_i\otimes v'_{ij})\otimes v'_{jk},(v_{ij}\otimes \beta_j)\otimes v'_{jk},v_{ij}\otimes(\beta_j\otimes v'_{jk}))\\
  &\hspace{0.8cm}+\lambda^{2,1}((v_{ij}\otimes \beta_j)\otimes v'_{jk},v_{ij}\otimes(\beta_j\otimes v'_{jk}),v_{ij}\otimes(v_{jk}\otimes \beta_k))\\
  &\hspace{0.8cm}+\lambda^{2,1}(v_{ij}\otimes(\beta_j\otimes v'_{jk}),v_{ij}\otimes(v_{jk}\otimes \beta_k),(v_{ij}\otimes v_{jk})\otimes \beta_k)~.
  \end{aligned}
 \end{equation*}
 Just as in the proof of theorem~\ref{thm:weak_cocycle}, terms containing $\lambda^{2,1}$ drop out due to identities from corollary~\ref{cor:product_simplify}. The same is true for the third and fourth term containing $\lambda^{1,2}$. The remaining part is then coboundary condition on the $\alpha_{ij}$. 
\end{proof}

\subsection{Functor from manifolds to descent data}

To differentiate the smooth 2-group model of the string group, we use a method suggested by \v Severa~\cite{Severa:2006aa}. He observed that the Lie algebra $\frg$ of a Lie group $\sG$ can be regarded as the moduli space of functors from the category of manifolds to descent data of principal $\sG$-bundles on the surjective submersion $N\times \FR^{0|1}\rightarrow N$. In particular, such descent data is given in terms of functions $g(\theta_0,\theta_1):N\times \FR^{0|2}\rightarrow \sG$, which satisfy
\begin{equation}
 g(\theta_0,\theta_1)g(\theta_1,\theta_2)=g(\theta_0,\theta_2)~.
\end{equation}
This relation implies that $g(\theta_0,\theta_1)=g(\theta_0,0)(g(\theta_1,0))^{-1}$ and we can expand\footnote{For simplicity, assume that $\sG$ is a matrix group. Otherwise, one has to insert the diffeomorphism between $\sG$ and $T_\unit\sG$ in an infinitesimal neighborhood of $\unit$ and its inverse into all formulas.}
\begin{equation}
 g(\theta_0,0)=\unit_\sG+\omega \theta_0~,
\end{equation}
where $\omega\in \frg[1]$. We thus recover the Lie algebra as a vector space. The moduli space $\frg[1]$ comes with a natural action of $\sHom(\FR^{0|1},\FR^{0|1})$ and one of its generators can be identified with the Chevalley--Eilenberg differential of $\frg$, encoding the Lie bracket. The natural action of this generator on functions $f$ on $\FR^{0|k}$ reads as 
\begin{equation}
 \dd_{\rm K}f(\theta_0,\theta_1,\dots,\theta_{k-1}):=\dder{\eps}f(\theta_0+\eps,\theta_1+\eps,\dots,\theta_{k-1}+\eps)~,
\end{equation}
and its application to
\begin{equation}\label{eq:ref_g_1}
 g(\theta_0,\theta_1)=\unit_\sG+\omega(\theta_0-\theta_1)+\tfrac12[\omega,\omega]\theta_0\theta_1
\end{equation}
induces an action $\dd_{\rm K}\omega=-\tfrac12[\omega,\omega]$, which in turn yields the Chevalley--Eilenberg differential
\begin{equation}
 Q \xi^\alpha=-\tfrac12 f^\alpha_{\beta\gamma} \xi^\beta\wedge  \xi^\gamma~,
\end{equation}
where the $ \xi^\alpha$ are the coordinate functions on $\frg[1]$.

\v Severa pointed out that this construction extends to higher principal bundles with arbitrary Kan simplicial complexes as structure quasi-groupoids, and it yields a differentiation of the latter to $L_\infty$-algebroids. The definition relevant for our purposes is then the following.
\begin{definition}
 The \uline{Lie $n$-algebra of a smooth $n$-group $\CG$} is the moduli space of functors taking a manifold $N$ to descent data of principal $\CG$-bundles with respect to the surjective submersion $N\times \FR^{0|1}\twoheadrightarrow N$. The algebra structure is encoded in the Chevalley--Eilenberg differential given by a generator of the action of $\sHom(\FR^{0|1},\FR^{0|1})$ onto the moduli space.
\end{definition}
\noindent We are now interested in the special case of principal $\CS_\lambda^{\rm w}$-bundles. Our discussion will follow closely that of~\cite{Jurco:2014mva}, with generalized arguments due to $\CS_\lambda^{\rm w}$ having non-trivial unitors.

We start from a weak normalized 2-functor encoded in a degree-2 \v Cech cocycle $v$ on the cover $N\times \FR^{0|1}\twoheadrightarrow N$. This cocycle consists of a $V_1$-valued \v Cech 1-cochain together with a $V_1^{[2]}\times \sU(1)$-valued 2-cochain,
\begin{equation}
v(\theta_0,\theta_1) \eand v(\theta_0,\theta_1,\theta_2)=\big(v(\theta_0,\theta_2),v(\theta_0,\theta_1)\otimes v(\theta_1,\theta_2),a(\theta_0,\theta_1,\theta_2)\big)~.
\end{equation}
Since $v$ is normalized, we have $v(\theta_0,\theta_0)=v(0,0)=\unit_{\CS_\lambda}$. Note that in a cover $V_1$ as constructed in example~\ref{ex:cover}, open sets in $V_1$ are fully contained within one of the patches. Because $v(\theta_0,\theta_1)$ depends smoothly on the Gra\ss mann variables, it lies on the same patch $V_{1,i}$ as $\unit_{\CS_\lambda}$. This patch contains an infinitesimal neighborhood of $\unit_{\CS_\lambda}$, and we have $\phi_1(\pi(v(\theta_0,\theta_1)))=v(\theta_0,\theta_1)$, which leads to significant simplifications. In particular, the equation $\pi(v(\theta_0,\theta_1))\pi(v(\theta_1,\theta_2))=\pi(v(\theta_0,\theta_2))$ implied by the cocycle condition now lifts to $V_1$:
\begin{equation}
 v(\theta_0,\theta_1)\otimes v(\theta_1,\theta_2)=v(\theta_0,\theta_2)~.
\end{equation}
This, in turn, renders the $\lambda^{1,2}$-terms in the cocycle condition~\eqref{eq:weak_cocycle} for the $a(\theta_0,\theta_1,\theta_2)$ trivial, leaving us with
\begin{equation}
\begin{aligned}
   a(\theta_0,\theta_2,\theta_3)+&a(\theta_0,\theta_1,\theta_2)=\\
   &\hspace{0.5cm}a(\theta_0,\theta_1,\theta_3)+a(\theta_1,\theta_2,\theta_3)+\lambda^{0,3}(\phi_3(v(\theta_0,\theta_1),v(\theta_1,\theta_2),v(\theta_2,\theta_3))~.
\end{aligned}
\end{equation}

As one might expect due to the form of the surjective submersion, the principal $\CS_\lambda^{\rm w}$-bundle we are dealing with here is trivial.
\begin{lemma}
 The cochain $\beta$ defined by 
  \begin{equation}
  \beta(\theta_0):=v(\theta_0,0)\eand \alpha(\theta_0,\theta_1):=a(\theta_0,\theta_1,0)
  \end{equation}
 forms a coboundary as defined in theorem~\ref{thm:weak_coboundary}, trivializing the principal $\CS_\lambda^{\rm w}$-bundle described by $v$. Moreover, we have $\alpha(\theta_0,0)=0$ as well as $\alpha(0,\theta_0)=0$.
\end{lemma}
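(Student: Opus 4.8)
The plan is to recognize the claimed trivialization as the statement that $(\beta,\alpha)$ is a degree-2 \v Cech coboundary in the sense of theorem \ref{thm:weak_coboundary} between the cocycle $v$ (playing the role of $\Phi$) and the trivial cocycle (playing the role of $\Psi$), for which $v'(\theta_0,\theta_1)=\unit_{\CS_\lambda}$ and $a'(\theta_0,\theta_1,\theta_2)=0$. Thus I would first verify the first line of \eqref{eq:weak_coboundary}. Inserting $\beta(\theta_0)=v(\theta_0,0)$ and $v'=\unit_{\CS_\lambda}$ into the defining expression for $\beta_{ij}$ gives $\beta(\theta_0,\theta_1)=(\beta(\theta_0)\otimes\unit_{\CS_\lambda},\,v(\theta_0,\theta_1)\otimes\beta(\theta_1),\,\alpha(\theta_0,\theta_1))$; using the relation $v(\theta_0,\theta_1)\otimes v(\theta_1,\theta_2)=v(\theta_0,\theta_2)$ that holds in $V_1$ in the present infinitesimal setting, the source $v(\theta_0,\theta_1)\otimes\beta(\theta_1)=v(\theta_0,\theta_1)\otimes v(\theta_1,0)=v(\theta_0,0)=\beta(\theta_0)$ equals the target $\beta(\theta_0)\otimes\unit_{\CS_\lambda}=\beta(\theta_0)$, so $\beta(\theta_0,\theta_1)$ is a well-defined automorphism. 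Setting $\theta_1=\theta_0$ and using $v(\theta_0,\theta_0)=\unit_{\CS_\lambda}$ together with the normalization $a(\theta_0,\theta_0,0)=0$ then yields $\beta(\theta_0,\theta_0)=(\beta(\theta_0),\beta(\theta_0),0)=\id_{\phi_1(\pi(\beta(\theta_0)))}$, which is the required normalization of $\beta_{ii}$.

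The main step is the $\sA$-valued coboundary equation of \eqref{eq:weak_coboundary}. Substituting $v'=\unit_{\CS_\lambda}$ and $a'=0$, the two arguments of the $\lambda^{1,2}$-term on the left become $\phi_2(v(\theta_0,\theta_2),\beta(\theta_2))$ and $\phi_2(v(\theta_0,\theta_1)\otimes v(\theta_1,\theta_2),\beta(\theta_2))$, which coincide by the lifted product relation and by corollary \ref{cor:product_simplify}; since we work with normalized cochains this diagonal contribution vanishes, and the same argument disposes of the $\lambda^{1,2}$-term on the right (its two arguments are both $\phi_2(\beta(\theta_0),\unit_{\CS_\lambda})$). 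Of the three $\lambda^{0,3}$-terms on the right, the first, $\lambda^{0,3}(\beta(\theta_0),\unit_{\CS_\lambda},\unit_{\CS_\lambda})$, and the second, $\lambda^{0,3}(v(\theta_0,\theta_1),\beta(\theta_1),\unit_{\CS_\lambda})$, both carry a unit entry and therefore vanish by normalization of the Segal-Mitchison cochain $\lambda^{0,3}$, while the third, $\lambda^{0,3}(v(\theta_0,\theta_1),v(\theta_1,\theta_2),\beta(\theta_2))$, is exactly the term $\lambda^{0,3}(\phi_3(v(\theta_0,\theta_1),v(\theta_1,\theta_2),v(\theta_2,0)))$ appearing in the cocycle condition. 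What remains is precisely the cocycle relation \eqref{eq:weak_cocycle}, in its simplified form with the $\lambda^{1,2}$-terms already trivial, evaluated at $\theta_3=0$ after identifying $\alpha(\theta_0,\theta_1)=a(\theta_0,\theta_1,0)$. I expect the main obstacle to be exactly this bookkeeping: matching each surviving $\lambda^{0,3}$-argument against the correctly specialized cocycle identity and confirming that every stray term either collapses on the diagonal of $V_2^{[2]}$ or carries a $\pi$-image equal to $\unit_\sG$.

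Finally, for the two supplementary identities, $\alpha(\theta_0,0)=a(\theta_0,0,0)$ vanishes immediately, since its last two arguments coincide and normalized cochains are trivial on such degenerate simplices. The identity $\alpha(0,\theta_0)=a(0,\theta_0,0)$ is subtler, as $(0,\theta_0,0)$ is not degenerate; here I would exploit that $a(0,\theta_0,0)$ is a function of the single odd variable $\theta_0$ and hence equals $a(0,0,0)+c\,\theta_0$ for some constant $c$. The constant term vanishes by normalization, and the linear coefficient $c$ is fixed by expanding the two normalization conditions $a(\theta_0,\theta_0,\theta_2)=0$ and $a(\theta_0,\theta_1,\theta_1)=0$ to first order in the odd variables, which forces the coefficient of the middle slot to vanish; therefore $\alpha(0,\theta_0)=0$ as claimed.
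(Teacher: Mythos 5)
Your proof is correct and takes essentially the same route as the paper, whose proof is just the remark that the lemma ``follows by direct computation'' using the vanishing of the normalized $\lambda^{(p,q)}$ on degenerate arguments and the normalizations $a(\theta_0,\theta_0,\theta_1)=a(\theta_0,\theta_1,\theta_1)=0$. You have simply spelled out that computation in full — the reduction of the coboundary relation to the cocycle condition at $\theta_3=0$ and the Gra\ss mann-expansion argument for $\alpha(0,\theta_0)=0$ are exactly the steps the paper leaves implicit.
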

\begin{proof}
 This follows by direct computation\footnote{see also~\cite{Jurco:2014mva} for some of the technical details}, using the fact that the $\lambda^{(p,q)}$ vanish if an argument is given by a sequence of degeneracy maps acting on $\unit_{\CS_\lambda}$. Note in particular that $a(\theta_0,\theta_0,\theta_1)=a(\theta_0,\theta_1,\theta_1)=0$ because of the normalization of $v$.
\end{proof}

We can now fix the following expansion of the cochain $\beta$ in the Gra\ss mann variables, using implicitly the local diffeomorphism between the neighborhood of $\unit_{\CS_\lambda}$ and $T_{\unit_{\CS_\lambda}} V_1$:
\begin{equation}
 \begin{aligned}
  \beta(\theta_0)&=\unit_{\CS_\lambda}+\omega\theta_0\eand \alpha(\theta_0,\theta_1)&=\psi\theta_0\theta_1~,
 \end{aligned}
\end{equation}
where $\omega\in T_{\unit_{\CS_\lambda}}[1]V_1\cong T_{\unit_\sG}[1]\sG=\sLie(\sG)[1]$ and $\psi\in \sLie(\sA)[2]$. 

Next, we use the explicit coboundary relations to compute the Gra\ss mann expansion of the cocycle components. The fact that in a neighborhood of $\unit_{\CS_\lambda}$, the horizontal composition collapses to group multiplication in $\sG$ directly yields
\begin{equation}
  v(\theta_0,\theta_1)=\unit_{\CS_\lambda}+\omega(\theta_0-\theta_1)+\tfrac12[\omega,\omega]\theta_0\theta_1~,
\end{equation}
cf.\ equation~\eqref{eq:ref_g_1}. On the other hand, the coboundary relation for the morphisms reduces to 
\begin{equation}
 \alpha(\theta_0,\theta_2)+a(\theta_0,\theta_1,\theta_2)=\alpha(\theta_0,\theta_1)+\alpha(\theta_1,\theta_2)+\lambda^{0,3}(v(\theta_0,\theta_1),v(\theta_1,\theta_2),\beta(\theta_2))~.
\end{equation}
Since $\lambda^{0,3}(\unit_{\CS_\lambda},-,-)=\lambda^{0,3}(-,\unit_{\CS_\lambda},-)=\lambda^{0,3}(-,-,\unit_{\CS_\lambda})=0$, we have
\begin{equation}
 \lambda^{0,3}(v(\theta_0,\theta_1),v(\theta_1,\theta_2),\beta(\theta_2))=:\lambda^{0,3}(\omega,\omega,\omega)\theta_0\theta_1\theta_2~,
\end{equation}
where $\lambda^{0,3}(\omega,\omega,\omega)$ is the obvious linearization of $\lambda^{0,3}$ around $(\unit_{\CS_\lambda},\unit_{\CS_\lambda},\unit_{\CS_\lambda})$. From this, we compute the expansion
\begin{equation}
 \begin{aligned}
  a(\theta_0,\theta_1,\theta_2)&=\psi(\theta_0\theta_1+\theta_1\theta_2-\theta_0\theta_2)+\lambda^{0,3}(\omega,\omega,\omega)\theta_0\theta_1\theta_2~.
 \end{aligned}
\end{equation}
The Chevalley--Eilenberg differential induced by the relevant generator of $\sHom(\FR^{0|1},\FR^{0|1})$ is then characterized by
\begin{equation}
 \begin{aligned}
  \dd_{\rm K} \omega&=-\tfrac12 [\omega,\omega]~,\\
  \dd_{\rm K} \psi&=-\lambda^{0,3}(\omega,\omega,\omega)~.
 \end{aligned}
\end{equation}
We can summarize our findings in the following theorem.
\begin{theorem}
  The Lie 2-algebra of the smooth 2-group model $\CS^{\rm w}_{\lambda}$ of the string group with $\sG=\sSpin(n)$ is the string Lie 2-algebra equivalent to the 2-term $L_{\infty}$-algebra $\fru(1)[1]\rightarrow \sLie(\sG)$, together with the non-trivial higher products 
  \begin{equation}
   \mu_2(x_1,x_2)=[x_1,x_2]~,~~~\mu_3(x_1,x_2,x_3)=k(x_1,[x_2,x_3])
  \end{equation}
  for some $k\in \FR$. Here $(-,-)$ and $[-,-]$ denote the Killing form and the Lie bracket on $\sLie(\sG)$, respectively.
\end{theorem}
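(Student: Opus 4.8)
The plan is to read off the $L_\infty$-structure directly from the two Chevalley-Eilenberg relations $\dd_{\rm K}\omega=-\tfrac12[\omega,\omega]$ and $\dd_{\rm K}\psi=-\lambda^{0,3}(\omega,\omega,\omega)$ computed just above the statement, using the dictionary between N$Q$-manifolds and 2-term $L_\infty$-algebras recalled in Section~\ref{sec:differentiation}. The moduli space carries a degree-1 coordinate $\omega$ valued in $\sLie(\sG)$ and a degree-2 coordinate $\psi$ valued in $\fru(1)=\sLie(\sA)$, so the underlying graded vector space is $\sLie(\sG)[1]\oplus\fru(1)[2]$, i.e.\ the 2-term complex $\fru(1)[1]\to\sLie(\sG)$. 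First I would match the relation for $\omega$: since $\dd_{\rm K}\omega$ contains no term linear in $\psi$, the differential $\mu_1$ vanishes, and the quadratic term $-\tfrac12[\omega,\omega]$ identifies the structure constants $f^\gamma_{\alpha\beta}$ with those of $\sLie(\sG)$, giving $\mu_2(x_1,x_2)=[x_1,x_2]$.

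Next I would analyze the relation for $\psi$. The absence of any mixed $\omega\psi$-term in $\dd_{\rm K}\psi$ shows that $\sG$ acts trivially on $\fru(1)$, so $\mu_2$ vanishes on mixed arguments, consistent with $\sA=\sU(1)$ being a trivial module. The remaining cubic term $-\lambda^{0,3}(\omega,\omega,\omega)$ then encodes $\mu_3$: since $\omega$ is a degree-1 and hence Grassmann-odd coordinate, only the totally antisymmetric part of the linearization of $\lambda^{0,3}$ around $(\unit_{\CS_\lambda},\unit_{\CS_\lambda},\unit_{\CS_\lambda})$ survives, defining a map $\mu_3\colon\sLie(\sG)^{\wedge 3}\to\fru(1)\cong\FR$. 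The condition $Q^2=0$, which here reduces to the cocycle identity $\delta_N\lambda^{0,3}=0$ used above, guarantees that $\mu_3$ is a Chevalley-Eilenberg 3-cocycle on $\sLie(\sG)$; this is precisely the homotopy-Jacobi relation for a 2-term $L_\infty$-algebra with vanishing $\mu_1$ and trivial action.

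The main obstacle, and the heart of the argument, is to identify this 3-cocycle with the Cartan-Killing 3-form. Here I would invoke that $\lambda$ represents a generator of $H^3_{\rm SM}(\sSpin(n),\sU(1))\cong\RZ$ and that Segal-Mitchison differentiation maps such a class to a nonzero class in $H^3(\sLie(\sG),\FR)$, a van~Est-type statement. For $\sG=\sSpin(n)$ compact and simple this cohomology is one-dimensional, generated by the totally antisymmetric Cartan form $(x_1,[x_2,x_3])$ built from the Killing form. Hence $\mu_3$ must be a real multiple of it, $\mu_3(x_1,x_2,x_3)=k\,(x_1,[x_2,x_3])$, with the nonzero constant $k$ fixed by the normalization of the generator $\lambda$. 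This reproduces exactly the string Lie 2-algebra. The delicate point is controlling the linearization so that one genuinely lands in Lie algebra cohomology rather than merely in the space of antisymmetric trilinear maps, and confirming that the differentiated class is nonvanishing; the cocycle identities satisfied by $\lambda$ are what make this work.
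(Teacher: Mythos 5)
Your proposal is correct and follows essentially the same route as the paper: both read off the $L_\infty$-structure from the relations $\dd_{\rm K}\omega=-\tfrac12[\omega,\omega]$ and $\dd_{\rm K}\psi=-\lambda^{0,3}(\omega,\omega,\omega)$ derived just before the theorem, and both conclude by arguing that the linearized $\lambda^{0,3}$ must be a multiple of the Cartan 3-cocycle $(x_1,[x_2,x_3])$ since for $\sSpin(n)$ any such Lie-algebra 3-cocycle is of this form. Your additional remarks on the vanishing of $\mu_1$, the triviality of the $\sG$-action on $\fru(1)$, and the van~Est-type nondegeneracy are sensible elaborations of the same argument rather than a different approach.
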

\begin{proof}
 It only remains to argue that $\lambda^{0,3}(\omega,\omega,\omega)\sim k(\omega,[\omega,\omega])$ for some $k\in\FR$. For the Lie groups considered in this theorem, any such 3-cocycle is necessarily of this form.
\end{proof}

\subsection{Equivalence transformations}

Let us now extend the above differentiation process to derive equivalence relations on the moduli space. Later we will exchange the Chevalley--Eilenberg differential for the de Rham differential and this will give us gauge transformations for connections on principal $\CS_\lambda^{\rm w}$-bundles and further the full underlying Deligne cohomology of these bundles. This approach to Deligne cohomology with values in categorified groups was first used in~\cite{Jurco:2014mva}.

Given a principal $\CS_\lambda^{\rm w}$-bundle on $N\times\FR^{0|1}\twoheadrightarrow N$ in terms of a \v Cech 2-cocycle $v$, we now perform an isomorphism $\beta$ to another such \v Cech 2-cocycle $v'$:
\begin{equation}
 \big(\beta(\theta_0),\alpha(\theta_0,\theta_1)\big)~:~\big(v(\theta_0,\theta_1),a(\theta_0,\theta_1,\theta_2)\big)~\rightarrow~\big(v'(\theta_0,\theta_1),a'(\theta_0,\theta_1,\theta_2)\big)~.
\end{equation}
At the level of moduli, this translates into a relation 
\begin{equation}
 \big(\beta(\theta_0),\beta(\theta_0,\theta_1)\big)~:~(\omega,\psi)~\rightarrow~(\omega',\psi')
\end{equation}
and we are interested in the explicit isomorphism. The \v Cech 2-coboundary $\beta$ is necessarily of the form
\begin{equation}
 \beta(\theta_0)=\beta-\dd_K \beta\,\theta_0\eand \alpha(\theta_0,\theta_1)=\zeta(\theta_1-\theta_0)+\dd_K\zeta\,\theta_0\theta_1~.
\end{equation} 
Because both $v$ and $v'$ are normalized, $\beta$ relates the 2-cocycles as follows:
\begin{equation}\label{eq:equivalence_descent_data1}
  \begin{aligned}
    &v(\theta_0,\theta_1)\otimes \beta(\theta_1)=\beta(\theta_0)\otimes v'(\theta_0,\theta_1)~,\\
    &\alpha(\theta_0,\theta_2)+a(\theta_0,\theta_1,\theta_2)=\alpha(\theta_0,\theta_1)+a'(\theta_0,\theta_1,\theta_2)+\alpha(\theta_1,\theta_2)\\
    &\hspace{2cm}+\lambda^{0,3}(\beta(\theta_0),v'(\theta_0,\theta_1),v'(\theta_1,\theta_2))-\lambda^{0,3}(v(\theta_0,\theta_1),\beta(\theta_1),v'(\theta_1,\theta_2))\\
    &\hspace{2cm}+\lambda^{0,3}(v(\theta_0,\theta_1),v(\theta_1,\theta_2),\beta(\theta_2))~,
  \end{aligned}
\end{equation}
and the second equation reduces to
\begin{equation}\label{eq:equivalence_descent_data2}
\begin{aligned}
 &\psi(\theta_0\theta_1+\theta_1\theta_2-\theta_0\theta_2)=\big(\psi'+\dd_K\zeta+\lambda^{0,3}(\beta,\omega',\omega')-\lambda^{0,3}(\omega,\beta,\omega')+\lambda^{0,3}(\omega,\omega,\beta)\big)\times\\&\hspace{6cm}\times (\theta_0\theta_1+\theta_1\theta_2-\theta_0\theta_2)~,\\
 &\lambda^{0,3}(\omega,\omega,\omega)=\lambda^{0,3}(\omega',\omega',\omega')+\lambda^{0,3}(\beta,[\omega,\omega],\omega)+\dots-\lambda^{0,3}(\dd_K\beta,\omega',\omega')-\dots~.
\end{aligned}
\end{equation}
From the first equation in~\eqref{eq:equivalence_descent_data1} and the first equation in~\eqref{eq:equivalence_descent_data2}, we readily read off the following relations:
\begin{equation}\label{eq:relation_moduli}
\begin{aligned}
 \beta\otimes \omega'&=\omega\otimes \beta+\dd_K\beta~,\\
 \psi'&=\psi-\dd_K \zeta-\lambda^{0,3}(\beta,\omega',\omega')+\lambda^{0,3}(\omega,\beta,\omega')-\lambda^{0,3}(\omega,\omega,\beta)~.
\end{aligned}
\end{equation}
The second equation of~\eqref{eq:equivalence_descent_data2} is then automatically satisfied, and we arrive at the following theorem.
\begin{theorem}
 \v Cech 2-coboundaries between \v Cech 2-cocycles $v=(\omega,\psi)$ and $v'=(\omega',\psi')$ corresponding to descent data for principal $\CS_\lambda^{\rm w}$-bundles on surjective submersions of the form $N\times \FR^{0|1}\twoheadrightarrow N$ are parameterized by elements $\beta\in V_1$ and $\zeta\in \sA[1]$. The moduli of the coboundaries and cocycles are related as in equation~\eqref{eq:relation_moduli}.
\end{theorem}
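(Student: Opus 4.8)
The plan is to specialize the general degree-2 \v Cech coboundary of Theorem~\ref{thm:weak_coboundary} to the surjective submersion $N\times\FR^{0|1}\twoheadrightarrow N$ and then extract the moduli by expanding in the single odd coordinate of $\FR^{0|1}$. On this one-patch cover the index structure of the data $(\beta_i,\alpha_{ij})$ collapses, so a coboundary $\beta$ is given by a $0$-cochain $\beta(\theta_0)$ valued in $V_1$ together with a $1$-cochain $\alpha(\theta_0,\theta_1)$ valued in $\sA$. My first step would be to rewrite the coboundary conditions \eqref{eq:weak_coboundary} in this setting and read off which of their terms survive.

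The crucial simplification, exactly as in the cocycle computation preceding the theorem, is that both $v$ and $v'$ take values in an infinitesimal neighborhood of $\unit_{\CS_\lambda}$ where horizontal composition reduces to the group multiplication in $\sG$. Using Corollary~\ref{cor:product_simplify} together with the normalization of Proposition~\ref{prop:normalization}, every $\lambda^{1,2}$- and $\lambda^{2,1}$-contribution either becomes trivial or cancels, so the two lines of \eqref{eq:weak_coboundary} collapse to the pair of relations \eqref{eq:equivalence_descent_data1}: an identity in $V_1$, $v(\theta_0,\theta_1)\otimes\beta(\theta_1)=\beta(\theta_0)\otimes v'(\theta_0,\theta_1)$, and an identity in $\sA$ involving only the three linearized $\lambda^{0,3}$-terms.

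Next I would expand all data in the odd coordinate. The normalization conditions $\alpha(\theta_0,0)=\alpha(0,\theta_0)=0$ together with the $\dd_K$-action of $\sHom(\FR^{0|1},\FR^{0|1})$ force the coboundary data into the form $\beta(\theta_0)=\beta-\dd_K\beta\,\theta_0$ and $\alpha(\theta_0,\theta_1)=\zeta(\theta_1-\theta_0)+\dd_K\zeta\,\theta_0\theta_1$, whose leading coefficients $\beta\in V_1$ and $\zeta\in\sA[1]$ are precisely the two free moduli claimed in the statement. Substituting these, together with $v(\theta_0,\theta_1)=\unit_{\CS_\lambda}+\omega(\theta_0-\theta_1)+\tfrac12[\omega,\omega]\theta_0\theta_1$ from \eqref{eq:ref_g_1} and the analogous expansion of $a$, into \eqref{eq:equivalence_descent_data1} and comparing coefficients of the independent Gra\ss mann monomials yields \eqref{eq:equivalence_descent_data2}. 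The linear term of the first relation gives $\beta\otimes\omega'=\omega\otimes\beta+\dd_K\beta$, and the coefficient of $\theta_0\theta_1+\theta_1\theta_2-\theta_0\theta_2$ in the second gives the stated expression for $\psi'$, so that together they constitute \eqref{eq:relation_moduli}.

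The main obstacle I anticipate is the consistency check that the cubic component of the second relation in \eqref{eq:equivalence_descent_data2}, i.e.\ the coefficient of $\theta_0\theta_1\theta_2$, is automatically satisfied once \eqref{eq:relation_moduli} holds. This requires controlling how the linearization $\lambda^{0,3}(\omega,\omega,\omega)$ behaves under $\dd_K$ and under the shift $\omega\mapsto\omega'$ induced by $\beta$: concretely one must verify that the $\lambda^{0,3}$-terms generated by expanding the products $\beta\otimes v'$ and $v\otimes\beta$ reorganize into the $\dd_K$-image of the quadratic $\psi'$-relation. Establishing this compatibility—rather than the coefficient matching itself—is the genuine content, and it hinges on $\dd_K$ being a derivation compatible with the Chevalley-Eilenberg differential fixed in the preceding theorem.
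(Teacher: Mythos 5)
Your proposal follows essentially the same route as the paper: specializing the coboundary relations of Theorem \ref{thm:weak_coboundary} to the cover $N\times\FR^{0|1}\twoheadrightarrow N$, using the collapse of horizontal composition to group multiplication near $\unit_{\CS_\lambda}$ to kill the $\lambda^{1,2}$- and $\lambda^{2,1}$-terms, expanding $\beta$ and $\alpha$ in the odd coordinates, and reading off \eqref{eq:relation_moduli} from the coefficients, with the cubic component of \eqref{eq:equivalence_descent_data2} checked to be automatically satisfied. This matches the paper's derivation, which likewise treats that last compatibility as a consequence of the quadratic relations rather than an independent condition.
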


\section{Gauge theory with the string 2-group} \label{Section:gauge_theory_with_string_grp}

\subsection{Local description with infinitesimal gauge symmetries}

The local description of higher gauge theory with the string 2-group is readily given without our above considerations and below we briefly recall how, cf.\ e.g.~\cite{Jurco:2014mva}. The string Lie 2-algebra of a connected compact simple Lie group is known to be $\au(1)\rightarrow \frg$ with non-trivial higher products $\mu_2(x_1,x_2)=[x_1,x_2]$ and $\mu_3(x_1,x_2,x_3)=k(x_1,[x_2,x_3])$, $x_i\in \frg$ and $k\in \FR$. This $L_\infty$-algebra can be tensored with the graded differential algebra of differential forms on a contractible patch $U$ of a manifold, $\Omega^\bullet(U)$. The result is another $L_\infty$-algebra, $\tilde \sL$ with higher products $\tilde \mu_i$. The latter are the tensor products of the higher products $\mu_i$ on the string Lie 2-algebra with the differential on $\Omega^\bullet(U)$.

Recall that in any $L_\infty$-algebra, we can define homotopy Maurer--Cartan elements. In $\tilde \sL$, these are elements $\phi$, for which the {\em homotopy Maurer--Cartan equation}
\begin{equation}\label{eq:MCeqs}
 \sum_{i=1}^\infty \frac{(-1)^{i(i+1)/2}}{i!}\tilde \mu_i(\phi,\dots,\phi) = 0
\end{equation}
is satisfied. This equation exhibits a gauge symmetry, parameterized at infinitesimal level by an element $\gamma\in \tilde L$ of degree 0, which maps Maurer--Cartan elements to Maurer--Cartan elements:
\begin{equation}\label{eq:MCgaugetrafos}
 \phi\rightarrow \phi+\delta \phi\ewith \delta \phi = \sum_i \frac{(-1)^{i(i-1)/2}}{(i-1)!}\tilde \mu_i(\gamma,\phi,\dots,\phi)~.
\end{equation}
More explicitly, we have the following proposition.
\begin{proposition}
 The homotopy Maurer--Cartan elements of $\tilde \sL$ are given by pairs $A\in \Omega^1(U)\otimes \frg$ and $B\in \Omega^2(U)\otimes \au(1)$ satisfying the equations
  \begin{equation}\label{eq:field_equations}
  \begin{aligned}
  \CF&\ :=\ \dd A+\tfrac{1}{2}\mu_2(A,A)\ =\ 0~,\\
  H&\ :=\ \dd B-\tfrac{1}{3!}\mu_3(A,A,A)\ =\ 0~.
  \end{aligned}
  \end{equation}
  Infinitesimal gauge transformations are parameterized by pairs $x\in \Omega^0(U)\otimes \frg$ and $\zeta\in \Omega^1(U)\otimes \au(1)$ and act according to
  \begin{equation}\label{eq:gauge_trafos}
    \delta A=\dd x+\mu_2(A,x)\eand\delta B=-\dd \zeta+\tfrac{1}{2}\mu_3(x,A,A)~.
  \end{equation}
\end{proposition}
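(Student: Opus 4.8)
The plan is to realize $\tilde\sL=\Omega^\bullet(U)\otimes\sL$ as a tensor-product $L_\infty$-algebra (where $\sL$ is the string Lie $2$-algebra), substitute the most general degree-$1$ element into the homotopy Maurer-Cartan equation \eqref{eq:MCeqs} and the most general degree-$0$ element into the infinitesimal symmetry \eqref{eq:MCgaugetrafos}, and then read off the component equations by total degree. I would first grade $\sL$ so that $\mu_i$ has degree $2-i$; then $\frg$ sits in degree $0$ and $\au(1)$ in degree $-1$, while an element of $\Omega^p(U)$ tensored with the degree-$q$ part of $\sL$ carries total degree $p+q$ and $\tilde\mu_1=\dd$ raises it by one. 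The degree-$1$ part of $\tilde\sL$ is then exactly $(\Omega^1(U)\otimes\frg)\oplus(\Omega^2(U)\otimes\au(1))$, so every homotopy Maurer-Cartan element must be of the form $\phi=A+B$ with $A\in\Omega^1(U)\otimes\frg$ and $B\in\Omega^2(U)\otimes\au(1)$; this already pins down the field content. Likewise the degree-$0$ part is $(\Omega^0(U)\otimes\frg)\oplus(\Omega^1(U)\otimes\au(1))$, so the gauge parameter is $\gamma=x+\zeta$ with $x\in\Omega^0(U)\otimes\frg$ and $\zeta\in\Omega^1(U)\otimes\au(1)$.

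\textbf{Field equations.} Since the string Lie $2$-algebra has vanishing $\mu_1$, the unary bracket on $\tilde\sL$ reduces to the de Rham differential $\tilde\mu_1=\dd$, the brackets $\tilde\mu_2,\tilde\mu_3$ are the wedge products combined with $\mu_2,\mu_3$, and $\tilde\mu_i=0$ for $i\geq4$. The crucial simplification is that $\au(1)$ is central, so $\mu_2$ vanishes whenever an argument lies in $\au(1)$, and $\mu_3$ is supported on $\frg^{\otimes3}$. Substituting $\phi=A+B$ into \eqref{eq:MCeqs} therefore kills all mixed terms, leaving $\tilde\mu_2(\phi,\phi)=\tilde\mu_2(A,A)\in\Omega^2(U)\otimes\frg$ and $\tilde\mu_3(\phi,\phi,\phi)=\tilde\mu_3(A,A,A)\in\Omega^3(U)\otimes\au(1)$. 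Evaluating the coefficients $(-1)^{i(i+1)/2}/i!$ for $i=1,2,3$ and projecting onto the two algebra-types then yields the $\frg$-valued equation $\CF=\dd A+\tfrac12\mu_2(A,A)=0$ and the $\au(1)$-valued equation $H=\dd B-\tfrac{1}{3!}\mu_3(A,A,A)=0$, as in \eqref{eq:field_equations}.

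\textbf{Gauge transformations.} The symmetries follow the same pattern: inserting $\gamma=x+\zeta$ and $\phi=A+B$ into \eqref{eq:MCgaugetrafos}, centrality of $\au(1)$ again removes every mixed contribution, so that only $\tilde\mu_2(\gamma,\phi)=\tilde\mu_2(x,A)$ and $\tilde\mu_3(\gamma,\phi,\phi)=\tilde\mu_3(x,A,A)$ survive alongside $\tilde\mu_1(\gamma)=\dd\gamma$. Reading off the coefficients $(-1)^{i(i-1)/2}/(i-1)!$ and projecting onto $\Omega^1(U)\otimes\frg$ and $\Omega^2(U)\otimes\au(1)$ produces $\delta A$ and $\delta B$ of \eqref{eq:gauge_trafos}; here the identity $\tilde\mu_2(x,A)=-\tilde\mu_2(A,x)$, which is just graded antisymmetry of the degree-$0$ bracket applied to a $0$-form and a $1$-form, converts the bracket term in $\delta A$ into $\mu_2(A,x)$.

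\textbf{Main obstacle.} The only genuinely delicate part is the sign bookkeeping. I would need to (i) fix the Koszul/décalage conventions for the tensor-product $L_\infty$-structure on $\Omega^\bullet(U)\otimes\sL$ once and for all, (ii) use graded antisymmetry of $\tilde\mu_2$ together with the reordering of form degrees to rewrite the bracket terms in the stated orientation, and (iii) absorb the residual overall sign into the chosen normalization of $\zeta$ so that the $\au(1)$-component matches $\delta B=-\dd\zeta+\tfrac12\mu_3(x,A,A)$. Once a consistent convention is fixed, all these signs are forced, and checking that $\tilde\mu_i=0$ for $i\geq4$ and that every mixed bracket vanishes is routine.
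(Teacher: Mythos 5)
Your proposal is correct and follows essentially the same route as the paper, whose entire proof is the one-line substitution of $\phi$ and $\gamma$ into \eqref{eq:MCeqs} and \eqref{eq:MCgaugetrafos}; your version merely spells out the grading, the centrality of $\au(1)$, and the vanishing of the higher and mixed brackets that make the substitution collapse to the stated equations. The only discrepancy is that the paper takes $\phi=A-B$ rather than $A+B$, which is precisely one of the sign/normalization conventions you explicitly defer to the bookkeeping step.
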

\begin{proof}
 Substituting $\phi=A-B$ and $\gamma=x+\zeta$ into~\eqref{eq:MCeqs} and~\eqref{eq:MCgaugetrafos} yields the proposition.
\end{proof}

\subsection{Non-abelian Deligne cohomology with values in the string 2-group}

The full global description of non-abelian gauge theory is governed by non-abelian Deligne cohomology. Let us first review the case of ordinary principal bundles with connection before presenting the details for principal $\CS_\lambda^{\rm w}$-bundles.

Given a Lie group $\sG$ with Lie algebra $\frg=\sLie(\sG)$, a principal $\sG$-bundle with connection over a manifold $M$ with respect to a cover $U=\sqcup_i U_i\twoheadrightarrow M$ is described by a non-abelian Deligne 1-cocycle. Such a 1-cocycle consists of $\sG$-valued transition functions $(g_{ij})$ on the fibered product $U\times_M U=\sqcup_{i,j} U_i\cap U_j$ and $\frg$-valued one-forms $(A_i)$ on $U$ satisfying 
\begin{equation}
 g_{ij}g_{jk}=g_{ik}\eand A_j=g_{ij}^{-1} A_ig_{ij}+g_{ij}^{-1}\dd g_{ij}~.
\end{equation}
Note that the cocycle conditions glue together the local data contained in $(A_i)$ to a global connection. Two such Deligne 1-cocycles $(g_{ij},A_i)$ and $(g'_{ij},A'_i)$ are considered equivalent, if they are related by a Deligne 1-coboundary consisting of $\sG$-valued functions on $U$, $(g_i)$, as follows:
\begin{equation}
 g'_{ij}=g_i^{-1}g_{ij}g_j\eand A'_i=g_i^{-1}A_i g_i+g_i^{-1}\dd g_i~.
\end{equation}
The coboundary relations describe finite gauge transformations of the 1-cocycles.

From our discussion in section~\ref{sec:differentiation}, we can now derive the explicit form of Deligne 2-cocycles and 2-coboundaries for principal $\CS_\lambda^{\rm w}$-bundles. In the following, $M$ denotes the base manifold of the principal $\CS_\lambda^{\rm w}$-bundle and $U=\sqcup_i U_i\twoheadrightarrow M$ is a cover of $M$.
\begin{definition}
 Let $\sG$ be a connected compact simple Lie group with Lie algebra $\frg:=\sLie(\sG)$. Let furthermore $\sLie(\CS_\lambda^{\rm w})=(\frg\times \au(1)\rightrightarrows \frg)$ be the Lie 2-algebra of the weak 2-group model $\CS_\lambda^{\rm w}$ over $\sG$. A \uline{Deligne 2-cocycle with values in $\CS_\lambda^{\rm w}$}, which describes a principal $\CS_\lambda^{\rm w}$-bundle with connective structure, is then given by a $\CS_\lambda^{\rm w}$-valued \v Cech 2-cocycle $(v_{ij},a_{ijk})$ together with local forms $A=(A_i)\in \Omega^1(U,\frg)$, $B=(B_i)\in \Omega^2(U,\au(1))$ and forms on overlaps $\zeta=(\zeta_{ij})\in \Omega^1(U\times_M U,\au(1))$, satisfying the following cocycle relations:
 \begin{equation}
  \begin{aligned}
  &\pi(v_{ik})=\pi(v_{ij}\otimes v_{jk})~,~~~v_{ii}=\unit_{\CS_\lambda}~,\\
  &a_{ikl}+a_{ijk}+\lambda^{1,2}(\phi_2(v_{ik}, v_{kl}),\phi_2(v_{ij}\otimes v_{jk},v_{kl}))\\
  &\hspace{1.4cm}=a_{ijl}+a_{jkl}+\lambda^{1,2}(\phi_2(v_{ij},v_{jl}),\phi_2(v_{ij},v_{jk}\otimes v_{kl}))+\lambda^{0,3}(\phi_3(v_{ij},v_{jk},v_{kl}))~,\\
  &\pi(v_{ij})A_i=A_j\pi(v_{ij})+\dd \pi(v_{ij})~,~~~\dd A_i+\tfrac12[A_i,A_i]=0~,\\
  &B_i=B_j-\dd \zeta_{ij}-\lambda^{0,3}(v_{ij},A_i,A_i)+\lambda^{0,3}(A_j,v_{ij},A_i)-\lambda^{0,3}(A_j,A_j,v_{ij})~,\\
  &\zeta_{kj}+\lambda^{0,3}(A_j,v_{ji},v_{ij})=\zeta_{ij}+\zeta_{ki}+\dd a_{jik}+\lambda^{0,3}(v_{ji},A_k,v_{ik})-\lambda^{0,3}(v_{ji},v_{ik},A_k)~.
  \end{aligned}
 \end{equation}
\end{definition}
\noindent Here, the transformations of $A$ and $B$ on double overlaps of patches are the previously derived gauge transformations. The transformations of the $\zeta_{ij}$ are compatibility conditions for the previous transformations on triple overlaps.
\begin{definition}
 A \uline{Deligne 2-coboundary} between two Deligne 2-cocycles $(v,a,A,B,\zeta)$ and $(v',a',A',B',\zeta')$ is a set of functions $\beta=(\beta_i)\in\CC^\infty(U,V_1)$, local 1-forms $\zeta=(\zeta_{i})\in \Omega^1(U,\au(1))$ and functions on overlaps $\alpha=(\alpha_{ij})\in \CC^\infty(U\times_M U,\sU(1))$ such that
 \begin{equation}
  \begin{aligned}
    &\beta_i\otimes v'_{ij}=v_{ij}\otimes \beta_j~,~~\beta_{ii}=\id_{\phi_1(\pi(\beta_i))}~,\\
    &\alpha_{ik}+a_{ijk}+\lambda^{1,2}(\phi_2(v_{ik}, \beta_k),\phi_2(v_{ij}\otimes v_{jk},\beta_k))\\ 
    &\hspace{1.4cm}=\alpha_{ij}+a'_{ijk}+\alpha_{jk}+\lambda^{1,2}(\phi_2(\beta_i,v'_{ik}),\phi_2(\beta_i,v'_{ij}\otimes v'_{jk}))\\
    &\hspace{2.2cm}+\lambda^{0,3}(\beta_i,v'_{ij},v'_{jk})-\lambda^{0,3}(v_{ij},\beta_j,v'_{jk})+\lambda^{0,3}(v_{ij},v_{jk},\beta_k)~,\\
    &\pi(\beta_{i})A'_i=A_i\pi(\beta_{i})+\dd \pi(\beta_{i})~,\\
    &B'_i=B_i-\dd \zeta_{i}-\lambda^{0,3}(\beta_{i},A'_i,A'_i)+\lambda^{0,3}(A_i,\beta_{i},A'_i)-\lambda^{0,3}(A_i,A_i,\beta_{i})~,\\
    &\zeta_{ji}-\lambda^{0,3}(A_i,v_{ij},\beta_j)+\lambda^{0,3}(A_i,\beta_i,v'_{ij})+\lambda^{0,3}(\beta_i,v'_{ij},A'_j)+\zeta_j\\
    &\hspace{1.4cm}=\zeta'_{ji}-\lambda^{0,3}(v_{ij},A_j,\beta_j)+\lambda^{0,3}(v'_{ij},\beta_j,A'_j)+\dd \alpha_{ij}+\lambda^{0,3}(\beta_i,A'_i,v'_{ij})+\zeta_i~.\\
  \end{aligned}
 \end{equation}
 Two Deligne 2-cocycles related by a Deligne 2-coboundary are called \uline{equivalent}.
\end{definition}
\noindent As a consistency check, consider the Deligne 2-cocycle relations and remove one index, say $k$. Then relabel all affected cochains as their corresponding parts of a 2-coboundary, e.g.\ $v_{ik}\rightarrow v_{i}\rightarrow \beta_i$. The resulting relations have to agree with the relations for a 2-coboundary between a Deligne 2-cocycle and the trivial Deligne 2-cocycle, which they do.

The above two definitions provide all necessary details for a global description of the kinematical part of higher gauge theory with the string 2-group $\CS_\lambda^{\rm w}$ as structure group. 

\subsection{Application: A self-dual string solution}

As stated in the introduction, one of the most pressing issues in higher gauge theory is the explicit construction of physically well-motivated examples of higher principal bundles with connective structure. Obvious candidates for dynamical constraints on such connections are the self-duality equation for a 3-form curvature on $\FR^{1,5}$, as well as the self-dual string equation in $\FR^4$. The former should be closely related to a non-abelian formulation of the long-sought (2,0) superconformal field theory in six-dimensions; the latter will be considered in some detail below in a simple case which is readily discussed.

Recall that $k$ abelian self-dual strings~\cite{Howe:1997ue} are described by a 2-form $B$ together with a function $\Phi$ satisfying the equation $H=\dd B=*\dd \Phi$ on $M=\FR^4\backslash\{x_1,\dots,x_k\}$. The points $x_1,\dots,x_k$ are identified with the locations of the $k$ self-dual strings, and the Higgs field $\Phi$ as well as the 2-form potential $B$ diverge at these points. 

One can readily translate the self-dual string equation to higher gauge theory, cf.\ e.g.~\cite{Saemann:2012uq} or~\cite{Palmer:2013haa}. Note that the spin group of the underlying spacetime was intrinsically linked to the gauge group in the 't Hooft--Polyakov monopole~\cite{Prasad:1975kr} and the BPST instanton~\cite{Belavin:1975:85,Atiyah:1979iu}. Therefore, a good choice of an interesting gauge 2-group for self-dual strings is the string group $\CS_\lambda^{\rm w}$ of $\sSpin(4)\cong \sSU(2)\times \sSU(2)$. 

The self-dual string involving the connection of a principal $\CS_\lambda^{\rm w}$-bundle is described by a $\aspin(4)$-valued 1-form $A$ on $M$ together with a $\au(1)$-valued 2-form $B$ and a $\au(1)$-valued function $\Phi$. These have to satisfy the equations
\begin{equation}\label{eq:non_abelian_sds}
 \CF=\dd A+\tfrac12[A,A]=0\eand H=\dd B-\tfrac{1}{3!}\mu_3(A,A,A)=*\dd \Phi~,
\end{equation}
where the first equation is the {\em fake-curvature condition} and the second equation is the non-abelian version of the self-dual string equation. The fake-curvature condition implies here that $A$ is pure gauge, and we write $A=\pi(v)^{-1}\dd \pi(v)$ for some $v\in \CC^\infty(M,V_1)$. Since $\dd \mu_3(A,A,A)=0$, we then have $\dd*\dd \Phi=0$ and $\Phi$ is a harmonic function on $M$. To choose a simple example, we put $M=\FR^4\backslash\{0\}$ and $\Phi=\frac{1}{r^2}$, where $r$ is the distance from the origin in $\FR^4$. There are now two extreme solutions which satisfy the second equation in~\eqref{eq:non_abelian_sds}. One is $v=\unit_{\CS_\lambda^{\rm w}}$ and 
\begin{equation}
 B=\tfrac{3}{8}\dd x^\mu\wedge \dd x^\nu\eps_{\mu\nu\kappa\lambda}\frac{x^\kappa\left(R^2 \arctan\left(\frac{r^\lambda}{x^\lambda}\right)-r^\lambda x^\lambda\right)}{R^2 (r^\lambda)^3}~,~~~R=|x|~,~~~r^\lambda=\sqrt{|x|^2-(x^\lambda)^2}~,
\end{equation}
the other being $B=0$ and 
\begin{equation}
 v=\phi_1\left(\frac{1}{|x|}\left(\begin{array}{cc}
	    x^1+\di x^2 & x^3+\di x^4\\
	    -x^3+\di x^4 & x^1-\di x^2          
         \end{array}\right)~,~\unit~\right)~.
\end{equation}
The first one is a reformulation of the solution given in~\cite{Nepomechie:1984wu}, the second one is an adaptation of the standard gerbe over $S^3\sim \sSU(2)$. Note that the content of $v$ in the second solution can be partially gauged into the other $\sSU(2)$ contained in $\sSpin(4)$. 

We can now show that the above two solutions are indeed gauge equivalent, as one would expect. First, recall that the self-dual string equation on $\FR^4$ can be augmented to a self-duality equation on $\FR^{1,5}$ by assuming that all fields and forms are constant in the two additional direction and identifying $\Phi$ with the 2-form potential in these directions. The gauge transformations of $\Phi$ should therefore be identified with those of this component of $B$, which vanishes if all gauge parameters are constant and the 1-form potentials vanish in these directions. It follows that $\Phi$ is gauge invariant and the same holds trivially for $H$. Since both solutions have the same $\Phi$ and thus the same $H$, this implies that they are gauge equivalent.

Altogether, we found that a solution to the non-abelian self-dual string equations~\eqref{eq:non_abelian_sds} is gauge equivalent to the usual abelian solution. The reason for this was the fake curvature relation $\CF=0$. This equation guarantees that parallel transport along surfaces is reparameterization invariant, see e.g.~\cite{Schreiber:2008aa}. Even though this relation appears naturally in a twistor construction of non-abelian self-dual strings~\cite{Saemann:2012uq}, this equation is not physically relevant for self-dual strings, as the string is perpendicular to the space $M$ and there is no parallel transport within $M$. For more details on this point, see~\cite{Palmer:2013haa}. A study of non-abelian self-dual string solutions which do not satisfy the fake curvature relation is beyond our scope here, and we postpone it to future work.
 
\section*{Acknowledgements}

CS would like to thank Chenchang Zhu for a helpful discussion. This work was partially completed during the workshop ``Higher Structures in String Theory and Quantum Field Theory'' at the Erwin Schr\"odinger International Institute for Mathematical Physics and CS would like to thank the organizers and the institute for hospitality. The work of GAD is supported by a MACS Global Platform Studentship of Heriot--Watt University. The work of CS was partially supported by the Consolidated Grant ST/L000334/1 from the UK Science and Technology Facilities Council.

\appendices

\subsection{Group objects in categories}\label{app:A}

Recall that a category with finite products has a terminal object $*$ and products between any two objects.
\begin{definition}
A \uline{group object} in a category $\CCC$ with finite products is an object $G\in \CCC$ together with morphisms $\sfm\in \CCC(G\times G,G)$, $\sfe\in \CCC(*,G)$, $\inv\in \CCC(G,G)$ such that the following diagrams are commutative:
\begin{equation}
\begin{aligned}
  &\xymatrixcolsep{4pc}
    \myxymatrix{
    G\times G\times G\ar@{->}[r]^(.55){\id_G\times \sfm} \ar@{->}[d]_{\sfm\times \id_G} & G\times G \ar@{->}[d]^{\sfm}\\
    G\times G \ar@{->}[r]^{\sfm} & G
    }
  \hspace{0.5cm}  
  \xymatrixcolsep{4pc}
    \myxymatrix{
    {*}\times G=G\times {*}\ar@{->}[r]^{\id_G\times \sfe} \ar@{->}[d]_{\sfe\times \id_G} \ar@{->}[dr]^{=}& G\times G \ar@{->}[d]^{\sfm}\\
    G\times G \ar@{->}[r]^{\sfm} & G
    }\\
  &\hspace{2cm}\xymatrixcolsep{5pc}
    \myxymatrix{
    G\ar@{->}[r]^{\Delta} \ar@{->}[ddr] & G\times G \ar@{->}[r]^{\inv\times \id_G} \ar@{->}[d]^{\id_G\times \inv} & G\times G\ar@{->}[dd]^{\sfm}\\
    & G\times G \ar@{->}[dr]^{\sfm} &\\
    & {*} \ar@{->}[r]^{\sfe} & G
    }
\end{aligned}
\end{equation}
\end{definition}
\noindent where $\Delta(g)=(g,g)$. A group is then a group object in $\CatSet$ and a Lie group is a group object in $\CatMan$.

\begin{definition}
 Given a group object $G$ in a category with finite products $\CCC$, a $G$-object in $\CCC$ is an object $X\in \CCC$ together with a morphism $\alpha:G\times X\rightarrow X$ such that the following diagrams commute:
\begin{equation}
  \xymatrixcolsep{4pc}
    \myxymatrix{
    G\times G\times X\ar@{->}[r]^(.55){\sfm\times \id_X} \ar@{->}[d]_{\id_G\times \alpha} & G\times X \ar@{->}[d]^{\alpha}\\
    G\times X \ar@{->}[r]^{\alpha} & X
    }
  \hspace{0.5cm}  
  \xymatrixcolsep{4pc}
    \myxymatrix{
    {*}\times X\ar@{->}[r]^{\unit\times \id_X} \ar@{->}[d]_{p_2} & G\times X \ar@{->}[d]^{\alpha}\\
    X\ar@{->}[r]^{\id_X} & X
    }
\end{equation}
\end{definition}

We now need to lift the above definitions to 2-categories. The definition of a 2-group object was introduced in~\cite{Baez:0307200}. Here, we follow closely the presentation in~\cite{Schommer-Pries:0911.2483}.
\begin{definition}
Given a weak 2-category $\CCC$ with finite products, a 2-group object in $\CCC$ is given by an object $C_0$ in $\CCC$ together with 1-morphisms $\sfm:C_0\times C_0\rightarrow C_0$ and $\sfe:*\rightarrow C_0$ as well as invertible 2-morphisms
\begin{equation}
\begin{aligned}
 &\hspace{1.8cm}\sfa:\sfm\circ(\sfm\times 1)\rightarrow \sfm\circ(1\times \sfm)~,\\
 &\sfl:\sfm\circ(\sfe\times 1)\rightarrow \cong\eand \sfr:\sfm\circ(1\times \sfe)\rightarrow \cong~,
\end{aligned}
\end{equation}
where $\cong$ denotes the isomorphisms $*\times C_0\cong C_0\cong C_0\times *$. The morphism $(\pr_1,\sfm):C_0\times C_0\rightarrow C_0\times C_0$ has to be an equivalence and $\sfa$, $\sfl$ and $\sfr$ have to satisfy the (internal) pentagon and triangle identities, cf.~\cite{Schommer-Pries:0911.2483}. 
\end{definition}
\noindent The pentagon and triangle identities are obtain by considering the two obvious morphisms \vspace*{0.2cm}
\begin{equation}
 \myxymatrix{ (\sfm\circ(\sfm\times 1))\circ(\sfm\times 1\times 1)
  \ar@/_4ex/[r]
  \ar@/^4ex/[r]
& \sfm\circ((1\times \sfm)\circ(1\times1\times \sfm))
}\vspace*{0.3cm} 
\end{equation}
and for the triangle identity, we look at\vspace*{0.2cm}
\begin{equation}
\myxymatrix{\sfm\circ((\sfm\circ(1\times \sfe))\times 1)
  \ar@/_4ex/[r]="g1"
  \ar@/^4ex/[r]="g3"
& \sfm\circ(1\times 1)
}\vspace*{0.3cm}
\end{equation}
This yields essentially the diagrams~\eqref{diag:int_cat_pentagon_triangle} with $B_c$ replaced by $\sfm$.

\begin{definition}
 A homomorphism $\phi$ between 2-groups $G$ and $G'$ in a weak 2-category $\CCC$ consists of a 1-morphism $\phi_1:G_1\rightarrow G_2$ and a 2-isomorphism $\phi_2:\sfm'\circ(\phi_1\times \phi_1)\rightarrow \phi_1\circ \sfm$ and $\phi_0:\sfid'\rightarrow \phi_1\circ \sfid$. These have to satisfy three coherence axioms, cf.~\cite{Schommer-Pries:0911.2483} and diagrams~\eqref{diag:int_functor_coherence}.
\end{definition}

\begin{definition}
 Given a 2-group object $\CG$ in a weak 2-category $\CCC$ with finite products, a $\CG$-object in $\CCC$ is an object $\CX$ in $\CCC$ together with a 1-morphism $\alpha:\CG\times \CX\rightarrow \CX$ as well as invertible 2-morphisms $\sfa_\alpha:\alpha\circ(\sfm\times \id_\CX)\rightarrow \alpha\circ(\id_\CG\times \alpha)$ and $\sfl_\alpha:\alpha\circ(\sfe\times \id_\CX)\rightarrow \id_\CX$, such that the following diagrams 2-commute:
\begin{equation}
  \xymatrixcolsep{4pc}
    \myxymatrix{
    \CG\times \CG\times \CX\ar@{->}[r]^(.55){\sfm\times \id_\CX} \ar@{->}[d]_{\id_\CG\times \alpha} & \CG\times \CX \ar@{->}[d]^{\alpha} \ar@{=>}[dl]_{\sfa_\alpha}\\
    \CG\times \CX \ar@{->}[r]_{\alpha} & \CX
    }
  \hspace{0.5cm}  
  \xymatrixcolsep{4pc}
    \myxymatrix{
    {*}\times \CX\ar@{->}[r]^{\sfe\times \id_\CX} \ar@{->}[d]_{p_2} & \CG\times \CX \ar@{->}[d]^{\alpha} \ar@{=>}[dl]_{\sfl_\alpha}\\
    \CX\ar@{->}[r]_{\id_\CX} & \CX
    }
\end{equation}
Moreover, certain coherence axioms for $\sfa_\alpha$ and $\sfl_\alpha$ are satisfied, cf.~\cite{Schommer-Pries:0911.2483} and diagrams~\eqref{diag:int_natural_trafo_coherence}. 
\end{definition}

\subsection{Further useful definitions and proposition}\label{app:B}

To compose bibundles, we needed the notion of coequalizer. 
\begin{definition}
 Given two morphisms $f,g:X_1\rightrightarrows X_0$ in a category $\CCC$, a \uline{coequalizer} is an object $Y\in \CCC$ together with a morphism $y:X_0\rightarrow Y$ such that $y\circ f=y\circ g$. Moreover, we demand that the pair $(y,Y)$ is universal in the sense that for any other such pair $(y',Y')$, there is a unique morphism $u:Y\rightarrow Y'$.
\end{definition}

In the definition of a $\CG$-stack over $X$, we needed the concept of a slice 2-category. First, the following definition.
\begin{definition}
 Given a category $\CCC$ together with an object $a\in\CCC$, the \uline{slice category} $\CCC/a$ has as its objects the class $\CCC(-,a)$. Morphisms between two objects $f:b\rightarrow a$ and $g:c\rightarrow a$ of the slice category are commutative triangles, i.e.\ elements $h$ of $\CCC(b,c)$ with $g\circ h=f$.
\end{definition}
This is generalized to 2-categories as follows.
\begin{definition}
 Given a (weak) 2-category $\CCC$ and an object $a\in \CCC$, the \uline{slice 2-category} $\CCC/a$ consists of the following data. The objects $\CCC/a$ are the 1-morphisms of $\CCC$ with codomain $a$. The 1-morphisms of $\CCC/a$ between objects $f:b\rightarrow a$ and $g:c\rightarrow a$ are pairs $(h,\chi)$, where $h:b\rightarrow c$ is a 1-morphism of $\CCC$ and $\chi:f\cong gh$ is a 2-isomorphism in $\CCC$:
 \begin{equation}
  \myxymatrix{
    b \ar@{->}[rr]^{h} \ar@{->}[dr]_{f} & & c \ar@{->}[dl]^{g}\\
    \ar@{}[urr]^(.4){}="a"^(.8){}="b" \ar@{=>} "a";"b"^{\chi} & a
    }
 \end{equation}
 Finally, consider two 1-morphisms $(h_1,\chi_1)$ and $(h_2,\chi_2)$ between $f:b\rightarrow a$ and $g:c\rightarrow a$ in $\CCC/a$. The 2-morphisms from $(h_1,\chi_1)$ to $(h_2,\chi_2)$ are 2-morphisms $\xi:h_1\Rightarrow h_2$ of $\CCC$ such that $\chi_2(f)=(g\xi)\chi_1(f)$.
\end{definition}

\bibliography{bigone}

\bibliographystyle{latexeu}

\end{document}